\newtheorem{definition}{\textbf{Definition}}
\newcommand{\E}{\mathbb{E}}
\newcommand{\diag}{\mathrm{diag}}
\newcommand{\Prob}{\mathbb{P}}
\newcommand{\Var}{\mathrm{Var}}
\DeclareMathOperator{\Tr}{Tr}
\newcommand{\FR}{{\bold{R}}}
\newcommand{\FT}{{\bold{T}}}
\newcommand{\FS}{{\bold{S}}}
\newcommand{\RNum}[1]{\uppercase\expandafter{\romannumeral #1\relax}}
\newtheorem{remark}{Remark}
\newtheorem{theorem}{Theorem}
\newtheorem{lemma}{Lemma}
\newtheorem{proposition}{Proposition}
\begin{document}
%
\title{Outage Probability and Finite-SNR DMT Analysis for IRS-aided MIMO Systems: How Large IRSs Need to Be?}
%
%
%

\author{Xin Zhang,~\IEEEmembership{Graduate Student Member,~IEEE}, Xianghao Yu,~\IEEEmembership{Member,~IEEE}, and S.H. Song,~\IEEEmembership{Senior Member,~IEEE}
\thanks{
The authors are with the Department of Electronic and Computer Engineering, The Hong Kong University of Science
and Technology, Hong Kong (e-mail: xzhangfe@connect.ust.hk, \{eexyu, eeshsong\}@ust.hk).
}
}
\maketitle

\begin{abstract}
Intelligent reflecting surfaces (IRSs) are promising enablers for high-capacity wireless communication systems by constructing favorable channels between the transmitter and receiver. However, general, accurate, and tractable outage probability analysis for IRS-aided multiple-input-multiple-output (MIMO) systems is not available in the literature. In this paper, we first characterize the distribution of the mutual information (MI) for IRS-aided MIMO systems by capitalizing on large random matrix theory (RMT). Based on this result, a closed-form approximation for the outage probability is derived and a gradient-based algorithm is proposed to minimize the outage probability with statistical channel state information (CSI). We also investigate the diversity-multiplexing tradeoff (DMT) with finite signal-to-noise ratio (SNR). Based on these theoretical results, we further study the impact of the IRS size on system performance. In the high SNR regime, we provide closed-form expressions for the ergodic mutual information (EMI) and outage probability as a function of the IRS size, which analytically reveal that the benefit of increasing the IRS size saturates quickly. Simulation results validate the accuracy of the theoretical analysis and confirm the increasing cost to improve system performance by deploying larger IRSs. For example, for an IRS-aided MIMO system with 20 antennas at both the transmitter and receiver, we need to double the size of the IRS to increase the throughput from 90\% to 95\% of its maximum value. 
\end{abstract}

\begin{IEEEkeywords}
Intelligent reflecting surface (IRS), multiple-input-multiple-output (MIMO), outage probability, random matrix theory (RMT).
\end{IEEEkeywords}

%
\IEEEpeerreviewmaketitle

\section{Introduction}
%
%
%
%
Intelligent reflecting surfaces (IRSs) have attracted extensive interests from both academia and industry and are considered as one of the promising solutions for future high-capacity communication systems~\cite{wu2019towards}. By designing the controllable phase shifts, IRSs can customize favorable channels between the transceivers~\cite{yu2021smart}, thus increasing the throughput and the reliability of wireless links~\cite{matthiesen2020intelligent}. In addition, IRSs are energy-efficient due to their passive nature. Inspired by these advantages, IRSs have been applied to various wireless systems such as massive multiple-input-multiple-output (MIMO)~\cite{wang2021massive}, simultaneous wireless information and power transfer (SWIPT)~\cite{xu2021optimal}, and non-orthogonal multiple access (NOMA)~\cite{cheng2021downlink} systems. 


There have been works investigating the design and performance analysis of IRS-aided systems~\cite{mei2021performance},~\cite{zhang2019analysis},~\cite{han2019large}. To this end, capacity (throughput) and outage probability (reliability) are two important performance measures. The ergodic mutual information (EMI) (average throughput) has been well studied for single-input-single-output (SISO)~\cite{van2020coverage} and multiple-input-single-output (MISO) systems~\cite{han2019large}. The EMI of IRS-aided MIMO systems over Rician channels was investigated by random matrix theory (RMT) with an accurate and closed-form approximation~\cite{zhang2021large}. 

The outage probability of IRS-aided SISO systems was evaluated in previous works. In~\cite{cheng2021downlink}, a closed-form expression for the outage probability of NOMA SISO networks was obtained under Nakagami fading using central limit theorem (CLT).
The outage probability of a SISO system with multiple IRSs over Rician channels was analyzed in~\cite{zhang2019analysis}. In~\cite{wang2020outage}, the outage probability of IRS-aided vehicular communication systems was evaluated using CLT. Atapattu \emph{et al.} derived a closed-form expression for the outage probability and provided the optimal phase shifts design~\cite{atapattu2020reconfigurable}. It was also shown that the decreasing rate of the outage probability is related to the size of IRSs.



The outage probability of IRS-aided MISO systems was also studied. In~\cite{lin2020reconfigurable}, considering reflection pattern modulation (RPM), a closed-form approximation for the asymptotic outage probability over Rician channels was obtained using Gamma approximation. In~\cite{guo2020outage}, with maximum-ratio transmission (MRT), the expression of the outage probability and its asymptotically-optimal form were given, while the phase shifts were optimized to minimize the outage probability. In~\cite{zhou2020framework} and~\cite{hong2020robust}, the robust design of IRS-aided MISO systems was investigated with imperfect channel state information (CSI). In~\cite{bereyhi2022channel}, the conventional CLT was used to obtain a closed-form expression of the outage probability.


The analysis in SISO or MISO systems utilized variable and vector based methods (conventional CLT), which are not applicable to MIMO systems. In fact, characterizing the cascaded channel of IRS-aided MIMO system involves the investigation of the spectral distribution of the product of two random matrices, which is a challenging RMT problem~\cite{zheng2016asymptotic}. As far as the authors know, the outage probability of IRS-aided \textit{MIMO systems} was only investigated in~\cite{shi2021outage}, where Mellin transform and finite-regime RMT were leveraged to derive the outage probability over Rayleigh fading channels with channel correlation at one side of the transceivers. 
In other words, a generic and tractable outage probability characterization for IRS-aided MIMO systems is not available in the literature.

In this paper, we first characterize the distribution of the MI for IRS-aided MIMO systems and utilize it to evaluate the outage probability. We then propose a gradient descent algorithm to minimize the outage probability by optimizing the phase shifts. The results about the outage probability are then utilized to investigate the diversity-multiplexing tradeoff (DMT). The SNR-asymptotic DMT was proposed in~\cite{zheng2003diversity} to characterize the trade-off between diversity gain (reliability) and multiplexing gain (spectral efficiency)~\cite{zheng2016asymptotic}, which however is not accurate in the finite SNR regime. In this paper, we will investigate the finite-SNR DMT~\cite{narasimhan2006finite}~\cite{loyka2010finite} of IRS-aided MIMO systems. Finally, the impact of the IRS size on the system performance is studied to answer the question: How large IRSs need to be? 
\subsection{Contributions}
The main contributions of this paper are listed as follows.
\begin{itemize}
  \item[1)] The distribution of the MI for IRS-aided MIMO systems is first derived. Based on the result, an approximation on the outage probability over general correlated channels is obtained with only statistical CSI. To the best of the authors' knowledge, this is the first analytical result regarding the outage probability in IRS-aided MIMO systems over general correlated channels. Numerical results validate the accuracy of the proposed method. 
    \item[2)] With the derived outage probability, a gradient algorithm is proposed to minimize the outage probability by optimizing the phase shifts at the IRS, assuming only statistical CSI. Numerical results show that the algorithm can efficiently decrease the outage probability. 
   \item[3)] A closed-form expression is obtained for the finite-SNR DMT of IRS-aided MIMO systems, which is not available in the literature. An interesting observation is that the finite-SNR DMT is highly related to the ratio between the mean and the standard deviation of the MI. The accuracy of the expression is validated by numerical results. 
    \item[4)] The impact of the size of the IRS on system performance is investigated. To this end, we first propose the concept called \textit{IRS efficiency} to measure the efficiency of increasing the IRS size in achieving the maximum throughput. The expression of the outage probability with respect to the IRS size over uncorrelated channels is explicitly given in the high SNR regime. Based on the theoretical analysis and simulation results, we have two key observations. First, when the size of the IRS is infinitely large, the performance of the two-hop IRS system is the same as that of the single-hop link. Second, the benefit of increasing the size of the IRS saturates quickly. For example, for an IRS-aided system with 20 antennas at the transceivers over independent channels, we need to double the size of the IRS to increase the throughput from 90\% to 95\% of its maximum value. 
\end{itemize}


\subsection{Organizations}
The rest of this paper is organized as follows. Section~\ref{sys_model} presents the system model for the IRS-aided MIMO system. Section~\ref{pre_res} introduces 
the main results including the characterization of the distribution for the MI. The analysis and optimization of the outage probability and the finite-SNR DMT are given in Section~\ref{out_dmt}. Section~\ref{sec_siz} discusses the effect of the IRS size on system performance. The theoretical results are validated in Section~\ref{simu} by numerical simulations. Finally, Section~\ref{sec_con} concludes the paper. 

\textit{Notations}:  Bold upper case letters and bold lower case letters represent the matrix and vector, respectively. $\mathrm{Re} \left\{ \cdot \right\}$ denotes the real part of a complex number. $\mathbb{P}(\cdot)$ is the probability measure. $\mathbb{C}^{N}$ and $\mathbb{C}^{M \times N}$ represent the space of $N$-dimensional vectors and the space of $M$-by-$N$ matrices, respectively. $\bold{A}^{H}$ represents the conjugate transpose of $\bold{A}$. $[\bold{A}]_{i,j}$ represents the $i,j$-th entry of $\bold{A}$. $\otimes$ denotes the element-wise product of matrices. $\Tr\bold{A} $ and $\|\bold{A} \|$ represent the trace and the spectral norm of $\bold{A}$. $\E$ represents the expectation operator. $\Phi(x)$ is the cumulative distribution function (CDF) of standard Gaussian distribution and $Q(\cdot)$ is the $Q$-function, where $Q(x)=1-\phi(x)$. $\mathcal{CN}$ and $\mathcal{N}$ represent the circularly complex Gaussian and real Gaussian distribution, respectively.
$ \xrightarrow[N \rightarrow \infty]{\mathcal{D}} $, $ \xrightarrow[N \rightarrow \infty]{\mathcal{P}} $, and $ \xrightarrow[N \rightarrow \infty]{{a.s.}} $ denote the convergence in distribution, the convergence in probability and the almost sure convergence, respectively. $O(\cdot)$, $o(\cdot)$, and $\Theta(\cdot)$ represent the Big-O, the Little-o, and the Big-Theta notations, respectively. Specifically, $f(n)\in O(g(n))$ if and only if there exists a positive constant $c$ and a nonnegative integer $n_{0}$ such that $f(n) \le cg(n)$ for all $n \ge n_{0}$. $f(n)\in o(g(n))$ if and only if there exists a nonnegative integer $n_{0}$ such that $f(n) \le cg(n)$ for all $n \ge n_{0}$ for all positive $c$. $f(n) \in \Theta(g(n))$ if and only if there exist positive $c_1$ and $c_2$ and nonnegative integer $n_{0}$ such that $c_1 g(n) \le f(n) \le c_2 g(n)$ for all $n\ge n_{0}$~\cite{cormen2009introduction}. 

\section{System Model}
\label{sys_model}
\subsection{System Model}
Consider a point-to-point downlink MIMO communication system, where an IRS is deployed to establish favorable communication links for the user equipment (UE) that would otherwise be blocked. There are $M$ antennas at the basestation (BS) and $N$ antennas at the UE, and the number of elements at the IRS is $L$. Given the line-of-sight (LoS) path is blocked, the received signal $\bold{y}\in \mathbb{C}^{N}$ is given by
\begin{equation}
\label{eq_model}
\bold{y}=\bold{H}_{1}\bold{\Psi}\bold{H}_2\bold{s}+\bold{n},
\end{equation}
where $\bold{s}\in \mathbb{C}^{M}$ denotes the transmitted signal with unit average transmit power, i.e., $\E|s_{i}|^2=1, i=1,2,...,M $, and $\bold{n}\in \mathbb{C}^{N} \sim \mathcal{CN}(0, \sigma^2\bold{I}_{N}) $ represents the additive white Gaussian noise (AWGN) with variance $\sigma^2$. $\bold{H}_{2} \in \mathbb{C}^{L \times M}$ and $\bold{H}_{1} \in \mathbb{C}^{N \times L}$ represent the channel matrices from the BS to the IRS and from the IRS to the UE, respectively. $\bold{\Psi}\in \mathbb{C}^{L\times L}=\diag(\psi_1,\psi_2,...,\psi_L)=\diag(e^{\jmath{\theta_{1}}},e^{\jmath{\theta_{2}}},...,e^{\jmath{\theta_{L}}})$ with $\theta_{i} \in [0, 2\pi),~ i=1,2,...,L$, denotes the phase shifts imposed by the IRS, and we define $\bm{\theta}=(\theta_{1},\theta_{2},...,\theta_{L})$. In this paper, we consider the general Rayleigh model with
\begin{equation}
\label{cha_mod}
\bold{H}_{1}=\bold{R}_{1}^{\frac{1}{2}}\bold{X}\bold{T}^{\frac{1}{2}}_{1},~\bold{H}_{2}=\bold{R}_{2}^{\frac{1}{2}}\bold{Y}\bold{T}^{\frac{1}{2}}_{2},
\end{equation}
where $\bold{R}_{i}$ and $\bold{T}_{i},~i=1,2$, are four positive semi-definite correlation matrices. $\bold{R}_{1}$ and $\bold{T}_2$ denote the correlation matrices of receive and transmit antennas, respectively. $\bold{T}_{1}$ and $\bold{R}_{2}$ denote the transmit and receive correlation matrices of the IRS, respectively. $\bold{X}\in\mathbb{C}^{N \times L}$ and $\bold{Y}\in\mathbb{C}^{L \times M}$ are two independent and identically distributed (i.i.d.) Gaussian random matrices, whose entries follow $\mathcal{CN}(0, \frac{1}{L})$ and $\mathcal{CN}(0, \frac{1}{M})$, respectively. We assume that statistical CSI, i.e., correlation matrices of the channel, is available. To obtain the statistical CSI, the samples of the separate channels are needed. In practice, we can estimate the IRS-UE channel and the BS-IRS channel separately by the methods proposed in~\cite{liu2020matrix},~\cite{hu2021two} and then estimate the corresponding channel covariance matrices based on the techniques proposed in~\cite{liang2001downlink},~\cite{chen2010shrinkage}.

\subsection{Mutual Information and Outage Probability}
The MI of the IRS-aided MIMO system is given by
\begin{equation}
\label{ori_mi}
I(\rho)=\log\det(\bold{I}_{N}+\rho\bold{H}_{1}\bold{\Psi}\bold{H}_2\bold{H}_{2}^{H}\bold{\Psi}^{H}\bold{H}_1^{H}),
\end{equation}
where $\rho=\frac{P}{M\sigma^2}$ with $P$ denoting the total transmit power. Based on the distribution of the MI, we can evaluate the performance of IRS-aided MIMO systems with different metrics. For example, the average throughput can be determined by the EMI as $\E I(\rho)$. On the other hand, the reliability of the system can be measured by the outage probability, which, for a preset transmission rate $R$, can be written as 
\begin{equation}
P_{out}(R)=\Prob(I(\rho)<R).
\end{equation}
The EMI of IRS-aided MIMO systems has been obtained in the literature~\cite{zhang2021large}. In the following, we first investigate the distribution of $I(\rho)$ and then utilize the result to analyze the outage probability and the finite-SNR DMT of IRS-aided MIMO systems. 

\section{Characterization of the MI for IRS-aided MIMO Systems}
\label{pre_res}
Before introducing our main results, we first present some preliminary results including the approximation of the EMI. The analysis is based on RMT, which has been shown to be efficient in analyzing MIMO systems~\cite{couillet2011deterministic},~\cite{wen2012deterministic}.
\subsection{Assumptions and Existing Results on EMI}
The results of this paper are developed based on the following assumptions.

\textbf{Assumption 1.} $0<\lim\inf\limits_{M \ge 1}  \frac{M}{L} \le \frac{M}{L}  \le \lim \sup\limits_{M \ge 1} \frac{M}{L} <\infty$, $0<\lim \inf\limits_{M \ge 1}  \frac{M}{N} \le \frac{M}{N}  \le \lim \sup\limits_{M \ge 1}  \frac{M}{N} <\infty$.

\textbf{Assumption 2.} $\lim \sup\limits_{M\ge 1} \| \bold{R}_{i}\| <\infty$, $\lim \sup\limits_{M\ge 1} \| \bold{T}_{i}\| <\infty$, $i=1,2$~\cite{couillet2011deterministic}~\cite{wen2012deterministic}.

\textbf{Assumption 3.} $\inf\limits_{M\ge 1} \frac{1}{M}\Tr\bold{R}_{1}>0  $, $\inf\limits_{M\ge 1} \frac{1}{M}\Tr\bold{T}_{2}>0 $, $\inf\limits_{M\ge 1} \frac{1}{M}\Tr\bold{T}_{1}\bold{\Psi} \bold{R}_{2}\bold{\Psi}^{H}>0$~\cite{zhang2021bias}~\cite{hachem2008new}.

 \textbf{A.1} is the asymptotic regime considered for the large-scale system, where the dimensions of the system ($M$, $N$, and $L$) grow to infinity at the same paces. \textbf{A.2} and~\textbf{A.3} restrict the rank of the correlation matrices so that the extremely low-rank case, i.e., the ranks of the correlation matrices do not increase with the number of antennas, will not occur.

Given the eigenvalue decompositions $\bold{R}=\bold{U}_{R}\bold{R}_{1}\bold{U}_{R}^{H}$, $\bold{S}=\bold{U}_{S}\bold{T}^{\frac{1}{2}}_{1}\bold{\Psi}\bold{R}_{2}\bold{\Psi}^{H}\bold{T}^{\frac{1}{2}}_{1}\bold{U}_{S}^{H}$, $\bold{T}=\bold{U}_{T}\bold{T}_{2}\bold{U}_{T}^{H}$, where $\bold{R}$, $\bold{S}$, and $\bold{T}$ are diagonal matrices, and the singular value decomposition (SVD) $\bold{T}^{\frac{1}{2}}_{1}\bold{\Psi}\bold{R}_{2}^{\frac{1}{2}}=\bold{U}_{S}^{H}\bold{S}^{\frac{1}{2}}\bold{V}_{S}$, the MI in~(\ref{ori_mi}) can be written as
\begin{equation}
\begin{aligned}
&I(\rho)\overset{a}{=}\log\det(\bold{I}_{N}+
\rho \bold{R}_{1}^{\frac{1}{2}}\bold{X}\bold{T}^{\frac{1}{2}}_{1}\bold{\Psi} \bold{R}_{2}^{\frac{1}{2}}\bold{Y}\bold{T}_{2}\bold{Y}^{H}\bold{R}_{2}^{\frac{1}{2}}
\\
&
\times \bold{\Psi}^{H}\bold{T}^{\frac{1}{2}}_{1}\bold{X}^{H}\bold{R}_{1}^{\frac{1}{2}})
\overset{b}{=}\log\det({\bold{I}}_{N}+\rho\bold{U}_{R}^{H} \bold{R}^{\frac{1}{2}}\bold{U}_{R}\bold{X} \bold{U}_{S}^{H}
\\
&
\times \bold{S}^{\frac{1}{2}} \bold{V}_{S} \bold{Y}\bold{U}_{T}^{H} \bold{T}\bold{U}_{T}
\bold{Y}^{H}\bold{V}_{S}^{H}\bold{S}^{\frac{1}{2}} \bold{U}_{S} \bold{X}^{H}\bold{U}_{R}^{H} \bold{R}^{\frac{1}{2}}\bold{U}_{R})  
\\
&\overset{c}{=}\log\det({\bold{I}}_{N}+\rho\bold{R}^{\frac{1}{2}}\bold{X}'\bold{S}^{\frac{1}{2}}\bold{Y}'\bold{T}\bold{Y}'^{H}\bold{S}^{\frac{1}{2}}\bold{X}'^{H}\bold{R}^{\frac{1}{2}}),
\end{aligned}
\end{equation}
where step $a$ follows by plugging~(\ref{cha_mod}) into~(\ref{ori_mi}). Step $b$ is obtained by plugging in the eigenvalue decompositions. Step $c$ follows from $\bold{X}'=\bold{U}_{R}\bold{X}\bold{U}_{S}^{H}$, $\bold{Y}'=\bold{V}_{S}\bold{Y}\bold{U}_{T}^{H}$, and $\det(\bold{I}+\bold{A}\bold{B})=\det(\bold{I}+\bold{B}\bold{A})$. Due to the unitary invariant attributes of Gaussian random matrices, a Gaussian matrix $\bold{G}$ is equivalent to $\bold{G}'=\bold{U}\bold{G}\bold{V}$ statistically, where $\bold{U}$ and $\bold{V}$ are any unitary matrices. 
Therefore, $\bold{X}$ ($\bold{Y}$) and $\bold{X}'$ ($\bold{Y}'$) are statistically equivalent so that we will not differentiate them in the following. Thus, the equivalent channel matrix can be given by
\begin{equation}
\label{h_rst}
{\bold{H}}=\bold{R}^{\frac{1}{2}}\bold{X}\bold{S}^{\frac{1}{2}}\bold{Y}\bold{T}^{\frac{1}{2}}.
\end{equation}

The following theorem gives an approximation for the EMI. 
\begin{theorem}
\label{mean_app}
 (\cite[Theorem 2]{hoydis2011asymptotic},~\cite[Theorem 5]{hoydis2011iterative} and~\cite[Corollary 1]{zhang2021large})
With the channel matrix $\bold{H}$ given in~(\ref{h_rst}), if \textbf{A.1} and \textbf{A.2} are satisfied, it holds true that, for general random matrices $\bold{X}$ and $\bold{Y}$,
\begin{equation}
\label{nor_emi}
\frac{1}{N}\E I(\rho) \xrightarrow{N\rightarrow \infty} \frac{1}{N}\overline{I}(\rho).
\end{equation} 
When $\bold{X},\bold{Y}$ are Gaussian random matrices~\cite{zhang2021large}, it further holds true that 
\begin{equation}
\E I(\rho) \xrightarrow{N\rightarrow \infty} \overline{I}(\rho),
\end{equation} 
where $\overline{I}(\rho)$ is given by
\begin{equation}
\label{the_mean}
\begin{aligned}
\overline{I}(\rho)=&\log\det(\bold{I}_{N}+\frac{\rho M g\overline{g}}{L\delta}\bold{R} )+\log\det(\bold{I}_{L}+\delta\overline{g}\bold{S})
\\
&+\log\det(\bold{I}_{M}+g\bold{T})-2Mg\overline{g}.
\end{aligned}
\end{equation}
Here, $(\delta, g, \overline{g})$ is the unique positive solution of the following system of equations
 \begin{equation}
 \label{basic_eq}
 \delta=\frac{1}{L}\Tr\bold{R}\bold{Q}_{R},~g=\frac{1}{M}\Tr \bold{S}\bold{Q}_{S}, ~\overline{g}=\frac{1}{M}\bold{T}\bold{Q}_{T},
 \end{equation}
 where 
  \begin{equation}
  \begin{aligned}
 \bold{Q}_{R}&=\left(\frac{1}{\rho} \bold{I}_{N}+\frac{M g\overline{g}}{L\delta}\bold{R}   \right)^{-1},
 \bold{Q}_{S}=\left(\frac{1}{\delta}\bold{I}_{L}+\overline{g}\bold{S}\right)^{-1},
 \\
 \bold{Q}_{T}&=\left(\bold{I}_{M}+g\bold{T}\right)^{-1}.
\end{aligned}
  \end{equation}
\end{theorem}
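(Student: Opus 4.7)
The plan is to establish the deterministic equivalent via the resolvent method applied to the Gram matrix $\bold{H}\bold{H}^{H}$, and then integrate with respect to the SNR to obtain the EMI. First, I would rewrite the EMI through the Shannon transform identity
\begin{equation}
\E I(\rho)=\int_{1/\rho}^{\infty}\left(\frac{N}{z}-\E\Tr\big(\bold{H}\bold{H}^{H}+z\bold{I}_{N}\big)^{-1}\right)dz,
\end{equation}
so that the task reduces to finding a deterministic equivalent for the Stieltjes-like quantity $\frac{1}{N}\E\Tr(\bold{H}\bold{H}^{H}+z\bold{I}_{N})^{-1}$ for each $z>0$, and then checking that the resulting function of $z$ indeed integrates to the right-hand side of~(\ref{the_mean}).

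Second, I would apply the Gaussian integration-by-parts (Stein) identity sequentially to the two independent Gaussian matrices $\bold{X}$ and $\bold{Y}$. A single application to $\bold{Y}$ conditional on $\bold{X}$ produces an expression involving the inner resolvent $\bold{Q}_{T}=(\bold{I}_{M}+g\bold{T})^{-1}$ plus error terms; a second application to $\bold{X}$ yields the outer resolvents $\bold{Q}_{R}$ and $\bold{Q}_{S}$. Matching the dominant terms forces the three normalized traces $\delta,g,\overline{g}$ to satisfy the coupled system~(\ref{basic_eq}). Existence and uniqueness of the solution on the positive cone would be obtained through a standard monotonicity/contraction argument on the map defined by~(\ref{basic_eq}), exploiting the fact that each variable is an increasing, concave function of the others.

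Third, once the fixed-point system is established, I would integrate the deterministic Stieltjes transform with respect to $\rho$ by verifying that $\overline{I}(\rho)$ in~(\ref{the_mean}) has derivative matching $\frac{1}{N}\Tr\bold{R}\bold{Q}_{R}$ times the appropriate factor; the cross-terms coming from $\partial\delta/\partial\rho$, $\partial g/\partial\rho$, $\partial\overline{g}/\partial\rho$ cancel thanks to~(\ref{basic_eq}), leaving precisely the log-det structure in~(\ref{the_mean}).

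The main obstacle, and the reason the Gaussian assumption is needed for the stronger conclusion $\E I(\rho)\to\overline{I}(\rho)$, lies in controlling the variance of $I(\rho)$. For the weaker statement~(\ref{nor_emi}) in the first part, a second-moment bound of order $O(1)$ suffices, which can be obtained from martingale differences applicable to general entries with bounded fourth moments. For the exact (unnormalized) convergence, however, one needs the Poincaré-Nash inequality, which gives $\Var\,I(\rho)=O(1)$ and, more delicately, $\E I(\rho)-\overline{I}(\rho)=O(1/N)$ through a careful bookkeeping of the cross terms of order $\frac{1}{N}$ appearing in the nested integration-by-parts expansion. Pushing these remainders to $o(1)$ rather than $O(1)$ is the technical heart of the proof, since the product structure $\bold{X}\bold{S}^{1/2}\bold{Y}$ creates bilinear forms whose fluctuations must be shown to average out at the $1/N$ scale under \textbf{A.1}--\textbf{A.2}.
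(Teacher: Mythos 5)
Note first that the paper does not supply its own proof of Theorem~\ref{mean_app}; it is cited as a known result from \cite[Theorem 2]{hoydis2011asymptotic}, \cite[Theorem 5]{hoydis2011iterative}, and \cite[Corollary 1]{zhang2021large}, so there is no internal argument to compare against. Your outline (Shannon-transform integral representation of the EMI, sequential Gaussian integration-by-parts on $\bold{Y}$ conditional on $\bold{X}$ and then on $\bold{X}$ to obtain the coupled fixed point $(\delta,g,\overline{g})$, uniqueness by a monotone/contraction argument, and stationarity of $\overline{I}$ in the fixed-point variables to carry out the back-integration) is the standard route and is consistent with the methodology of the cited works. One point deserves correction, however: the reason the Gaussian hypothesis is needed for the unnormalized convergence $\E I(\rho)\to\overline{I}(\rho)$ is not really variance control. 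Since the claim concerns $\E I(\rho)$, fluctuations are already averaged out; the obstruction is a potential $O(1)$ additive bias in $\E I(\rho)-\overline{I}(\rho)$ coming from a fourth-cumulant (kurtosis) term that survives for non-Gaussian entries, as discussed in \cite{hachem2012clt} and \cite{zhang2021bias} and referenced in Remark~\ref{mean_rem}. The Poincar\'e--Nash inequality is indeed the workhorse for controlling the remainder terms, but what Gaussianity actually buys is that this kurtosis correction vanishes, pushing the bias from $O(1)$ down to $O(1/N)$, which is precisely what permits dropping the $1/N$ normalization in the second half of the theorem.
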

\begin{remark}
\label{mean_rem}
This theorem indicates that $\overline{I}(\rho)$ is a good approximation for $\E I(\rho)$. For~(\ref{nor_emi}) to hold, the random matrices $\bold{X}$ and $\bold{Y}$ are not necessarily Gaussian where the EMI is normalized by $N$.~(\ref{the_mean}) indicates that when $\bold{X}$ and $\bold{Y}$ are Gaussian, the convergence holds even if we get rid of the factor $\frac{1}{N}$, which may not hold true for non-Gaussian matrices as discussed in~\cite{hachem2012clt},~\cite{zhang2021bias}. For IRS-aided MIMO systems, the matrix $\bold{S}$ is related to the phase shift matrix $\bold{\Psi}$, which indicates that the solution of~(\ref{basic_eq}) is also related to the phase shifts. The convergence in Theorem~\ref{mean_app} involves the expectation of the MI. On the other hand, by the asymptotic regime \textbf{A.1}, the MI normalized by the number of the receive antennas will converge to a deterministic quantity almost surely when the number of antennas goes to infinity with the same pace~\cite[Theorem 2]{hoydis2011asymptotic}, which implies the occurrence of the channel hardening~\cite{bereyhi2022channel},~\cite{jung2020performance},\cite{zhang2021sum}. 
\end{remark}


Next, we investigate the fluctuation of $I(\rho)$. The challenge arises from the fact that the effective channel is the product of two random matrices. There are some related results in the literature. The CLT of linear spectral statistics for $F$-matrices was given in~\cite{zheng2012central}. In~\cite{gotze2017distribution}, the CLT of linear spectral statistics with general non-Gaussian entries was given for the product of two i.i.d. random matrices, which was also investigated in~\cite{zheng2016asymptotic} by a free probability approach for Gaussian random matrices. The authors of~\cite{zheng2016asymptotic} gave the CLT for the MI of double-Rayleigh channels when $M=N$. However, the CLT for the MI over correlated channels has not been considered and is one of the main contributions of this paper. Based on the CLT, we also investigate the outage probability for IRS-aided MIMO systems using large RMT.

\label{sec_out}
 \begin{table*}[!htbp]
\centering
\caption{List of Expressions.}
\label{var_list}
\begin{tabular}{|cc|cc|cc|cc|}
\toprule
Symbols& Expression &  Symbols & Expression &Symbols& Expression &  Symbols& Expression \\
\midrule
$\gamma_{R}$ & $\frac{1}{L}\Tr\bold{R}^{2}\bold{Q}_{R}^{2}$
&
$\gamma_{R,I}$ & $\frac{1}{L}\Tr\bold{R}\bold{Q}_{R}^{2}$
&
$\gamma_{S}$ &  $\frac{1}{M}\Tr\bold{S}^2\bold{Q}_{S}^2$
&
$\gamma_{S,I}$ &  $\frac{1}{M}\Tr\bold{S}\bold{Q}_{S}^2$
\\
$\gamma_{T}$& $\frac{1}{M}\Tr\bold{T}^2\bold{Q}_{T}^2$
&
$\gamma_{T,I}$& $\frac{1}{M}\Tr\bold{T}\bold{Q}_{T}^2$
&
$\eta_{R}$ & $\frac{1}{L}\Tr\bold{R}^3\bold{Q}_{R}^3$
&
$\eta_{R,I}$ & $\frac{1}{L}\Tr\bold{R}^2\bold{Q}_{R}^3$
\\
$\eta_{S}$  & $\frac{1}{M}\Tr\bold{S}^3\bold{Q}_{S}^{3}$
&
 $\eta_{S,I}$ & $\frac{1}{M}\Tr\bold{S}^2\bold{Q}_{S}^{3}$
 &
  $\eta_{T}$ & $\frac{1}{M}\Tr\bold{T}^3\bold{Q}_{T}^{3}$
  &
    $\eta_{T,I}$ & $\frac{1}{M}\Tr\bold{T}^{2}\bold{Q}_{T}^{3}$
  \\
  $\Delta_{Y}$& $1-\gamma_{S}\gamma_{T}$
&
$\Gamma$ & $\frac{M}{L\delta^2}(\frac{\gamma_{T,I}^2\gamma_{S}}{\Delta_{Y}}+g^2\gamma_{T})$
&
$\Delta_{X}$ & $1-\gamma_{R}\Gamma$
  &
 $g(\bold{F})$& $\frac{1}{M}\Tr\bold{F}\bold{Q}_{S} $ 
 \\
    $\psi_{T}$& $\frac{1}{M}\Tr\bold{T}^2\bold{Q}_{T}^{4}$
   &
   $\Gamma_{L}$&
   $\Gamma-\frac{\gamma_{S}\psi_{T} }{L\delta^2\Delta_{Y}}$
   &
  $\gamma_{S}(\bold{F})$ & $\frac{1}{M}\Tr\bold{S}\bold{Q}_{S}\bold{F} \bold{Q}_{S}$
  &
    $\eta_{S}(\bold{F})$ & $\frac{1}{M}\Tr\bold{S}\bold{Q}_{S}\bold{F} \bold{Q}_{S}\bold{S} \bold{Q}_{S}$
 \\
\bottomrule
\end{tabular}
\end{table*}

\subsection{Asymptotic Gaussianity of the MI}
To characterize the distribution of the MI, we first prove the Gaussianity of the MI.
\begin{theorem} 
\label{clt_mi}
(CLT for the MI) If \textbf{A.1} - \textbf{A.3} are satisfied, it holds true that
\begin{equation}
\frac{I(\rho)-\overline{I}(\rho)}{\sqrt{V(\rho)}} \xrightarrow[N \rightarrow \infty]{\mathcal{D}}  \mathcal{N}(0,1),
\end{equation}
where $\overline{I}(\rho)$ is given in~(\ref{the_mean}). The asymptotic variance $V(\rho)$ is given by
\begin{equation}
\label{low_var}
\begin{aligned}
V(\rho)&=-\log(1-\gamma_{R}\Gamma_{L})-\log(1-\gamma_{S}\gamma_{T}),
\end{aligned}
\end{equation}
where $\gamma_{R}$, $\gamma_{S}$, $\gamma_{T}$, and $\Gamma_{L}$ are listed in Table~\ref{var_list}.
\end{theorem}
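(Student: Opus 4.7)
The plan is to decompose the fluctuation of $I(\rho)$ according to the two independent sources of randomness $\bold{X}$ and $\bold{Y}$, and apply a standard CLT for correlated Gaussian sample-covariance matrices in each conditional step. Let $\mathcal{F}_{Y}$ denote the $\sigma$-algebra generated by $\bold{Y}$, and split
\begin{equation*}
I(\rho)-\overline{I}(\rho) = A_{N} + B_{N},\quad A_{N}:=I(\rho)-\E[I(\rho)\,|\,\mathcal{F}_{Y}],\quad B_{N}:=\E[I(\rho)\,|\,\mathcal{F}_{Y}]-\overline{I}(\rho).
\end{equation*}
By construction $\E[A_{N}\,|\,\mathcal{F}_{Y}]=0$, while Theorem~\ref{mean_app} gives $\E B_{N}=o(1)$; the two summands are orthogonal, and I will argue that they are asymptotically independent, so that their limit variances add to $V(\rho)$.

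\textbf{Step 1 (inner CLT).} Conditional on $\bold{Y}$, the effective channel reads $\bold{R}^{1/2}\bold{X}\tilde{\bold{T}}^{1/2}$ with $\tilde{\bold{T}}:=\bold{S}^{1/2}\bold{Y}\bold{T}\bold{Y}^{H}\bold{S}^{1/2}$ acting as a (now deterministic) transmit-side correlation. This is exactly the setting of the classical CLT for $\log\det$-type linear spectral statistics of correlated Gaussian matrices \cite{hachem2008new}, which yields, conditionally on $\bold{Y}$,
\begin{equation*}
A_{N}\xrightarrow[N\to\infty]{\mathcal{D}}\mathcal{N}(0,-\log(1-\gamma_{R}\Gamma(\bold{Y}))),
\end{equation*}
where $\Gamma(\bold{Y})$ is a functional of $\tilde{\bold{T}}$ built from the conditional deterministic equivalents. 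Concentration of linear spectral statistics of $\tilde{\bold{T}}$, obtained via the Poincar\'e--Nash inequality applied to $\bold{Y}$, then forces $\Gamma(\bold{Y})\to\Gamma_{L}$ in probability; the subtracted term $\gamma_{S}\psi_{T}/(L\delta^{2}\Delta_{Y})$ distinguishing $\Gamma_{L}$ from $\Gamma$ arises from the leading Gaussian integration-by-parts correction to this random conditional variance.

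\textbf{Step 2 (outer CLT and assembly).} The conditional mean $\E[I(\rho)\,|\,\mathcal{F}_{Y}]$ equals, up to $o(1)$, the deterministic-equivalent functional of Theorem~\ref{mean_app} evaluated with $\tilde{\bold{T}}$ in place of the product-matrix structure. Differentiating through the fixed-point system~(\ref{basic_eq}) by the implicit-function theorem expresses its fluctuation, to leading order, as a linear spectral statistic of the correlated Wishart $\bold{Y}\bold{T}\bold{Y}^{H}$ weighted by $\bold{S}$. A second application of the sample-covariance CLT yields
\begin{equation*}
B_{N}\xrightarrow[N\to\infty]{\mathcal{D}}\mathcal{N}(0,-\log(1-\gamma_{S}\gamma_{T})),
\end{equation*}
the factor $\gamma_{S}\gamma_{T}$ being the standard bilinear resolvent variance combined with the Jacobian from implicit differentiation. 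Joint asymptotic normality of $(A_{N},B_{N})$ with independence follows from the tower property, since $B_{N}$ is $\mathcal{F}_{Y}$-measurable: $\E\exp(\jmath u A_{N}+\jmath v B_{N})=\E[\exp(\jmath v B_{N})\,\E[\exp(\jmath u A_{N})\,|\,\mathcal{F}_{Y}]] \to \exp(-u^{2}V_{A}/2)\exp(-v^{2}V_{B}/2)$ with $V_{A}=-\log(1-\gamma_{R}\Gamma_{L})$ and $V_{B}=-\log(1-\gamma_{S}\gamma_{T})$, producing~(\ref{low_var}).

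The hard part will be Step 1: the classical correlated-Wishart CLT produces a \emph{random} conditional variance $-\log(1-\gamma_{R}\Gamma(\bold{Y}))$ whose concentration to the deterministic expression with $\Gamma_{L}$ is not immediate, because $\Gamma$ and $\Gamma_{L}$ differ at order $1/L$. Capturing this $\psi_{T}$-type correction, which a naive deterministic substitution would miss, requires a second-order Gaussian integration-by-parts on $\bold{Y}$ together with the consistency relations among $\gamma_{S,I}$, $\gamma_{T,I}$, $\Delta_{Y}$, and $g$ listed in Table~\ref{var_list}. The remaining bookkeeping is long but follows the standard Bai--Silverstein/Hachem template.
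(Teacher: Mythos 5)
Your decomposition $I(\rho)-\overline{I}(\rho)=A_N+B_N$ with $A_N = I(\rho)-\E[I(\rho)\mid\mathcal{F}_Y]$ and $B_N = \E[I(\rho)\mid\mathcal{F}_Y]-\overline{I}(\rho)$ is exactly the paper's decomposition, and your Steps 1 and 3 mirror the paper closely: the paper also invokes the Hachem--Loubaton--Najim CLT conditionally on $\bold{Y}$ (with $\tilde{\bold{T}}=\bold{W}\bold{W}^H$ as the effective transmit correlation), then shows via Poincar\'e--Nash concentration that the random conditional variance $\gamma_{\bold{Y}}\widetilde\gamma_{\bold{Y}}$ collapses to $\gamma_R\Gamma_L$, and finally argues independence of the two pieces precisely because the conditional limit law does not depend on $\bold{Y}$ --- the tower-property/characteristic-function argument you sketch is the formalization of that.

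Where you diverge is Step 2. The paper does \emph{not} pass through a deterministic-equivalent functional $\overline{I}_{\bold{Y}}$ of the conditional mean followed by a delta-method linearization and an off-the-shelf Wishart LSS CLT. Instead it applies a martingale CLT directly to $\E[I(\rho)\mid\bold{Y}]-\E I(\rho)$, decomposing it as $\sum_m(\E_m-\E_{m-1})\log(1+\Lambda_m)$ over the columns of $\bold{Y}$ and computing the conditional variance increments via the rank-one update formula and the resolvent identities of Lemma~\ref{zqz_comp}. The variance $-\log(1-\gamma_S\gamma_T)$ then emerges as the Riemann-sum limit $\sum_m \gamma_S(\gamma_T^{(m)}-\gamma_T^{(m-1)})/(1-\gamma_S\gamma_T^{(m)}) \to -\log\Delta_Y$. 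This is cleaner than your route for two reasons you should be aware of: (i) your approach requires a \emph{rate} estimate $\E[I\mid\bold{Y}]-\overline{I}_{\bold{Y}}=o_P(1)$ uniformly over typical $\bold{Y}$ (the $O(1/N)$ bias bound for the one-hop deterministic equivalent with Gaussian entries), which is true but is an extra ingredient the martingale method sidesteps; and (ii) linearizing $\overline{I}_{\bold{Y}}$ through the fixed point via the implicit function theorem needs the variational/extremum structure of the deterministic equivalent (the derivatives through the fixed-point parameters $f,\tilde f$ vanish at the stationary point) to reduce the fluctuation to a bona fide linear spectral statistic of $\bold{W}\bold{W}^H$ --- you should state this explicitly, otherwise the reader cannot see why only the \emph{explicit} dependence on $\bold{W}\bold{W}^H$ survives. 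With those two points filled in, your Step 2 is a legitimate alternative; the paper's martingale route is more self-contained and avoids rate bounds, while yours makes transparent that $B_N$ is governed by the one-hop Wishart fluctuation theory with an effective population covariance $\bold{S}$.

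One caution on your Step 1: you attribute the $\psi_T$ correction in $\Gamma_L$ to a ``second-order Gaussian integration-by-parts on $\bold{Y}$.'' In the paper it arises more prosaically, as the $\frac{1}{M}\psi_T$ term in $\frac{1}{M}\Tr\bold{S}\widetilde{\bold{C}}_i\bold{W}_i\bold{W}_i^H\widetilde{\bold{C}}_i$ once the $i$-th column is removed from $\bold{W}$ (see~(\ref{omit_term})), i.e.\ it is a finite-$L$ leave-one-out correction inside the very Poincar\'e--Nash concentration step, not a separate higher-order IBP. Your high-level picture is right, but the mechanism is the rank-one perturbation bookkeeping, so phrase it that way if you flesh this out.
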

\begin{proof}
The proof of Theorem~\ref{clt_mi} is given in Appendix~\ref{proof_clt}.
\end{proof}
\begin{remark}
Theorem~\ref{clt_mi} indicates the asymptotic Gaussianity of the MI. Note that the variance consists of two terms caused by $\bold{X}$ and $\bold{Y}$, respectively. Different from the \textit{entry}-based CLT used in SISO systems~\cite{cheng2021downlink},~\cite{wang2020outage}, this CLT is developed based on RMT. From the proof of Theorem~\ref{clt_mi}, we can observe that the term $-\frac{\gamma_{S}\psi_{T} }{L\delta^2\Delta_{Y}}$ in $\Gamma_{L}$ can be omitted if $L$ is large, where $\psi_{T}$ is given in Table~\ref{var_list}. In this case, we can use $\Gamma$, listed in Table~\ref{var_list}, to replace $\Gamma_{L}$ and the variance is given by
\begin{equation}
\begin{aligned}
V(\rho)&=-\log(1-\gamma_{R}\Gamma)-\log(1-\gamma_{S}\gamma_{T}).
\end{aligned}
\end{equation}
The large $L$ case will be used in the following analysis as we will investigate the asymptotic performance when $L$ goes to infinity.
\end{remark}

\subsection{Uncorrelated Cases}
To better illustrate the impact of the size of the IRS, we further consider the special case where the channels are i.i.d., i.e., $\bold{R}=\bold{I}_{N}, \bold{S}=\bold{I}_{L}, \bold{T}=\bold{I}_{N}$, which is referred to as the double-Rayleigh model~\cite{zheng2016asymptotic}. To derive the CLT, we first determine the EMI.
\begin{proposition}
\label{mean_pro}
  If \textbf{A.1} holds true, the EMI of IRS-aided MIMO systems
over independent channels with $N=M$ is given by
 \begin{equation}
 \begin{aligned}
  \overline{I}(\rho)
  &\!=\!2N\log(1\!+\!g)\!+\! \frac{N}{\tau}\log(1\!+\! \frac{\tau \rho}{(1+g)^2}) 
   \!-\! \frac{2N g}{1+g},
\end{aligned}
 \end{equation}
 where $\tau=\frac{M}{L}$, and $g$ is determined by the following cubic equation
  \begin{equation}
  \label{cubic_g}
  g^{3}+2g^{2}+(1+\rho \tau - \rho )g-\rho=0,
  \end{equation}
  with $g>0$ and $1+(1-\tau)g>0$.
\end{proposition}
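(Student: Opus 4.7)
The plan is to specialize Theorem~\ref{mean_app} to the i.i.d. setting $\bold{R}=\bold{I}_{N}$, $\bold{S}=\bold{I}_{L}$, $\bold{T}=\bold{I}_{N}$ with $N=M$, then reduce the three-equation fixed-point system~(\ref{basic_eq}) to a single cubic in $g$, and finally simplify the deterministic approximation $\overline{I}(\rho)$. The first observation is that all three resolvents $\bold{Q}_{R}, \bold{Q}_{S}, \bold{Q}_{T}$ become scalar multiples of the identity. In particular, $\bold{Q}_{T}=(1+g)^{-1}\bold{I}_{M}$ so $\overline{g} = \frac{1}{M}\Tr \bold{T}\bold{Q}_{T} = \frac{1}{1+g}$. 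Similarly, the $\delta$-equation forces $\delta\bigl(\frac{1}{\rho}+\tau g\overline{g}/\delta\bigr)=\tau$, i.e., $\delta=\tau\rho(1-g\overline{g})=\tau\rho/(1+g)$, while the $g$-equation gives $g=\frac{1}{\tau}\cdot\frac{\delta}{1+\delta\overline{g}}$.

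Substituting $\overline{g}=1/(1+g)$ and $\delta=\tau\rho/(1+g)$ into the $g$-equation and clearing denominators yields
\begin{equation*}
\rho\bigl(1+(1-\tau)g\bigr) = g(1+g)^{2},
\end{equation*}
which expands exactly to the cubic $g^{3}+2g^{2}+(1+\rho\tau-\rho)g-\rho=0$ asserted in~(\ref{cubic_g}). For the positivity and uniqueness of the stated root, note that $p(g):=g^{3}+2g^{2}+(1+\rho\tau-\rho)g-\rho$ satisfies $p(0)=-\rho<0$ and $p(g)\to\infty$, so at least one positive root exists; combining this with the fact that $(\delta,g,\overline{g})$ is the unique positive solution of~(\ref{basic_eq}) (Theorem~\ref{mean_app}) selects the desired branch. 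The auxiliary inequality $1+(1-\tau)g>0$ is precisely the positivity of the denominator $1+\delta\overline{g}=1+\tau\rho/(1+g)^{2}$ rewritten via the cubic, which can be checked directly from $\delta,\overline{g}>0$.

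For the EMI expression, I would plug the reduced scalars into the three log-determinants of~(\ref{the_mean}). A direct computation gives $\frac{\rho Mg\overline{g}}{L\delta}=g$ (using $M=N$, $\overline{g}=1/(1+g)$, $\delta=\tau\rho/(1+g)$, $L=N/\tau$), so the first log-det simplifies to $N\log(1+g)$. The third log-det equals $N\log(1+g)$ immediately, and the second is $L\log\bigl(1+\delta\overline{g}\bigr)=\frac{N}{\tau}\log\bigl(1+\tau\rho/(1+g)^{2}\bigr)$. Finally the subtracted term is $-2Mg\overline{g}=-2Ng/(1+g)$. Collecting these four pieces reproduces the closed form in the statement.

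The only genuinely nontrivial step is the reduction to the cubic and the verification that the positive root singled out by $1+(1-\tau)g>0$ coincides with the unique positive fixed point guaranteed by Theorem~\ref{mean_app}; everything else is routine substitution. I anticipate the main care required is bookkeeping the relation $L=N/\tau$ consistently and checking the sign constraints on $g$ when $\tau>1$, where the cubic has a more subtle root structure.
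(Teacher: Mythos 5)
Your proposal is correct and follows exactly the route the paper indicates: the paper states that Proposition~1 ``can be obtained from Theorem~1 by setting $\bold{R}=\bold{I}_{N}$, $\bold{S}=\bold{I}_{L}$, $\bold{T}=\bold{I}_{N}$'' and omits the details, which you have supplied accurately (the reductions $\overline{g}=1/(1+g)$, $\delta=\tau\rho/(1+g)$, $\frac{\rho M g\overline{g}}{L\delta}=g$, and the cubic $g(1+g)^2=\rho\bigl(1+(1-\tau)g\bigr)$ all check out).
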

\begin{proof}
Proposition~\ref{mean_pro} can be obtained from Theorem~\ref{mean_app} by setting $\bold{R}=\bold{I}_{N}, \bold{S}=\bold{I}_{L}, \bold{T}=\bold{I}_{N}$, so we omit the proof.
\end{proof}
\begin{remark}
In this case, the parameters in Theorem~\ref{mean_app} degenerate to one parameter $g$, which is described by a cubic equation. Thus, the performance only depends on the ratio between the number of transmit antennas ($M/N$) and the size of the IRS (L), i.e., $\tau$.
\end{remark}

\begin{proposition}
\label{clt_ray}
If \textbf{A.1} holds true, $N=M$, and $L$ is comparable with $M$, the CLT for the MI of IRS-aided MIMO systems over independent channels is given by
\begin{equation}
\frac{I(\rho)-\overline{I}(\rho)}{\sqrt{V(\rho)}} \xrightarrow[N \rightarrow \infty]{\mathcal{D}} \mathcal{N}(0,1),
\end{equation}
where
 \begin{equation}
 \label{ref_rek}
 V(\rho)=\log[\rho (1+g)^2]-\log(\rho+2g^3+2g^2),
 \end{equation}
 and $g$ is identical to that in Proposition~\ref{mean_pro}.
\end{proposition}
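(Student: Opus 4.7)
The plan is to derive Proposition~\ref{clt_ray} as a specialization of Theorem~\ref{clt_mi} to the fully uncorrelated setting $\bold{R}=\bold{I}_N$, $\bold{S}=\bold{I}_L$, $\bold{T}=\bold{I}_M$ with $N=M$, and then to express every quantity in Table~\ref{var_list} as an explicit rational function of the single scalar $g$ characterized in Proposition~\ref{mean_pro}.

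First I would solve the fixed-point system~(\ref{basic_eq}) in this symmetric setting. Because the three correlation matrices are all scalar multiples of the identity, the resolvents $\bold{Q}_R$, $\bold{Q}_S$, $\bold{Q}_T$ themselves become scalar multiples of the identity, and the three equations in~(\ref{basic_eq}) can be reduced to two purely scalar relations between $(\delta,g,\overline{g})$. Matching with the EMI expression of Proposition~\ref{mean_pro} gives $\overline{g}=1/(1+g)$, $\delta=\rho\tau/(1+g)$ and the cubic~(\ref{cubic_g}), so all three parameters become functions of $g$ alone. I would also record the convenient identity $g(a^{2}+\rho\tau)=\rho a$, equivalently $gb=\rho a$, where $a:=1+g$ and $b:=a^{2}+\rho\tau$; this is just the cubic~(\ref{cubic_g}) rewritten, and it will be the key algebraic identity in the final simplification.

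Next I would substitute these reductions into the Table~\ref{var_list} quantities appearing in the variance of Theorem~\ref{clt_mi}. Direct computation yields $\gamma_R=\tau\rho^{2}/a^{2}$, $\gamma_T=1/a^{2}$, $\gamma_S=\rho^{2}\tau a^{2}/b^{2}$, so that $1-\gamma_S\gamma_T=(b^{2}-\rho^{2}\tau)/b^{2}$. Since $L$ is comparable with $M$ (hence large), the remark following Theorem~\ref{clt_mi} justifies replacing $\Gamma_L$ by $\Gamma$, and a short calculation gives
\begin{equation*}
\Gamma=\frac{1}{b^{2}-\rho^{2}\tau}+\frac{g^{2}}{\rho^{2}\tau},
\qquad
1-\gamma_{R}\Gamma=\frac{(1+2g)(b^{2}-\rho^{2}\tau)-\tau\rho^{2}}{a^{2}(b^{2}-\rho^{2}\tau)},
\end{equation*}
where I used $a^{2}-g^{2}=1+2g$.

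Finally, I would collect the two factors and cancel $b^{2}-\rho^{2}\tau$ to obtain
\begin{equation*}
(1-\gamma_R\Gamma)(1-\gamma_S\gamma_T)=\frac{(1+2g)(b^{2}-\rho^{2}\tau)-\tau\rho^{2}}{a^{2}b^{2}},
\end{equation*}
and then use the cubic identity $gb=\rho a$ (equivalently $b^{2}=\rho^{2}a^{2}/g^{2}$) to reduce the numerator to $(\rho+2g^{2}a)b^{2}/\rho$. Dividing gives $(1-\gamma_R\Gamma)(1-\gamma_S\gamma_T)=(\rho+2g^{2}a)/(\rho a^{2})$, and taking $-\log$ produces exactly $V(\rho)=\log[\rho(1+g)^{2}]-\log(\rho+2g^{3}+2g^{2})$, which is~(\ref{ref_rek}). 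The Gaussianity then follows immediately from Theorem~\ref{clt_mi} since \textbf{A.1}--\textbf{A.3} are trivially satisfied for identity correlations. The main obstacle is the last algebraic step: a naive simplification leaves several cross terms, and progress only comes after rewriting the cubic in the factored form $gb=\rho a$ and using it to eliminate one power of $b$; once that substitution is made, the cancellation is essentially automatic.
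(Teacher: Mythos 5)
Your proposal is correct and follows essentially the same route as the paper's own proof in Appendix~B: specialize Theorem~\ref{clt_mi} to $\bold{R}=\bold{I}_N$, $\bold{S}=\bold{I}_L$, $\bold{T}=\bold{I}_M$, reduce the fixed-point system~(\ref{basic_eq}) to scalar relations in $g$, express $\gamma_R,\gamma_S,\gamma_T,\Gamma$ as rational functions of $g$, and invoke the cubic~(\ref{cubic_g}) for the final simplification. The only cosmetic difference is that you package the cubic in the factored form $gb=\rho a$ with $a=1+g$, $b=a^2+\rho\tau$, whereas the paper factors out $\overline{g}^2$ and writes the numerator as $T_1=1+2g-2\tau g^2\overline{g}$ before applying the cubic; both lead to the same cancellation.
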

\begin{proof}
The proof of Proposition~\ref{clt_ray} is given in Appendix~\ref{proof_ray}.
\end{proof}
\begin{remark}
Proposition~\ref{clt_ray} was also obtained in~\cite[Proposition 3]{zheng2016asymptotic} by a free probability approach while in this paper it is shown as a special case of Theorem~\ref{clt_mi}. The different signs of $g^2$ and $\rho$ in~(\ref{ref_rek}) originate from the fact that the signs of Cauchy transform and Stieljies transform are opposite. 
\end{remark}
 Proposition~\ref{mean_pro} and Proposition~\ref{clt_ray} indicate that the mean and variance of the MI are determined by the root $g$ of the cubic equation~(\ref{cubic_g}), whose coefficients are related to $\tau$ and SNR $\rho$.

\section{Outage Probability and Finite-SNR DMT Analysis}
\label{out_dmt}
\subsection{Outage Probability of General Correlated IRS-aided  MIMO Systems}
As a direct result of Theorem~\ref{clt_mi}, the approximation of the outage probability is given by the following theorem. 
\begin{theorem} 
\label{the_out}
(A closed-form approximation for the outage probability of IRS-aided MIMO systems) Given a preset transmission rate $R$, the outage probability can be 
approximated by
\begin{equation}
\label{app_out}
P_{out}(R)\approx \Phi\left(\frac{R-\overline{I}(\rho)}{\sqrt{V(\rho)}}\right).
\end{equation}
On the other hand, given the outage probability $p_{out}$, the outage rate can be approximated by $R\approx \overline{I}(\rho)+  \sqrt{V(\rho)}\Phi^{-1}(p_{out})$.
\end{theorem}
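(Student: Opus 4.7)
The plan is to derive Theorem \ref{the_out} as a direct corollary of the CLT established in Theorem \ref{clt_mi}, so the work reduces to a standardization argument plus an inversion of the Gaussian CDF. The key observation is that the outage event $\{I(\rho) < R\}$ depends on $I(\rho)$ only through a monotone affine transform, so convergence in distribution of the standardized MI immediately transfers to convergence of the outage probability.

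First I would rewrite the definition
\begin{equation*}
P_{out}(R) = \Prob\bigl(I(\rho) < R\bigr) = \Prob\!\left(\frac{I(\rho)-\overline{I}(\rho)}{\sqrt{V(\rho)}} < \frac{R-\overline{I}(\rho)}{\sqrt{V(\rho)}}\right),
\end{equation*}
which is legitimate provided $V(\rho)>0$; the latter follows from Assumptions \textbf{A.2}--\textbf{A.3}, which prevent the logarithms in~(\ref{low_var}) from vanishing (each of $\gamma_{R}\Gamma_{L}$ and $\gamma_{S}\gamma_{T}$ lies strictly between $0$ and $1$, which I would quickly verify from the definitions in Table~\ref{var_list}). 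Next I would apply Theorem~\ref{clt_mi}: since the standardized MI converges in distribution to $\mathcal{N}(0,1)$ and the standard normal CDF $\Phi$ is continuous on all of $\R$, the portmanteau theorem gives
\begin{equation*}
P_{out}(R) \;\longrightarrow\; \Phi\!\left(\frac{R-\overline{I}(\rho)}{\sqrt{V(\rho)}}\right)
\end{equation*}
for every fixed $R$ as $N\to\infty$, which yields the approximation~(\ref{app_out}).

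For the second assertion I would simply invert this relation. Setting $p_{out}=\Phi\bigl((R-\overline{I}(\rho))/\sqrt{V(\rho)}\bigr)$ and using the fact that $\Phi$ is a strictly increasing bijection from $\R$ to $(0,1)$, I can apply $\Phi^{-1}$ on both sides and rearrange to obtain $R \approx \overline{I}(\rho) + \sqrt{V(\rho)}\,\Phi^{-1}(p_{out})$.

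Since the heavy lifting — namely the CLT for $I(\rho)$ over correlated channels — is already absorbed into Theorem~\ref{clt_mi}, there is no real technical obstacle here; the only points that require a line of justification are (i) the strict positivity of the asymptotic variance $V(\rho)$ so that the standardization is well defined, and (ii) the continuity of $\Phi$, which promotes weak convergence of the standardized MI to pointwise convergence of the outage probability at every threshold $R$.
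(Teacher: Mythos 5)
Your proposal is correct and matches the paper's (implicit) reasoning: the paper states Theorem~\ref{the_out} as an immediate corollary of the CLT in Theorem~\ref{clt_mi} without a separate proof, which is exactly the standardization-plus-continuity argument you spell out. The only addition you make beyond what the paper leaves implicit is the check that $V(\rho)>0$, which is a reasonable thing to verify but does not change the route.
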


The above result is different from~\cite{shi2021outage} in the sense that both $\bold{R}_{1}$ and $\bold{T}_{2}$ are not restricted to be an identity matrix, indicating that the proposed method can handle the general correlated channels. In fact, the result in Theorem~\ref{the_out} is also applicable for double-scattering channels~\cite{hoydis2011asymptotic}~\cite{kammoun2019asymptotic}.

\subsection{Outage Probability Optimization}
\label{sec_opt}



Given statistical CSI, the optimization problem for the outage probability can be formulated as
 \begin{equation}
  \begin{aligned}
\mathcal{P}1:~&\min_{\bold{\Psi}} P_{out}(R),~s.t.
\\
& \bold{\Psi}=\diag\left(\psi_1,\psi_2,...,\psi_L \right),
\\
&|\psi_l|=1, l=1,2,...L, 
 \end{aligned}
 \end{equation}
 where $\psi_{l}=\exp(\jmath \theta_l)$.
The challenges arise from two aspects: (1) Evaluation of the objective function; (2) The non-convexity of the problem caused by the uni-modular constraints of $\psi_{i}$. The first one can be resolved by approximating the outage probability using~(\ref{app_out}) in Theorem~\ref{the_out}. Given the SNR $\rho$, the mean $\overline{I}(\rho)$ and variance $V(\rho)$ are functions of the phase shifts $\psi_{l}, l=1,2,...,L$, so we use the notations $\overline{I}(\bold{\Psi})$ and $V(\bold{\Psi})$ instead. Thus, $\mathcal{P}1$ can be rewritten as 
 \begin{equation}
  \begin{aligned}
  \label{p2}
\mathcal{P}2:~&\min_{\bold{\Psi}}~  G(\bold{\Psi})=\Phi\left(\frac{R-\overline{I}(\bold{\Psi})}{\sqrt{V(\bold{\Psi})}}\right),~s.t.
\\
&\bold{\Psi}=\diag\left(\psi_1,\psi_2,...,\psi_L \right),
\\
&|\psi_l |=1,~l=1,2,..., L.
 \end{aligned}
 \end{equation}
Although the parameters $\delta$, $g$, and $\overline{g}$ are coupled with $\theta_l$ as explained in Remark~\ref{mean_rem}, the partial derivatives of the objective function with respect to $\theta_{l}$ can be given in a closed form. Therefore, $\mathcal{P}2$ can be solved using the gradient descent method. Specifically, in each iteration, the update of $\theta_l$ is obtained by searching in the negative gradient direction, until the value of the objective function converges to a stationary point. 

Next, we compute the partial derivatives with respect to $\theta_{l}$, $l=1,2,...,L$, and we use the notation $(\cdot)'_{l}=\frac{\partial (\cdot)}{\partial \theta_{l}}$ to represent the partial derivatives. By the chain rule, the partial derivative of $G(\bold{\Psi})$ with respect to $\theta_{l}$ is given by
\normalsize
\begin{equation}
\label{de1}
\begin{aligned}
G_{l}'(\bold{\Psi})&=\frac{\exp\left(-\frac{ T^2(\bold{\Psi})}{2} \right)T_{l}'(\bold{\Psi})}{\sqrt{2\pi}},
\end{aligned}
\end{equation}
where 
\begin{equation}
T(\bold{\Psi})=\frac{R-\overline{I}(\bold{\Psi})}{\sqrt{V(\bold{\Psi})}},
\end{equation}
\begin{equation}
\begin{aligned}
\nonumber
 T_{l}'(\bold{\Psi})&=\frac{ -\overline{I}_{l}'(\bold{\Psi}) V(\bold{\Psi})-\frac{1}{2}(R- \overline{I}(\bold{\Psi}))V'_{l}(\bold{\Psi})}{V(\rho)^{\frac{3}{2}}},
\\
\overline{I}_{l}'(\bold{\Psi})&=\overline{g}\Tr[(\frac{1}{\delta} \bold{I}_{L}+\overline{g}\bold{T}^{\frac{1}{2}}_{1}\bold{\Psi}\bold{R}_{2}\bold{\Psi}^{H}\bold{T}^{\frac{1}{2}}_{1})^{-1}\bold{F}_{l}].
\\
\end{aligned}
\end{equation}
$\bold{F}_{l}$ is defined as
\begin{equation}
\bold{F}_{l}=\bold{T}_1^{\frac{1}{2}}(\bold{G}_{l}\otimes\bold{R}_{2})\bold{T}_1^{\frac{1}{2}},\quad l=1,2,...,L,
\end{equation}
where
\begin{equation}
\left[\bold{G}_{l}\right]_{p,q}=\left\{
\begin{aligned}
& \jmath e^{\jmath (\theta_{l}-\theta_{q})} ,&p = l, \\
& -\jmath e^{\jmath (\theta_{p}-\theta_{l})} ,&q = l, \\
&0,  &otherwise.
\end{aligned}
\right.
\end{equation}
In fact, $\bold{F}_{l}=(\bold{T}_1^{\frac{1}{2}}\bold{\Psi}\bold{R}_{2}\bold{\Psi}^{H}\bold{T}_1^{\frac{1}{2}})'_{l}$ and the term $V'_{l}(\rho)$ can be given by 
\begin{equation}
V'_{l}(\bold{\Psi})=\frac{\gamma_{S}\gamma_{T,l}'+\gamma_{S,l}'\gamma_{T}}{\Delta_Y}+ \frac{\gamma_{R}\Gamma_{l}'+\gamma_{R,l}'\Gamma}{\Delta_X},
\end{equation}
where
\begin{equation}
\label{gamma_de}
\begin{aligned}
\Gamma_{l}'&=\frac{M}{L\delta^2}[\frac{2\gamma_{T,I}\gamma_{T,I,l}'\gamma_{S}}{\Delta_{Y}} 
+\frac{\gamma_{T,I}^2(\gamma'_{S,l}+\gamma_{S}^{2}\gamma_{T,l}')}{\Delta_{Y}^2} 
\\
+ &2gg_{l}'\gamma_{T}+g^2\gamma_{T,l}' ]-\frac{2M\delta_{l}'}{L\delta^3}(\frac{\gamma_{S}\gamma_{T,I}^2}{\Delta_{Y}} 
+ g^2\gamma_{T}).
\end{aligned}
\end{equation}
The derivatives of $\gamma$'s in~(\ref{gamma_de}) are given as
\begin{equation}
\begin{aligned}
\gamma_{R,l}'
&=\frac{ -2M\eta_{R}(\delta g'_{l}\overline{g} + \delta g\overline{g}'_{l} -g\overline{g}\delta'_{l} )}{L\delta^2},
\\
\gamma_{T,l}'&=-2g'\eta_{T},
\\
\gamma_{S,l}'
&=-2\overline{g}' \eta_{S} -2\overline{g}\eta_{S}(\bold{F}_{l})+\frac{2\delta' \eta_{S,I}}{\delta^2}+2\gamma_{S}(\bold{F}_{l}),
\end{aligned}
\end{equation}
where $(\delta'_{l},g'_{l},\overline{g}'_{l})$ can be computed by the following lemma.
\begin{lemma} 
\label{de_del}
Given that $(\delta, g, \overline{g})$ is the positive solution of~(\ref{basic_eq}) and $\bold{p}_{l}=(\delta'_{l},g'_{l},\overline{g}'_{l})^{T}$, it holds true that
\begin{equation}
\begin{aligned}
\bold{A}\bold{p}_{l}
=\bold{q}_{l},
\end{aligned}
\end{equation}
and $|\bold{A}| >0$, where $\bold{A}$ and $\bold{q}_{l}$ are defined by
\begin{equation}
\label{A_def}
\begin{aligned}
\bold{A}&=\begin{bmatrix}
z\gamma_{R,I}
& \frac{M\overline{g}\gamma_{R}  }{L} 
& \frac{Mg\gamma_{R} }{L}
\\
-\frac{\gamma_{S,I}}{{\delta^2}}
&
1
&      \gamma_S
\\
0 & \gamma_T & 1
\\
\end{bmatrix},
\\
\bold{q}_{l}
&=
\begin{bmatrix}
0 & g(\bold{F}_{l})-2\overline{g}\gamma_{S}(\bold{F}_{l}) & 0
\end{bmatrix}^{T}.
\end{aligned}
\end{equation}
Therefore, we have
\begin{equation}
\bold{p}_{l}
=\bold{A}^{-1}\bold{q}_{l}.
\end{equation}
\begin{proof}
The proof can be obtained by taking derivatives of $(\delta, g, \overline{g})$ with respect to $\theta_{l}$ at both sides of~(\ref{basic_eq}), and the details are omitted here. We have $|\bold{A}|>0$ because
\begin{equation}
| \bold{A} |=z\gamma_{R,I}(1-\gamma_{S}\gamma_{T})+\frac{M\gamma_{R}\gamma_{S,I}\gamma_{T,I}}{L\delta^2},
\end{equation}
and the two terms at the right-hand side (RHS) are positive, so $\bold{A}$ is invertible.
\end{proof}
\end{lemma}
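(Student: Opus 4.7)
The plan is to implicitly differentiate the three coupled fixed-point equations of~(\ref{basic_eq}) with respect to $\theta_l$ and to assemble the resulting scalar identities into the matrix equation $\bold{A}\bold{p}_l=\bold{q}_l$. Because $\bold{R}$ and $\bold{T}$ do not depend on $\bold{\Psi}$ whereas $\bold{S}$ does, only the middle equation acquires an explicit driving term, which accounts for $\bold{q}_l$ having a single nonzero entry.

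For the first row, I would set $\alpha:=\frac{Mg\overline{g}}{L\delta}$ so that $\bold{Q}_R=(\frac{1}{\rho}\bold{I}_N+\alpha\bold{R})^{-1}$, apply $(\bold{Q}_R)'_l=-\alpha'_l\bold{Q}_R\bold{R}\bold{Q}_R$, and take $\frac{1}{L}\Tr\bold{R}(\cdot)$ of both sides of $\delta=\frac{1}{L}\Tr\bold{R}\bold{Q}_R$. Expanding $\alpha'_l$ in terms of $(\delta'_l,g'_l,\overline{g}'_l)$ and then using the algebraic identity $\delta-\alpha\gamma_R=\frac{1}{\rho}\gamma_{R,I}$ (obtained by right-multiplying $(\frac{1}{\rho}\bold{I}+\alpha\bold{R})\bold{Q}_R=\bold{I}$ by $\bold{R}\bold{Q}_R$ and tracing) reproduces the first row of~(\ref{A_def}) with $z=1/\rho$. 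For the second row I would differentiate $g=\frac{1}{M}\Tr\bold{S}\bold{Q}_S$, where both $\bold{S}$ and $\bold{Q}_S$ depend on $\theta_l$. The key ingredient is $\bold{S}'_l=\bold{F}_l$, which is checked directly from the fact that $\bold{\Psi}'_l$ is the diagonal matrix whose only nonzero entry is $\jmath e^{\jmath\theta_l}$ at position $(l,l)$, combined with the Hadamard-product representation of $\bold{G}_l$. Using $\bold{Q}_S\bold{S}=\bold{S}\bold{Q}_S$ and the notation of Table~\ref{var_list}, the linear combination in $(\delta'_l,g'_l,\overline{g}'_l)$ yields the second row of $\bold{A}$, while the $\bold{S}'_l$ contributions provide the middle entry of $\bold{q}_l$. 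The third row is the simplest: $\overline{g}=\frac{1}{M}\Tr\bold{T}\bold{Q}_T$ depends on $\theta_l$ only through $g$, and $(\bold{Q}_T)'_l=-g'_l\bold{Q}_T\bold{T}\bold{Q}_T$ gives $\overline{g}'_l+\gamma_T g'_l=0$ at once.

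Finally, $|\bold{A}|>0$ I would establish by cofactor expansion along the first column, obtaining $|\bold{A}|=z\gamma_{R,I}(1-\gamma_S\gamma_T)+\frac{M\gamma_R\gamma_{S,I}\gamma_{T,I}}{L\delta^2}$. Each of $\gamma_R,\gamma_{R,I},\gamma_{S,I},\gamma_{T,I}$ is a normalized trace of a positive semi-definite matrix against a positive-definite resolvent and is therefore strictly positive under~\textbf{A.1}--\textbf{A.3}. The only delicate factor is $1-\gamma_S\gamma_T>0$: this is exactly the strict contractivity condition that underlies the existence and uniqueness of the solution of~(\ref{basic_eq}) in~\cite{hoydis2011asymptotic,hoydis2011iterative}, and it is the main obstacle of the proof, in the sense that it cannot be read off the definitions but must be imported from the fixed-point argument used to construct $(\delta,g,\overline{g})$. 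Once this is granted, $\bold{A}$ is invertible and $\bold{p}_l=\bold{A}^{-1}\bold{q}_l$ follows immediately.
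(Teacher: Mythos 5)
Your approach matches the paper's (the paper simply asserts "take derivatives at both sides of~(\ref{basic_eq})" and omits all details), and rows~1 and~3 are handled correctly; in particular the resolvent identity $\delta-\alpha\gamma_R=z\gamma_{R,I}$ is exactly what converts the naive coefficient $1-\alpha\gamma_R/\delta$ of $\delta'_l$ into $z\gamma_{R,I}/\delta$. Two points deserve tightening, though. First, the cofactor expansion of $|\bold{A}|$ along the first column yields the term $\frac{M\gamma_R\gamma_{S,I}}{L\delta^2}(\overline{g}-g\gamma_T)$, not $\frac{M\gamma_R\gamma_{S,I}\gamma_{T,I}}{L\delta^2}$ directly; you still need the $\bold{T}$-analogue of the identity you used for $\bold{R}$, namely $\gamma_{T,I}+g\gamma_T=\overline{g}$ (obtained from $\bold{Q}_T(\bold{I}+g\bold{T})=\bold{I}$ by right-multiplying by $\bold{T}\bold{Q}_T$ and tracing), to identify $\overline{g}-g\gamma_T=\gamma_{T,I}$. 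Second, your claim that $1-\gamma_S\gamma_T>0$ "cannot be read off the definitions" and must be imported from the fixed-point construction is an overstatement: writing the traces spectrally and comparing term by term gives
\begin{equation}
\overline{g}\,\gamma_{S}=\frac{1}{M}\sum_{i}\frac{\overline{g}\,s_i^2}{(1/\delta+\overline{g}s_i)^2}
<\frac{1}{M}\sum_{i}\frac{s_i}{1/\delta+\overline{g}s_i}=g,
\qquad
g\,\gamma_{T}=\frac{1}{M}\sum_{i}\frac{g\,t_i^2}{(1+gt_i)^2}
<\frac{1}{M}\sum_{i}\frac{t_i}{1+gt_i}=\overline{g},
\end{equation}
the first strict because $1/\delta>0$ and the second because $1>0$, so $\gamma_S<g/\overline{g}$ and $\gamma_T<\overline{g}/g$ and hence $\gamma_S\gamma_T<1$ directly; no appeal to the contraction-mapping existence proof is needed. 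Finally, since you defer the actual computation for row~2, be aware that carrying it out — $g'_l=g(\bold{F}_l)+\frac{\delta'_l}{\delta^2}\gamma_{S,I}-\overline{g}'_l\gamma_S-\overline{g}\gamma_S(\bold{F}_l)$ — produces the driving term $g(\bold{F}_l)-\overline{g}\gamma_S(\bold{F}_l)$, so the coefficient you report for $\gamma_S(\bold{F}_l)$ should be checked carefully against the form $g(\bold{F}_l)-2\overline{g}\gamma_S(\bold{F}_l)$ appearing in the statement before asserting agreement.
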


Next, we provide the gradient descent method. We use the Armijo-Goldstein (AG) line search method~\cite{armijo1966minimization} to find an expected decrease of the objective function based on the local gradients. The AG condition is given in the $4$th line of Algorithm~\ref{gra_alg}. The algorithm can be proved to converge to a stationary point by following the proof of Theorem 1 in~\cite{yang2019inexact} or Theorem 1 in~\cite{ma2020low}. The performance of the algorithm will be evaluated in Section~\ref{simu}.

\textbf{Complexity Analysis:} In each iteration, we need to compute $(\delta,g,\overline{g})$ according to the updated $\bold{\Phi}$ by solving~(10). To avoid the computation induced by the matrix inversion, we can use the diagonal matrix $\bold{\Lambda}_{R}=\bold{U}\bold{R}\bold{U}^{H}$ to replace $\bold{R}$ by its eigenvalue decomposition, whose complexity is $O(N^3)$~\cite{pan1999complexity}. Similar operations can be performed on $\bold{S}$ and $\bold{T}$. By the analysis in Appendix~E, the complexity of obtaining an $\varepsilon$-approximation of the solution is $N\log^2(\frac{1}{\varepsilon})$. Assume that the numbers of outer and inner iterations are $N_{outer}$ and $N_{inner}$, respectively. Then the total complexity of the algorithm will be $O(N_{outer}N_{inner}(N^3+N\log^2(\frac{1}{\varepsilon})) )$, where $N_{outer}$ and $N_{inner}$ are determined by the convergence condition and the second-order functional attributes of $G(\bold{\Psi})$. As a smaller value for the objective function is obtained in each iteration, the algorithm will converge to a stationary point. With the statistical CSI, the phase shifts may not need to be updated frequently so that the overall complexity of Algorithm 1 is not high in real systems.

\begin{algorithm} 
\caption{Gradient Descent Algorithm for the Phase Shift Matrix $\bold{\Psi}$} 
\label{gra_alg} 
\begin{algorithmic}[1] 
\REQUIRE  $\bm{\theta}^{\left(0 \right)}$, initial stepsize $\alpha_{0}$, scaling factor $0<c<1$ and control parameter $0<\beta<1$. 
Set $t=0$.
\REPEAT
\STATE Compute the gradient 
$$\nabla_{\bm{\theta}}G(\bold{\Psi})=  (\frac{\partial G(\bold{\Psi})}{\partial \theta_{1}}, \frac{\partial G(\bold{\Psi})}{\partial \theta_{2}},..., 
\frac{\partial G(\bold{\Psi})}{\partial \theta_{L}})^{T},$$ 
according to~(\ref{de1}) and its direction $\bold{d}^{(t)}=\frac{\nabla_{\bm{\theta}}G(\bold{\Psi})}{\|\nabla_{\bm{\theta}}G(\bold{\Psi}) \|}$.
\STATE $\alpha\leftarrow\alpha_{0}$.
\WHILE{$G(\bold{\Psi}^{(t)})-G( \diag[\exp(\jmath\bm{\theta}^{(t)}-\alpha \jmath \bold{d}^{(t)}) ])<\alpha\beta \| \nabla_{\bm{\theta}}G(\bold{\Psi}^{(t)})\|$} 
\STATE  $\alpha \leftarrow c\alpha$.
\ENDWHILE 
	\STATE $\bm{\theta}^{(t+1)} \leftarrow \bm{\bold{\theta}}^{(t)}-\alpha \bold{d}^{(t)}$.
	\STATE $\bold{\Psi}^{(t+1)} \leftarrow \diag[\exp(\jmath\bm{\theta}^{(t+1)})]$.
\STATE $t \leftarrow  t+1$.
\UNTIL Convergence.
\ENSURE  $\bm{\theta}^{(t)}, \bold{\Psi}^{(t)}$.
\end{algorithmic}
\end{algorithm}

\subsection{Finite SNR DMT for IRS-aided Systems}
\label{sec_dmt}
In this section, we investigate the \textit{finite-SNR DMT} of IRS-aided systems. DMT was proposed in~\cite{zheng2003diversity} to characterize the trade-off between diversity and multiplexing gain, which 
 is typically referred to as the \textit{asymptotic-SNR DMT}. 
Here, we investigate the finite-SNR DMT~\cite{narasimhan2006finite},~\cite{loyka2010finite}, which provides more insightful information for  the low and moderate SNR regimes.


By the definition of finite-SNR DMT~\cite{loyka2010finite}, the multiplexing gain is given by 
\begin{equation}
\label{def_mul}
m=\frac{kR}{\E I(\rho)}\approx\frac{kR}{ \overline{I}(\rho)},
\end{equation} 
where  $k=\min(L,M,N)$ and $R$ is the data rate. By approximating the outage probability using the CLT in Theorem~\ref{clt_mi}, we obtain the following theorem regarding the finite-SNR DMT for IRS-aided systems in Rayleigh channels with equal power allocation among different antennas.
\begin{theorem} 
\label{fin_dmt}
The DMT in the finite-SNR regime can be approximated by
\begin{equation}
\label{dmt_exp}
\begin{aligned}
d(m,\rho)
&=\frac{z(m-k)}{d\sqrt{2\pi}}\frac{\exp(-\frac{(m-k)^2H^2(z)}{2k^2})H'(z)}{\Phi(\frac{(m-k)H(z)}{k} )},
\end{aligned}
\end{equation}
where $H(z) =\frac{\overline{I}(z)}{\sqrt{V(z)}}$. $\overline{I}(z)$ is obtained by replacing $\rho$ in $\overline{I}(\rho)$ with $z=\rho^{-1}$, and the same holds for $V(z)$. $H'(z)=\frac{\mathrm{d} H(z)  }{\mathrm{d} z}$ is given as 
\begin{equation}
\label{de_z}
\begin{aligned}
H'(z)=\frac{ \overline{I}'(z) V(z)-\frac{1}{2} \overline{I}(z)V'(z)}{V(z)^{\frac{3}{2}}},
\end{aligned}
\end{equation}
where
\begin{equation}
\label{de_z}
\begin{aligned}
&\overline{I}'(z)=-\frac{N}{z}+ \Tr\bold{Q}_{R}, 
\\
&V'_{l}(\rho)=\frac{\gamma_{S}\gamma_{T}'+\gamma_{S}'\gamma_{T}}{\Delta_Y}+ \frac{\gamma_{R}\Gamma'+\gamma_{R}'\Gamma}{\Delta_X},
\\
&\Gamma'=\frac{M}{L\delta^2}[\frac{2\gamma_{T,I}\gamma_{T,I}'\gamma_{S}}{\Delta_{Y}} 
+\frac{\gamma_{T,I}^2(\gamma'_{S}+\gamma_{S}^{2}\gamma_{T}')}{\Delta_{Y}^2} 
\\
&+ 2gg'\gamma_{T}+g^2\gamma_{T}' ]-\frac{2M\delta'}{L\delta^3}(\frac{\gamma_{S}\gamma_{T,I}^2}{1-\gamma_{S}\gamma_{T}} 
+ g^2\gamma_{T}),
\\
&\gamma_{R}'=\frac{ -2M\eta_{R}(\delta g'\overline{g} + \delta g\overline{g}' -g\overline{g}\delta' )}{L\delta^2}-2\eta_{R,I},
\\
&\gamma_{S}'
=-2\overline{g}' \eta_{S} +\frac{2\delta' \eta_{S,I}}{\delta^2},
\\
&\gamma_{T}'=-2g'\eta_{T},
\\
&[\delta',g',\overline{g}']^{T}=\bold{A}^{-1}[-\delta\gamma_{R,I},0,0]^{T}.
\\
\end{aligned}
\end{equation}
$\bold{A}$ is defined in~(\ref{A_def}) and other symbols are given in Table~\ref{var_list}.
\end{theorem}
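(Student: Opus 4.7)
The plan is to apply the Loyka--Narasimhan finite-SNR DMT definition $d(m,\rho)=-\partial \log P_{out}/\partial\log\rho$ to the CLT-based Gaussian approximation of $P_{out}$ from Theorem~\ref{clt_mi}. Substituting $R=m\overline{I}(\rho)/k$ from~(\ref{def_mul}) and switching to $z=1/\rho$, the approximation takes the compact form $P_{out}\approx\Phi((m-k)H(z)/k)$ with $H(z)=\overline{I}(z)/\sqrt{V(z)}$. Since $\partial/\partial\log\rho=-z\,\partial/\partial z$, a single chain-rule step reproduces the Gaussian pdf-to-cdf factor displayed in~(\ref{dmt_exp}), so the remaining task is to evaluate $H'(z)$, i.e., $\overline{I}'(z)$ and $V'(z)$.

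For $\overline{I}'(z)$ I would exploit an envelope property of the deterministic equivalent: although $(\delta,g,\overline{g})$ depend on $z$, direct differentiation of~(\ref{the_mean}) using the fixed-point identities $\Tr\bold{R}\bold{Q}_R=L\delta$, $\Tr\bold{S}\bold{Q}_S=Mg$, and $\Tr\bold{T}\bold{Q}_T=M\overline{g}$ shows that $\partial\overline{I}/\partial\delta=\partial\overline{I}/\partial g=\partial\overline{I}/\partial\overline{g}=0$ at every solution of~(\ref{basic_eq}). Hence only the explicit $z$-dependence survives, and after rewriting the first log-det in~(\ref{the_mean}) as $-N\log z-\log\det\bold{Q}_R$, I obtain $\overline{I}'(z)=-N/z+\Tr\bold{Q}_R$ as stated.

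This envelope trick does not apply to $V(z)=-\log(1-\gamma_R\Gamma)-\log(1-\gamma_S\gamma_T)$, so I would chain-rule through the quantities in Table~\ref{var_list}, which ultimately requires $(\delta',g',\overline{g}')$. These derivatives follow from implicit differentiation of~(\ref{basic_eq}) in $z$, mirroring Lemma~\ref{de_del}: because $\bold{Q}_S$ and $\bold{Q}_T$ carry no explicit $z$-dependence, only the first equation contributes a source term, which after using $\partial\bold{Q}_R^{-1}/\partial z=\bold{I}_N$ and the resolvent identity $z\gamma_{R,I}=\delta-Mg\overline{g}\gamma_R/(L\delta)$ simplifies to $-\delta\gamma_{R,I}$. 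The coefficient matrix $\bold{A}$ is unchanged because it depends only on the fixed point, yielding $[\delta',g',\overline{g}']^{\top}=\bold{A}^{-1}[-\delta\gamma_{R,I},0,0]^{\top}$; the listed derivatives of $\gamma_R,\gamma_S,\gamma_T,\Gamma$ then follow mechanically by product and quotient rules applied to Table~\ref{var_list}.

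The main obstacle is not analytical but careful bookkeeping: each quantity in Table~\ref{var_list} couples to $(\delta,g,\overline{g})$ both directly and through $\bold{Q}_R,\bold{Q}_S,\bold{Q}_T$, and one must verify the envelope identity for $\overline{I}$ together with the reusability of $\bold{A}$ to avoid re-deriving the linear system from scratch. Once $\overline{I}'(z)$ and $V'(z)$ are assembled into $H'(z)=[\overline{I}'(z)V(z)-\tfrac{1}{2}\overline{I}(z)V'(z)]/V(z)^{3/2}$, multiplication by the Gaussian-ratio factor from the first paragraph reproduces~(\ref{dmt_exp}) directly.
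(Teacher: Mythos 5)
Your proposal is correct and follows exactly the route of the paper's Appendix C: plug the Gaussian outage approximation from Theorem~\ref{clt_mi} into the Loyka--Narasimhan definition, substitute $R=m\overline{I}(\rho)/k$, change variables to $z=1/\rho$ so that $\partial/\partial\log\rho=-z\,\partial/\partial z$, and chain-rule through $H(z)=\overline{I}(z)/\sqrt{V(z)}$. The paper stops at ``$H'(z)$ can be obtained by the chain rule,'' whereas you usefully spell out why $\overline{I}'(z)$ reduces to $-N/z+\Tr\bold{Q}_R$ via the envelope identity $\partial\overline{I}/\partial\delta=\partial\overline{I}/\partial g=\partial\overline{I}/\partial\overline{g}=0$ and why implicit differentiation of~(\ref{basic_eq}) reuses the same matrix $\bold{A}$ with the single source term $-\delta\gamma_{R,I}$ (via $z\gamma_{R,I}=\delta-Mg\overline{g}\gamma_R/(L\delta)$)---both observations are correct and consistent with the formulas stated in the theorem.
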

\begin{proof}
The proof of Theorem~\ref{fin_dmt} is given in Appendix~\ref{proof_dmt}.
\end{proof}
\begin{remark}
Similar to~(\ref{de1}), the derivatives in~(\ref{de_z}) are obtained by the chain rule. It can be observed that $H(z)$ plays an important role in the finite-SNR DMT. The expression can be further simplified by an approximation, which is given in the following proposition.
\end{remark}

\begin{proposition} (\textit{An approximation for the finite-SNR DMT}) The finite-SNR DMT in~Theorem~\ref{fin_dmt} can be approximated by
\begin{equation}
d(m,\rho) \approx -\frac{z(m-k)^2H(z)H'(z)}{k^2}.
\end{equation}
\end{proposition}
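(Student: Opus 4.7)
The plan is to reduce the approximation to a single standard tail estimate for the Gaussian, namely the Mills ratio $\phi(x)/\Phi(x)\sim -x$ as $x\to-\infty$. Starting from the exact expression for $d(m,\rho)$ in Theorem~\ref{fin_dmt} and absorbing the exponential into the standard Gaussian density $\phi(x)=\exp(-x^2/2)/\sqrt{2\pi}$, I would first rewrite it as
\begin{equation*}
d(m,\rho)=\frac{z(m-k)H'(z)}{k}\cdot\frac{\phi\!\left(\frac{(m-k)H(z)}{k}\right)}{\Phi\!\left(\frac{(m-k)H(z)}{k}\right)}.
\end{equation*}
This strips the formula down to one ratio of transcendental functions multiplied by a polynomial prefactor, so the whole approximation reduces to simplifying that ratio.

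Next, I would set $x:=(m-k)H(z)/k$. Since $H(z)>0$ and the multiplexing gain satisfies $m\in[0,k]$, the argument $x$ is non-positive, and in the small-outage regime of practical interest $|x|$ is moderate-to-large. Applying the Mills-ratio estimate $\phi(x)/\Phi(x)\approx -x$ valid as $x\to-\infty$, and substituting $-x=-(m-k)H(z)/k$ back into the displayed identity, the transcendental ratio collapses:
\begin{equation*}
d(m,\rho)\approx\frac{z(m-k)H'(z)}{k}\cdot\left(-\frac{(m-k)H(z)}{k}\right)=-\frac{z(m-k)^2 H(z)H'(z)}{k^2},
\end{equation*}
which is exactly the claimed formula.

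The calculation is short; the only judgment call is the regime of validity. The first-order Mills approximation is sharp when $|(m-k)H(z)/k|$ is not close to zero, i.e., when $m$ is bounded away from its maximum $k$ or the SNR is moderate-to-high so that $H(z)$ is large. Near $m=k$ both the exact expression and the approximation vanish quadratically in $(m-k)$, so the functional form remains faithful; a tighter estimate in that region would require expanding $\phi/\Phi$ around zero, but I do not view this as a genuine obstacle, since finite-SNR DMT curves are typically reported exactly in the tail regime where the Mills substitution is sharp. No RMT machinery beyond what Theorem~\ref{fin_dmt} already supplies is needed for this step.
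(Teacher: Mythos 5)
Your derivation is correct, and it is in fact the only Q-function approximation that actually reproduces the stated formula. You rewrote the exact finite-SNR DMT from Theorem~\ref{fin_dmt} as $\tfrac{z(m-k)H'(z)}{k}\cdot \tfrac{\phi(x)}{\Phi(x)}$ with $x=(m-k)H(z)/k\le 0$, and then applied the Mills-ratio asymptotic $\phi(x)/\Phi(x)\approx -x$, which is equivalent to $Q(y)\approx \tfrac{1}{y\sqrt{2\pi}}e^{-y^2/2}$ for $y=|x|$. Substituting gives exactly $-\tfrac{z(m-k)^2 H(z)H'(z)}{k^2}$.

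Interestingly, the paper's accompanying remark attributes the approximation to the Chernoff-type bound $Q(x)\le \tfrac{1}{2}e^{-x^2/2}$. If one substitutes that bound instead, the Gaussian exponentials in numerator and denominator cancel and one obtains $\tfrac{2z(m-k)H'(z)}{k\sqrt{2\pi}}$, which is linear rather than quadratic in $(m-k)$ and lacks the factor $H(z)$ — i.e., it does not reproduce the stated proposition. So the route you took (Mills ratio) is the one that actually produces the claimed expression, and the bound cited in the paper's remark appears to be a slip; your discussion of the validity regime ($|x|$ not small, i.e., $m$ bounded away from $k$ and/or $H(z)$ large) is the correct caveat attached to the Mills approximation.
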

\begin{remark}
This approximation is obtained by the approximation $Q(x)\le \frac{e^{-\frac{x^2}{2}}}{2},~x >0$~\cite{chiani2003new}. It can be observed that the finite-SNR DMT is highly related to the ratio between the mean and the standard deviation of the MI, i.e., $H(z)$. This indicates that the effect of the SNR on the finite-SNR DMT is represented through the term $-zH(z)H'(z)$.
\end{remark}

\section{How Large the IRS Needs to Be?}
\label{sec_siz}

\subsection{Impact of the Size of IRSs on EMI}
We first characterize the relation between the EMI and the size of the IRS. For simplicity, we assume that $M=N$ and focus on the parameter $\tau$,  i.e., the ratio between the number of antennas $N$ and the size of the IRS $L$. To study the performance gain obtained by increasing the size of the IRS, we focus on the case when $\tau<1$, which means that the size of the IRS is larger than the numbers of antennas at the transceiver. Results with $\tau>1$ can be handled similarly. Inspired by the concept of \textit{massive MIMO efficiency} proposed in~\cite{hoydis2013massive}, we define the concept called \textit{IRS efficiency} to characterize the efficiency of increasing the IRS size to achieve higher EMI. 
\begin{definition}
\label{def_eff}
For a given SNR $\rho$, the IRS efficiency $\eta \in (0,1)$ is defined as
\begin{equation}
\begin{aligned}
  \eta=\frac{\overline{I}(\rho)}{\overline{I}_{\infty}(\rho)},
\end{aligned}
 \end{equation}
 where $\overline{I}(\rho)$ denotes the EMI achieved by a given size of the IRS and $\overline{I}_{\infty}(\rho)$ represents the maximum throughput with infinitely large IRSs. Specifically, the IRS efficiency $\eta$ represents the percentage of the maximum throughput that is achieved by a given size of IRS.
  \end{definition}

Given $\eta$, we can determine the minimum size of the IRS needed to achieve at least the fraction $\eta$ of the asymptotic performance. To better illustrate the impact of the IRS size, we consider independent channels and ignore the optimization of phase shifts. According to Proposition~\ref{mean_pro}, for given $\rho$ and $M$, $\overline{I}(\rho)$ will be determined only by $\tau$. Next, we will investigate $\overline{I}_{\infty}(\rho)$.

\vspace{-0.3cm}

\subsection{Asymptotic Mean and Variance for the MI}
We first determine $\overline{I}_{\infty}(\rho)$ for a finite SNR $\rho$.
The roots of the equation in~(\ref{cubic_g}) can be given by Cardano's formula~\cite{zheng2016asymptotic} as
\begin{equation}
\begin{aligned}
\label{cubic_root_g}
g_{m}\!=\!\frac{e^{\frac{m\pi}{3}}\omega(\tau,\rho)+e^{-\frac{m\pi}{3}} {\omega}^{*}(\tau,\rho) -2 }{3}, m=1,2,3,
\end{aligned}
\end{equation}
where ${\omega}(\tau,\rho)$ and ${\omega}^{*}(\tau,\rho)$ are given by 
\begin{equation}
\label{cubic_w}
\begin{aligned}
&\omega(\tau,\rho)= (\sqrt{ (3\rho\tau-3\rho-1)^3+(1+\frac{9\rho}{2}+9\rho\tau)^2  }
\\
&+1+\frac{9\rho}{2}+9\rho\tau)^{\frac{1}{3}},
\\
&\omega(\tau,\rho)^{*}=\frac{(1-3\rho\tau+3\rho)}{\omega(\tau,\rho)}.
\end{aligned}
\end{equation}
$(\cdot)^{\frac{1}{3}}$ and $(\cdot)^{\frac{1}{2}}$ take any cubic root and square root, respectively.
It is straightforward to verify that the term under the square root operator is negative when $\tau=0$ and there are three real roots.
In fact, by Vieta's formulas, the only positive root is the one we desire. When the size of IRS goes to infinity, i.e., $\tau \rightarrow 0$, the asymptotic solution of the cubic equation is given by
\begin{equation}
\label{g_inf}
g_{\infty}=\frac{2(\mathrm{Re}\left\{\omega(0,\rho)\right\}-1 )}{3},
\end{equation}
where $\omega(0,\rho)$ is chosen as the one whose real and imaginary parts are both positive. In this case, the asymptotic EMI with $\tau \rightarrow 0$ is given by
\begin{equation}
\label{EMI_inf}
\begin{aligned}
 \overline{I}_{\infty}(\rho)
  &=2N\log(1+g_{\infty})+  \frac{N\rho}{(1+g_{\infty})^2} -\frac{2N g_{\infty}}{1+g_{\infty}}.
\end{aligned}
 \end{equation}
In fact, the asymptotic mean and variance can be obtained by an easier approach as they are equal to those of the MI for a single-hop MIMO system over independent Rayleigh channels, as shown in the following proposition.
\begin{proposition}
When $N=M$, the mean and variance of the MI of a single-hop and independent Rayleigh channel are given by:
\begin{equation}
\label{ray_mn}
\begin{aligned}
\overline{I}_{Rayleigh}(\rho)&=N\log( \delta+ 1+ \rho)-\frac{N\delta}{1+\delta},
\\
V_{Rayleigh}(\rho)&=\log[\frac{(1+\delta)^2}{2\delta+1}],
\end{aligned}
 \end{equation}
 where $\delta$ is the positive solution of $\delta^2+\delta-\rho=0$. It holds true that
 \begin{equation}
 \label{eq_ray_lim}
\overline{I}_{\infty}(\rho)= \overline{I}_{Rayleigh}(\rho), ~~V_{\infty}(\rho)= V_{Rayleigh}(\rho).
 \end{equation}
 \end{proposition}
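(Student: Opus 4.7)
The plan is to prove the proposition in two halves. First, I would establish the single-hop formulas in~(\ref{ray_mn}) by invoking the standard deterministic-equivalent and CLT results for square Rayleigh MIMO channels (the $N=M$ case of Marchenko--Pastur): the mean $\overline{I}_{Rayleigh}(\rho)$ is the classical Shannon transform evaluated at the fixed point of $\delta = \Tr(\frac{1}{\rho}\bold{I} + \frac{1}{1+\delta}\bold{I})^{-1}/N$, which after simplification yields $\delta^2+\delta-\rho=0$, and $V_{Rayleigh}(\rho)$ is the corresponding asymptotic variance. These are a direct specialization of Theorem~\ref{mean_app} and Theorem~\ref{clt_mi} to the case where the channel matrix $\bold{H}$ is a single i.i.d. Gaussian matrix, so no new analysis is required for this half.

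For the identities in~(\ref{eq_ray_lim}), the key step is to study the cubic equation~(\ref{cubic_g}) in the limit $\tau\to 0$, where it reduces to
\begin{equation}
g^{3}+2g^{2}+(1-\rho)g-\rho=0.
\end{equation}
I would verify by substitution, using $\delta^{2}=\rho-\delta$, that $g=\delta$ is a root. Indeed, $\delta^{3}=\delta(\rho-\delta)=\rho\delta-(\rho-\delta)$, so the left-hand side collapses to zero. Uniqueness of the positive root is inherited from the constraints $g>0$ and $1+(1-\tau)g>0$ in Proposition~\ref{mean_pro}, combined with the fact that $\delta>0$ is the unique positive solution of $\delta^{2}+\delta-\rho=0$. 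Hence $g_{\infty}=\delta$, which simultaneously links~(\ref{g_inf}) and~(\ref{cubic_root_g}) to the single-hop fixed-point variable.

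Next I would substitute $g=\delta$ into the asymptotic EMI~(\ref{EMI_inf}) and exploit $\rho=\delta(1+\delta)$: this gives $\delta+1+\rho=(1+\delta)^{2}$, and hence
\begin{equation}
2N\log(1+\delta)+\frac{N\rho}{(1+\delta)^{2}}-\frac{2N\delta}{1+\delta}=N\log(\delta+1+\rho)-\frac{N\delta}{1+\delta},
\end{equation}
which matches $\overline{I}_{Rayleigh}(\rho)$. For the variance I would take the $\tau\to 0$ limit of~(\ref{ref_rek}). Using again $\delta^{2}=\rho-\delta$, a short computation gives $2\delta^{3}+2\delta^{2}=2\rho\delta$, so $\rho+2g_{\infty}^{3}+2g_{\infty}^{2}=\rho(1+2\delta)$. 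Therefore
\begin{equation}
V_{\infty}(\rho)=\log\!\left[\frac{\rho(1+\delta)^{2}}{\rho(1+2\delta)}\right]=\log\!\left[\frac{(1+\delta)^{2}}{2\delta+1}\right]=V_{Rayleigh}(\rho),
\end{equation}
completing the proof of~(\ref{eq_ray_lim}).

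The only genuinely nontrivial point, and the one I would be most careful about, is the root-selection issue: Cardano's formula~(\ref{cubic_root_g}) produces three candidate branches and one must confirm that the branch singled out by Proposition~\ref{mean_pro} (positive $g$ with $1+(1-\tau)g>0$) continuously deforms to the positive root $\delta$ of $\delta^{2}+\delta-\rho=0$ as $\tau\downarrow 0$. I would handle this by a continuity/implicit-function argument on $g(\tau)$ along the admissible branch, noting that the discriminant stays of a definite sign in a neighbourhood of $\tau=0$ so that no branch crossings occur. Beyond this, the proposition reduces to the bookkeeping shown above.
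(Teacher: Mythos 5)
Your proof is correct and follows essentially the same route as the paper's. Two small remarks on presentation and attribution. First, for the single-hop formulas~(\ref{ray_mn}) the paper does not specialize Theorem~\ref{mean_app}/Theorem~\ref{clt_mi} (which concern the \emph{two-hop} product channel $\bold{R}^{1/2}\bold{X}\bold{S}^{1/2}\bold{Y}\bold{T}^{1/2}$); it cites Proposition~1 of~\cite{hachem2008new} for the single Rayleigh matrix directly. Your fixed-point equation reducing to $\delta^2+\delta-\rho=0$ is the right content, but the attribution to the two-hop theorems of this paper is not quite right -- one cannot recover a single i.i.d.\ Gaussian $\bold{H}$ as a finite-dimensional special case of that product model without the very $L\to\infty$ limit the proposition is trying to justify. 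Second, rather than verifying by substitution that $g=\delta$ solves the $\tau=0$ cubic and then separately arguing uniqueness and branch continuity, the paper simply observes that at $\tau=0$ the cubic factors as $(g+1)(g^2+g-\rho)=0$; since $g>0$ forces $g+1\ne 0$, the cubic ``degenerates'' to the quadratic $g(1+g)=\rho$, giving $g_\infty=\delta$ immediately and making the root-selection worry you flag unnecessary. Your subsequent algebraic manipulations of the EMI and variance (using $\rho=\delta(1+\delta)$, $\delta+1+\rho=(1+\delta)^2$, and $2\delta^3+2\delta^2=2\rho\delta$) match the paper exactly.
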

\begin{proof} The expression in~(\ref{ray_mn}) can be obtained as a special case of Proposition 1 in~\cite{hachem2008new}. Then, we have $g_{\infty}=\delta$ since the cubic equation in~(\ref{cubic_g}) degenerates to a quadratic one, i.e., 
$g(1+g)=\rho$. Therefore, it follows from~(\ref{EMI_inf})
\begin{equation}
\begin{aligned}
 \overline{I}_{\infty}(\rho)
&=N \log[(1+g_{\infty})^2] +\frac{N g_{\infty}}{1+g_{\infty}} -\frac{2N g_{\infty}}{1+g_{\infty}}
\\
  &=N\log[1+g_{\infty}+g_{\infty}(1+g_{\infty})]- \frac{N g_{\infty}}{1+g_{\infty}}
  \\
  &=N\log(1+\rho+g_{\infty})- \frac{Ng_{\infty}}{1+g_{\infty}}
  \\
  &=\overline{I}_{Rayleigh}(\rho).
\end{aligned}
 \end{equation}
With similar manipulations, the asymptotic variance can be written as 
\begin{equation}
\begin{aligned}
 V_{\infty}(\rho)
  &=\log[\rho (1+g_{\infty})^2]-\log(\rho+2\rho g_{\infty})
  \\
  &=\log[\frac{(1+g_{\infty})^2}{2g_{\infty}+1}]= V_{Rayleigh}(\rho).
\end{aligned}
  \vspace{-0.3cm}
 \end{equation}
\end{proof}
\begin{remark}
(\ref{eq_ray_lim}) indicates that the maximum throughput and the asymptotic variance of the MI with the two-hop IRS channel are equal to those of a single-hop channel if we do not consider correlations and path loss. 
\end{remark}




Based on~(\ref{EMI_inf}), we can determine how large the IRS needs to be to achieve a certain percentage of the asymptotic performance. The results for the general correlated cases can be derived similarly by Theorem~\ref{mean_app}, but with an implicit expression. 
 The impact of the IRS size on the finite-SNR DMT is omitted here due to complex calculations and approximations. Nevertheless, it can be obtained similarly based on Theorem~\ref{fin_dmt}. Specifically, we have obtained the approximation for $g$, the EMI, and the variance. To derive the finite-SNR DMT over correlated channels, we need to calculate the derivatives of the EMI and the variance with respect to $\rho$ by Theorem~\ref{fin_dmt} and omit the higher order terms.






In the above derivation, we assume that the transmitter and receiver have the same number of antennas. The derivations for unequal cases can also be performed in similar ways. Furthermore, the correlated case can also be investigated numerically using Theorem~\ref{mean_app} and Theorem~\ref{clt_mi}.

\subsection{The Impact of the IRS Size in the High SNR Regime}
At high SNRs, the EMI and outage probability can be approximated as a more explicit function of $\tau$ compared with the cubic root form in~(\ref{cubic_root_g}), which is presented by the following theorem.
\begin{theorem} (High SNR case) When $\tau<1$ and $\rho \gg 1$, the only positive solution of the cubic equation in~(\ref{cubic_g}) can be approximated as
\label{high_app}
\begin{equation}
\begin{aligned}
g&=\sqrt{(1-\tau)\rho }+[\frac{1}{2(1-\tau)}-1]+o(1)
\\
&=a\rho^{\frac{1}{2}}+b+o(1).
\end{aligned}
\end{equation}
When $N$ is fixed and $ L \ll \sqrt{\rho}$, the mean and variance of the MI can be approximated by
\begin{equation}
\label{appro_h_snr}
\begin{aligned}
\overline{I}(\rho)&=N[\log(\rho)-2-(\frac{1}{\tau}-1)\log(1-\tau)+\frac{2}{a}\rho^{-\frac{1}{2}}] 
\\&+o(\rho^{-\frac{1}{2}}),  
\\
\text{and}
\\
V(\rho)&=\frac{1}{2}\log(\frac{\rho}{4a^2})+\frac{2a^2(1-b)-1}{2a^3}\rho^{-\frac{1}{2}}+o(\rho^{-\frac{1}{2}}),
\end{aligned}
\end{equation}
respectively. Furthermore, when $N$ is fixed and $L$ is comparable to $\sqrt{\rho}$ or far larger than $\sqrt{\rho}$, the mean and variance of the MI can be approximated by
\begin{equation}
\label{appro_h_snr_h_l}
\begin{aligned}
\overline{I}(\rho)&=N[\log(\frac{\rho}{e})-\frac{2N}{L}+2\rho^{-\frac{1}{2}}]+o(\rho^{-\frac{1}{2}}),
\\
V(\rho)&=\frac{1}{2}\log(\frac{\rho}{4})+\frac{N}{2L}+\rho^{-\frac{1}{2}}+o(\rho^{-\frac{1}{2}}).
\end{aligned}
\end{equation}
The outage probability can be further approximated by $P_{out}(R)\approx \Phi\left(\frac{R-\overline{I}(\rho)}{\sqrt{V(\rho)}}\right)$.
\end{theorem}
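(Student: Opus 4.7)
The plan is to reduce everything to a single asymptotic expansion of $g$, then propagate it through the EMI formula in Proposition~\ref{mean_pro} and the variance formula in Proposition~\ref{clt_ray}. First, I would substitute the ansatz $g=a\rho^{1/2}+b+o(1)$ into the cubic equation~(\ref{cubic_g}) and match the coefficients of the decreasing powers of $\rho$. The $\rho^{3/2}$ balance gives $a^3-(1-\tau)a=0$, forcing $a=\sqrt{1-\tau}$ (the other real roots are discarded by the positivity constraint $g>0$ and the side condition $1+(1-\tau)g>0$). The $\rho$ balance then yields $2(1-\tau)b+2(1-\tau)-1=0$, i.e.\ $b=1/[2(1-\tau)]-1$, which proves the first equation of the theorem.

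Next I would plug this expansion into $\overline{I}(\rho)$ and $V(\rho)$ term by term. For the EMI, I would factor $a\sqrt{\rho}$ out of $1+g$ and $(1+g)^2$, use $\log(1+x)=x-x^2/2+\cdots$ on the resulting small perturbations, and simplify. The three contributions combine as
\begin{equation*}
2N\log(1+g)=N\log\rho+N\log(1-\tau)+\frac{N}{a(1-\tau)\sqrt{\rho}}+o(\rho^{-1/2}),
\end{equation*}
\begin{equation*}
\frac{N}{\tau}\log\!\Bigl(1+\frac{\tau\rho}{(1+g)^2}\Bigr)=-\frac{N}{\tau}\log(1-\tau)-\frac{N}{(1-\tau)^{3/2}\sqrt{\rho}}+o(\rho^{-1/2}),
\end{equation*}
\begin{equation*}
-\frac{2Ng}{1+g}=-2N+\frac{2N}{a\sqrt{\rho}}+o(\rho^{-1/2}),
\end{equation*}
and their sum collapses to $N[\log\rho-2-(1/\tau-1)\log(1-\tau)+2\rho^{-1/2}/a]$. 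The variance part uses Proposition~\ref{clt_ray}: writing $V(\rho)=\log[\rho(1+g)^2]-\log(\rho+2g^3+2g^2)$, expanding each logarithm around its leading term, and simplifying gives $\tfrac12\log(\rho/4a^2)+[2a^2(1-b)-1]/(2a^3)\,\rho^{-1/2}+o(\rho^{-1/2})$, which is~(\ref{appro_h_snr}).

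For the second regime, in which $\tau=M/L=N/L$ is small (either comparable to $\rho^{-1/2}$ or much smaller), I would further Taylor-expand~(\ref{appro_h_snr}) around $\tau=0$. The key expansions are $1/a=(1-\tau)^{-1/2}=1+\tau/2+O(\tau^2)$, $\log(1-\tau)=-\tau+O(\tau^2)$ and $-(1/\tau-1)\log(1-\tau)=1-\tau/2+O(\tau^2)$; inserting these and retaining terms up to order $\rho^{-1/2}$ (after substituting $\tau=N/L$) yields~(\ref{appro_h_snr_h_l}). The outage approximation then follows immediately from Theorem~\ref{the_out} by substituting the resulting closed-form expressions of $\overline{I}(\rho)$ and $V(\rho)$ into $\Phi\bigl((R-\overline{I}(\rho))/\sqrt{V(\rho)}\bigr)$.

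The main obstacle will be bookkeeping rather than analysis: making sure the error terms remain $o(\rho^{-1/2})$ uniformly under the joint limit $\rho\to\infty$, $\tau\to0$ in the second regime. In particular, one must verify that mixed contributions of order $\tau/\sqrt{\rho}$, $\tau^2$, and $\rho^{-1}$ all fall into $o(\rho^{-1/2})$ under the hypothesis $L\gtrsim\sqrt{\rho}$, which constrains how fast $\tau$ is allowed to tend to zero relative to $\rho$. A secondary care point is the selection of the correct branch of the cubic: Vieta's relations together with $g>0$ and $1+(1-\tau)g>0$ single out the unique admissible root, a fact inherited from Proposition~\ref{mean_pro}.
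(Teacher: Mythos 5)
The paper provides no proof of Theorem~\ref{high_app} in any appendix, so there is no "paper's own proof" to compare against. Your strategy — ansatz-matching on the cubic, then term-by-term asymptotic expansion of the closed forms in Proposition~\ref{mean_pro} and Proposition~\ref{clt_ray}, then a secondary Taylor expansion in $\tau$ for the $L\gtrsim\sqrt\rho$ regime — is the natural (and almost certainly intended) route, and your intermediate identities check out: the coefficient matching at orders $\rho^{3/2}$ and $\rho$ correctly gives $a=\sqrt{1-\tau}$ and $b=\tfrac{1}{2(1-\tau)}-1$; the three displayed EMI pieces are right (using $b+1=\tfrac{1}{2a^2}$ so that $\tfrac{2(b+1)}{a}=\tfrac{1}{a(1-\tau)}=\tfrac{1}{(1-\tau)^{3/2}}$, which makes the two $\rho^{-1/2}$ terms cancel and leaves only $\tfrac{2}{a}\rho^{-1/2}$); and the variance reduction using $6a^2b+2a^2+1=8a^2b+4a^2$ together with $\tfrac{2a^2(1-b)-1}{2a^3}=-\tfrac{2b}{a}$ reproduces~(\ref{appro_h_snr}) and, at $\tau=0$, gives $a=1$, $b=-\tfrac12$, hence coefficient $1$ on $\rho^{-1/2}$ in~(\ref{appro_h_snr_h_l}).

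There is, however, one place where you assert a conclusion that your own expansion does not deliver. Carrying your $\tau\to 0$ expansions to completion, the EMI in the second regime becomes
\begin{equation*}
\overline{I}(\rho)=N\Bigl[\log(\rho/e)-\frac{\tau}{2}+2\rho^{-1/2}\Bigr]+o(\rho^{-1/2})
 =N\Bigl[\log(\rho/e)-\frac{N}{2L}+2\rho^{-1/2}\Bigr]+o(\rho^{-1/2}),
\end{equation*}
with coefficient $-N/(2L)$, not the $-2N/L$ printed in~(\ref{appro_h_snr_h_l}); the cross terms $\tau\rho^{-1/2}$ and $\tau^2$ are indeed $o(\rho^{-1/2})$ under $L\gtrsim\sqrt{\rho}$, so there is no hidden contribution to rescue the stated coefficient. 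The variance line of~(\ref{appro_h_snr_h_l}) has $+N/(2L)$, which is exactly the $+\tau/2$ you get from $-\tfrac12\log(1-\tau)$, so the two halves of the displayed equation are internally inconsistent with each other unless the EMI coefficient is also $N/(2L)$. You should not simply claim "yields~(\ref{appro_h_snr_h_l})"; instead, explicitly carry the final substitution and flag the factor-of-four discrepancy with the stated theorem (most plausibly a typo in the paper, given the consistent $\tau/2$ structure in both mean and variance). As written, your proposal proves a corrected version of the second display, not the one stated.
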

\begin{remark}
\label{h_limit_p}
\textbf{Asymptotic performance}:
At high SNRs, as $L$ goes to infinity, $\overline{I}(\rho)$ and $V(\rho)$ converge to an asymptotic value as shown in~(\ref{appro_h_snr_h_l}), which is dominated by the term $\log(\rho)$.
\end{remark}
\begin{remark}
\label{fast_con}
\textbf{Fast convergence}:
$\overline{I}(\rho)$ and $V(\rho)$ converge fast to the asymptotic values. When $L$ is large, the rate of convergence to the asymptotic performance is $O(\frac{1}{L})$. This indicates that the gain of EMI achieved by increasing the size of the IRS decreases when the IRS size is getting larger. 
\end{remark}
\begin{remark}
\label{snr_vs_irs}
\textbf{SNR vs. size of IRS}:
We can observe from~(\ref{appro_h_snr_h_l}) that the impact of the IRS size is very limited compared with that of the SNR as the mean and variance are dominated by the SNR.
\end{remark}
\begin{remark}
\textbf{Comparison with the single-hop Rayleigh channel}:
At high SNRs, the mean and variance of the MI for a single-hop i.i.d. Rayleigh channel is given in~\cite{loyka2010finite}:
\begin{equation}
\label{appro_h_snr_ray}
\begin{aligned}
\overline{I}(\rho)&=N[\log(\frac{\rho}{e})+2\rho^{-\frac{1}{2}}]+o(\rho^{-\frac{1}{2}}),
\\
V(\rho)&=\frac{1}{2}\log(\frac{\rho}{4})+\rho^{-\frac{1}{2}}+o(\rho^{-\frac{1}{2}}).
\end{aligned}
\end{equation}
Comparing the results in~(\ref{appro_h_snr_ray}) and~(\ref{appro_h_snr_h_l}), we can observe that the difference induced by the IRS is of order $O(\frac{1}{L})$, which indicates that these two results converge as the IRS size goes to infinity.
\end{remark}
The analysis in this section revealed the effect of increasing the size of the IRS, which provides guidance on the design of practical systems. In fact, as indicated by the square law~\cite{wu2019towards},~\cite{wu2019intelligent}, the impact of the IRS size is not only related to the physical size but also the passive beamforming gains introduced by designing phase shifts. In our analysis, the impact of the phase shifts is ignored as there are no closed-form optimal phase shifts for the correlated IRS-aided MIMO system.


\section{Numerical Results}
\label{simu}
In this section, simulations are performed to validate the theoretical results. Here we adopt the exponential correlation model. Specifically, the entries of $\bold{R}_{i}, \bold{T}_{i}, i=1,2$, can be represented by $[\bold{C}(\mu)]_{i,j}=\mu^{|i-j|}$. The unit of the preset transmission rate $R$ is bits/s/Hz. We assume the distance from the BS to the IRS and that from the IRS to the UE, denoted by $d_{BS-IRS}$ and $d_{IRS-UE}$, respectively, are both $10~\mathrm{m}$\footnote{The channel model utilized in this paper is proposed for the far-field region, i.e., the transimission distance is larger than the Rayleigh distance. As the MIMO/IRS size becomes larger, the Rayleigh distance will increase. The simulation setting is adopted from previous works and for illustration purpose only. Caution should be exercised when the antenna size is very large.}. The distance-dependent path loss is given by
\begin{equation}
L(d)=C_{0}(\frac{d}{D_{0}})^{-\alpha},
\end{equation}
where $C_{0}=-30~\mathrm{dB}$ is the path loss at the reference distance $D_{0}=1~\mathrm{m}$~\cite{wu2019intelligent}. $d$ denotes the distance of the link, and $\alpha$ denotes the path loss exponent. The path loss exponent of the link from the IRS to the UE is $\alpha_{IRS-UE}=3$ and that from the BS to the IRS is $\alpha_{BS-IRS}=2$~\cite{wu2019intelligent}. 
In the following figures, ``Ana'' and ``Sim'' will be used to represent the analytical results and Monte-Carlo simulations, respectively.

\subsection{Outage Probability Approximation}
First, we consider the case when there is no correlation at the transmitter side, for which both the proposed method and the method based on Mellin transform~\cite{shi2021outage} can be applied. The parameters are set to be $M=L=N=3$, $\bold{R}_{1}=\bold{R}_{2}=\bold{T}_{1}=\bold{C}(\mu)$ with $\mu=0.5$, and $\bold{T}_{2}=\bold{I}_{M}$. The power of the noise $\sigma^2=-114.7~ \mathrm{dBm}$. The phase shift matrix is set to be $\bold{\Psi}=\bold{I}_{L}$. The Monte-Carlo simulation values are computed over $10^{8}$ realizations, and we use the variance in~(\ref{low_var}) for the small $L$. It can be observed from Fig.~\ref{comp_app} that the performance of the proposed method is almost the same as the one based on Mellin transform but with a simpler expression. 

In Fig.~\ref{corr_low_rank}, we consider the correlated case with $\bold{T}_{2}=\bold{C}(0.5)$ and plot the outage probability with respect to the rate threshold when $L=3, 16, 32$, respectively. Markers represent the simulation results. It can be observed that the proposed method works well when there are correlations at both the transmitter and receiver sides. 

\subsection{Effectiveness of the Proposed Optimization Algorithm}
\begin{figure}[t!]
\centering\includegraphics[width=0.35\textwidth]{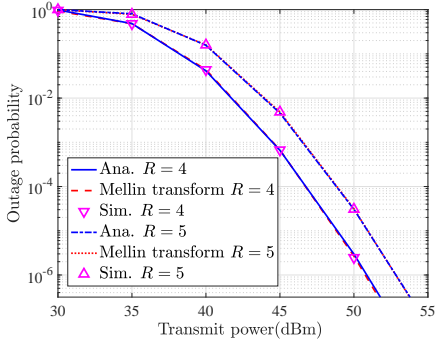}
\vspace{-0.3cm}
\caption{Performance comparison with the Mellin transform-based method in~\cite{shi2021outage}.}
\label{comp_app}
\vspace{-0.5cm}
\end{figure}
\begin{figure}[t!]
\centering\includegraphics[width=0.35\textwidth]{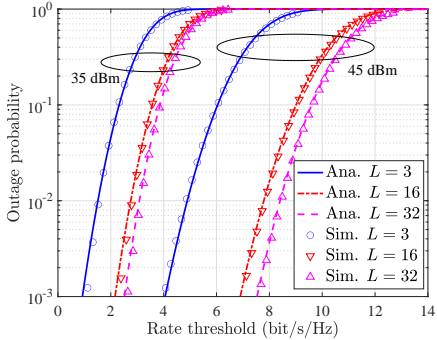}
\vspace{-0.3cm}
\caption{General correlated and rank-deficient cases.}
\label{corr_low_rank}
\vspace{-0.5cm}
\end{figure}
\begin{figure}[t!]
\centering\includegraphics[width=0.35\textwidth]{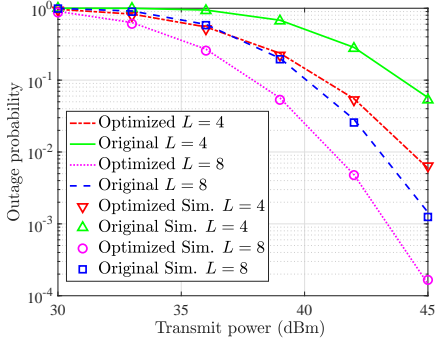}
\vspace{-0.3cm}
\caption{Effectiveness of the optimization algorithm.}
\label{Opt_VS_Ori}
\vspace{-0.5cm}
\end{figure}
\begin{figure}[t!]
\centering\includegraphics[width=0.35\textwidth]{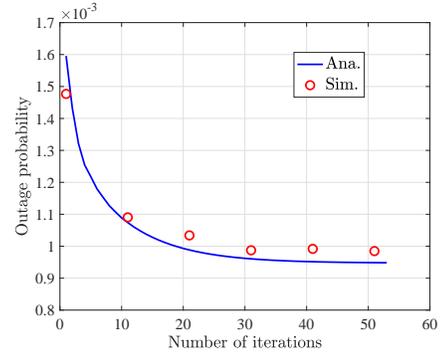}
\vspace{-0.3cm}
\caption{Convergence of Algorithm~\ref{gra_alg}.}
\label{iter_values}
\vspace{-0.3cm}
\end{figure}
Fig.~\ref{Opt_VS_Ori} verifies the effectiveness of Algorithm~\ref{gra_alg} in minimizing the outage probability where $M=N=4$ and $\mu=0.8$ is used for all correlation matrices. The power of the noise $\sigma^2=-116~ \mathrm{dBm}$. The initial phase shifts are set to be ${\psi}_{i}=e^{\frac{\jmath2\pi i}{L}}, i=1,2,...,L$. The initial step size is set as $\alpha_{0}=0.0005$ and the scale factor $\beta=0.5$. It can be observed that the outage probability is efficiently improved by Algorithm~\ref{gra_alg}. Fig.~\ref{iter_values} shows the outage probability in each outer iteration when $L=32$, from which  we can observe that the outage probability converges.

\subsection{Impact of Correlations at the Transceiver}
In Fig.~\ref{corr_tr}, we investigate the impact of the correlations at the transceiver. We set the correlation coefficients at the IRS to be $0.5$ with $L=16$, while the coefficients of the correlation matrices at the transceiver, i.e., $\bold{R}_1=\bold{T}_2$, are set to range from 0 to $0.9$ with $M=N=4$ and $\sigma^2=-116~ \mathrm{dBm}$. The outage probability is computed according to the phase shifts optimized by Algorithm~\ref{gra_alg}. It can be observed that the correlation at the transceiver has a negative effect on the outage performance. Similar observations have been reported in~\cite{shi2021outage},~\cite{zhang2021sum}.

\begin{figure}[t!]
\centering\includegraphics[width=0.35\textwidth]{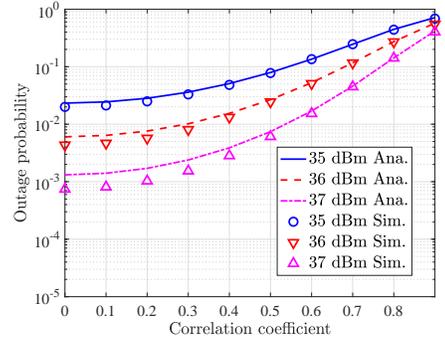}
\vspace{-0.3cm}
\caption{Impact of correlations at the transceiver.}
\label{corr_tr}
\vspace{-0.5cm}
\end{figure}

\subsection{Finite SNR DMT}
Fig.~\ref{dmt_r} validates the accuracy of the closed-form expression for the finite SNR DMT as shown in~(\ref{dmt_exp}). The parameters are set as $N=M=4$, $L=2$, $\mu=0.5$, $\bold{\Psi}=\bold{I}_{L}$, and $\sigma^2=-116~ \mathrm{dBm}$. It can be observed that the closed-form expression is very accurate. Furthermore, the finite-SNR DMT is more accurate at low SNRs compared with the asymptotic SNR results in~\cite{yang2011diversity}. 

\begin{figure}[t!]
\centering\includegraphics[width=0.35\textwidth]{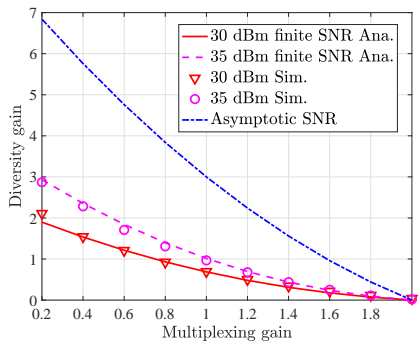}
\vspace{-0.3cm}
\caption{Diversity and multiplexing trade-off.}
\label{dmt_r}
\vspace{-0.5cm}
\end{figure}

\subsection{Impact of the Size of the IRS}
\textbf{Asymptotic Performance and Convergence}: 
In Fig.~\ref{lim_emi}, we show how the EMI changes when the IRS size increases. The parameters are set as $M=N=20$ with independent channels. The range of the size of IRS is from $4$ to $100$. It can be observed that the EMI increases and approaches the asymptotic performance as the IRS size becomes larger. However, the increasing rate of the EMI decreases, which indicates that the performance gain from the increment of the IRS size varnishes for larger IRS. This agrees with Remarks~\ref{h_limit_p} and~\ref{fast_con}. Similar comparisons are performed for the variance of the MI with $M=N=4$ and $R=4$. $\sigma^2=-116~ \mathrm{dBm}$. It can be observed from Fig.~\ref{var_irs} that the variance decreases and approaches the asymptotic value when the size of the IRS becomes larger and the approximations for the variance are accurate.


For the correlated case, we use the numerical results to show the bottleneck for the performance improvement induced by increasing the size of the IRS. In Fig.~\ref{cor_lim_throughput} and Fig.~\ref{cor_lim_var}, the EMI and the variance with the phase shifts optimized by Algorithm~\ref{gra_alg} are given. We have similar observations, i.e., both the EMI and variance will approach the limit as the size of the IRS increases.

\textbf{IRS Efficiency}: In Fig.~\ref{beta_irs}, the size of the IRS needed to achieve $\eta$ of the asymptotic performance is plotted when $M=N=20$. It is worth noting that the required size for $\eta=0.95$ is almost twice that for $\eta=0.9$, which means that the $5\%$ performance increment requires a huge increase in the size of the IRS. This agrees with Remark~\ref{fast_con}. In practice, we need to decide whether it is worth the cost of deploying larger IRSs to obtain the additional performance.

\textbf{High SNR Regime}: In Figs.~\ref{appro_outage}, the approximation performance of Theorem~\ref{high_app} for the outage probability in the high SNR regime is validated with $M=N=4$. The range of the IRS size is from $8$ to $256$ and the transmit power is set to be $50$dBm. It can be observed that the approximation in~(\ref{appro_h_snr}) is accurate in the high SNR regime. 


\textbf{SNR vs. IRS Size}: 
We can observe from Figs.~\ref{lim_emi} and~\ref{appro_outage} that, when the size of the IRS increases, the performance (EMI and outage probability) improves but the rate of change decreases, leading to the convergence to the asymptotic value. Furthermore, the EMI gain caused by the increment of SNR is far larger than that caused by the increment of the size of IRS, which agrees with Remark~\ref{snr_vs_irs}. 

\begin{figure}[t!]
\centering\includegraphics[width=0.35\textwidth]{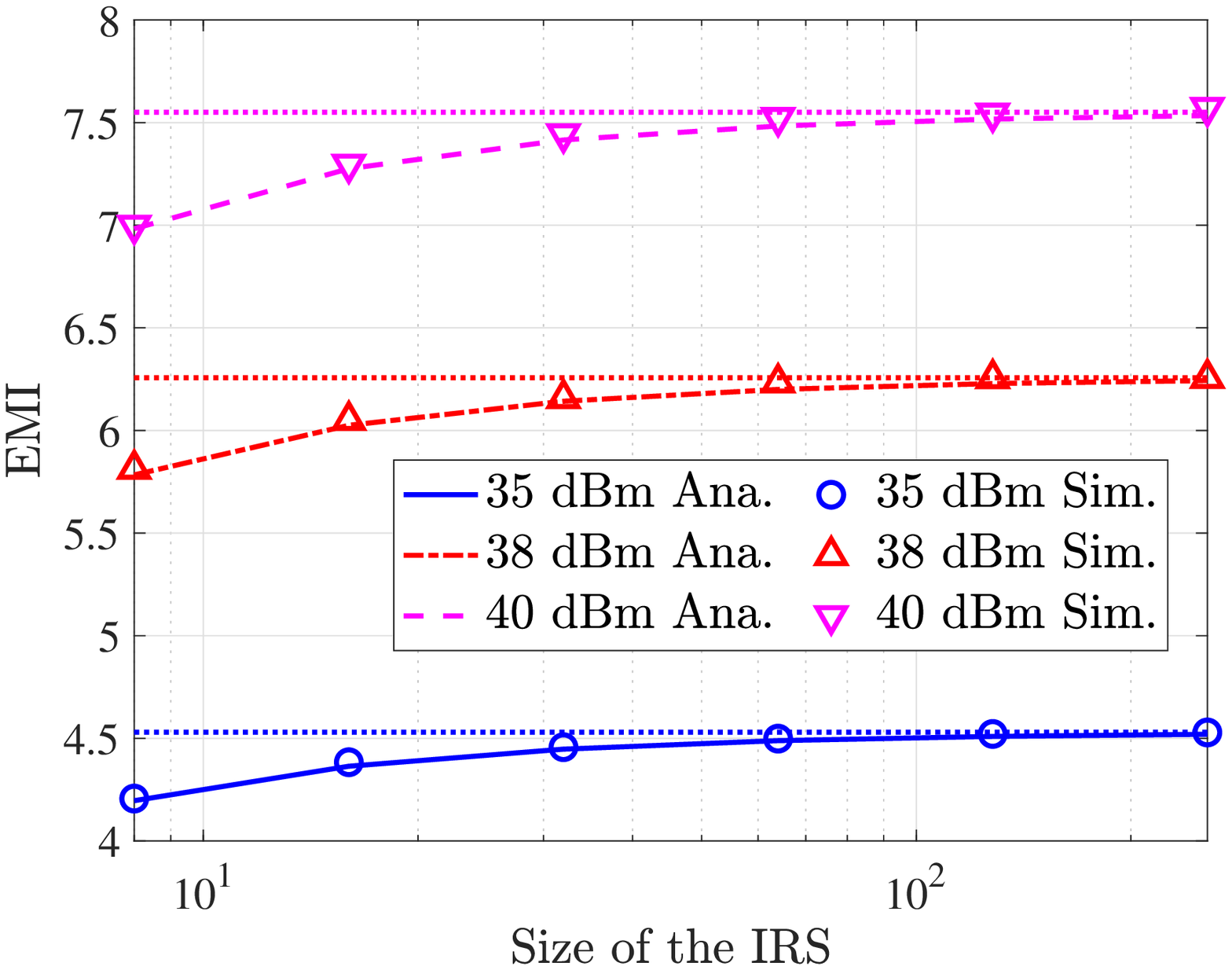}
\vspace{-0.3cm}
\caption{Impact of size on EMI.}
\label{lim_emi}
\vspace{-0.5cm}
\end{figure}
\begin{figure}[t!]
\centering\includegraphics[width=0.35\textwidth]{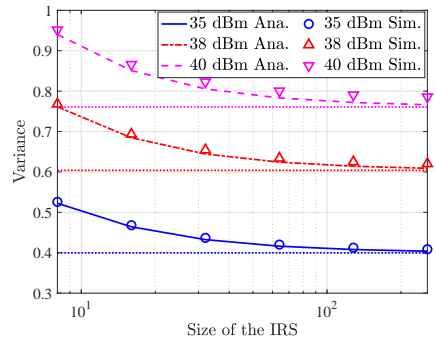}
\vspace{-0.3cm}
\caption{Asymptotic variance, variance v.s. size of IRS.}
\label{var_irs}
\vspace{-0.5cm}
\end{figure}
\begin{figure}[t!]
\centering\includegraphics[width=0.35\textwidth]{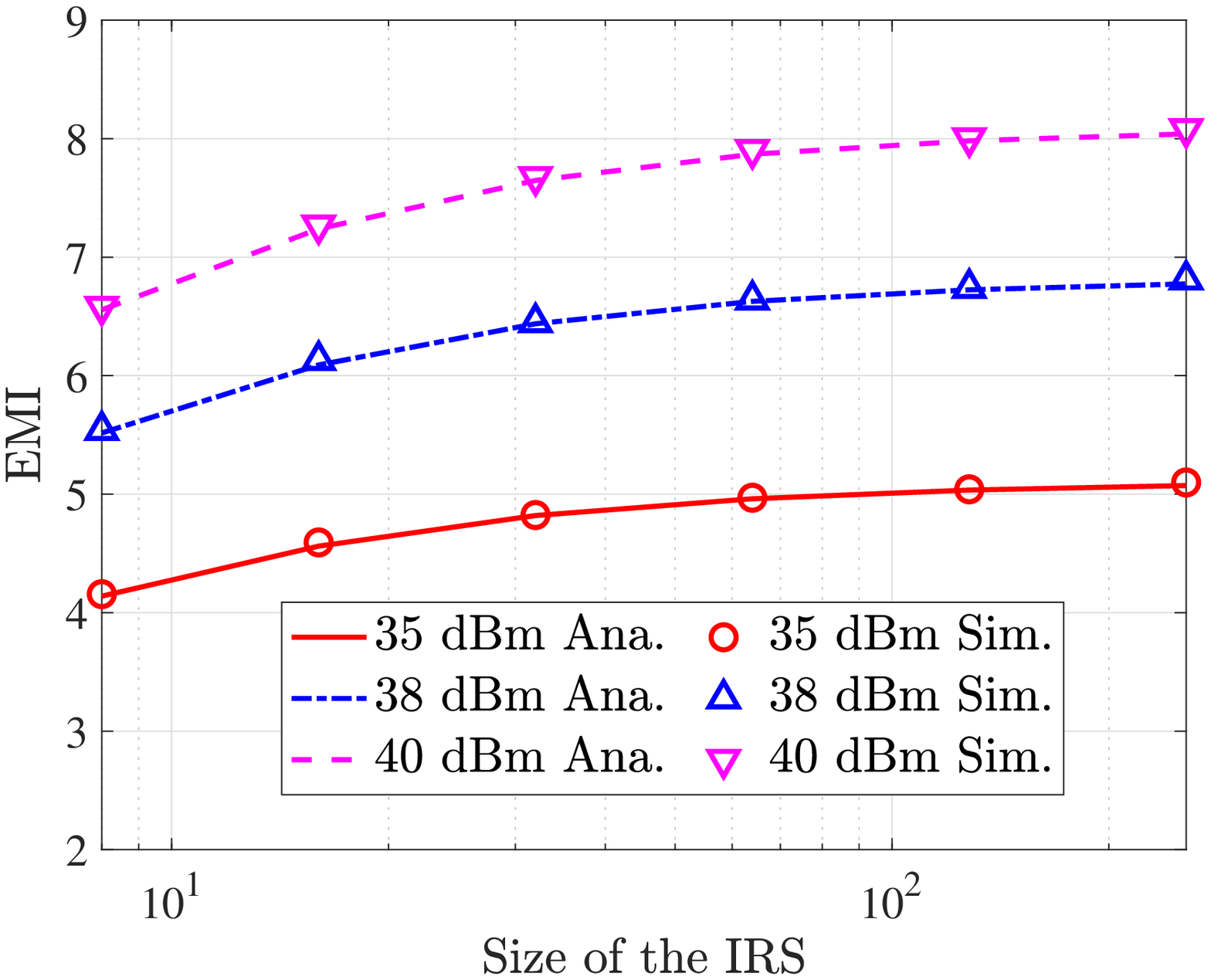}
\vspace{-0.3cm}
\caption{Impact of size on EMI.}
\label{cor_lim_throughput}
\vspace{-0.5cm}
\end{figure}
\begin{figure}[t!]
\centering\includegraphics[width=0.35\textwidth]{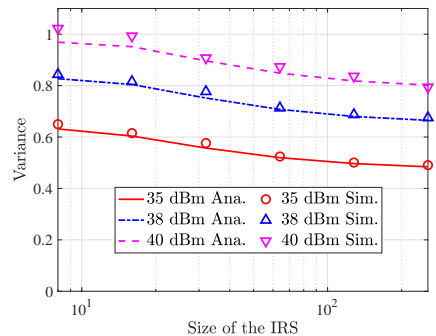}
\vspace{-0.3cm}
\caption{Variance v.s. size of IRS.}
\label{cor_lim_var}
\vspace{-0.5cm}
\end{figure}

\begin{figure}[t!]
\centering\includegraphics[width=0.35\textwidth]{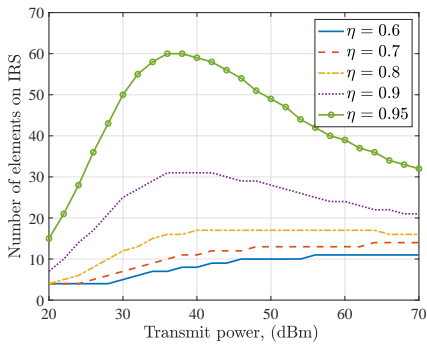}
\vspace{-0.3cm}
\caption{Size of IRS v.s. SNR given different $\eta$ s.}
\label{beta_irs}
\vspace{-0.4cm}
\end{figure}

\begin{figure}[t!]
\centering\includegraphics[width=0.35\textwidth]{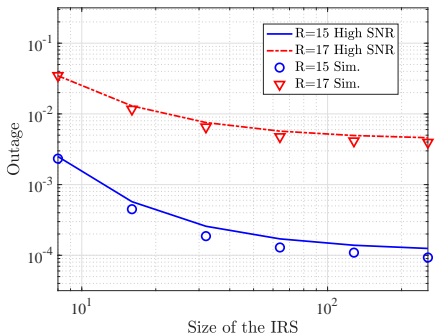}
\vspace{-0.3cm}
\caption{Outage approximation in the high SNR regime.}
\label{appro_outage}
\vspace{-0.5cm}
\end{figure}

\section{Conclusion}
\label{sec_con}
In this paper, we investigated the outage probability and finite-SNR DMT of IRS-aided MIMO systems assuming statistical CSI. By leveraging RMT, the CLT of the MI was derived and the outage probability was approximated in a closed-form. A gradient descent algorithm was proposed to minimize the outage probability by optimizing the phase shifts, which was shown to be efficient by numerical results. To better characterize the trade-off between the outage probability and the throughput in an operational SNR regime, a closed-form finite SNR DMT was presented. Finally, the impact of the IRS size on the EMI and outage probability was studied. Both theoretical and simulation results showed that larger IRSs provide better performance but the gain saturates quickly as the size of IRS increases. For example, to improve the throughput from 90\% to 95\% of its maximum value, a huge hardware cost (IRS size) is required. The results in this work not only provided accurate characterization for the MI and outage probability of IRS-aided systems, which is not available in the literature, but also revealed the impact of the IRS size on system performance, offering valuable guidance for practical system deployment. Furthermore, the result in this work can be extended to the case with the existence of the direct link. More efforts have to be involved since one more random matrix (the channel matrix of the direct link) needs to be handled in an iterative approach and more complex computations are required for calculating the asymptotic variance.


%

\appendices
\section{Proof of Theorem~\ref{clt_mi}}
\label{proof_clt}
\begin{proof}
Recall that $\bold{R},\bold{S},\bold{T}$ can be replaced by diagonal matrices consisting of their eigenvalue matrix. Therefore, $\bold{R},\bold{S},\bold{T}$ are assumed to be diagonal without loss of generality, i.e. $\bold{R}=\diag(r_1,r_2,...,r_{N})$, where $r_1,r_2,...,r_N$ are the eigenvalues of $\bold{R}$. Similarly, we have $\bold{S}=\diag(s_1,s_2,...,s_{L})$ and $\bold{T}=\diag(t_1,t_2,...,t_{M})$. As the channel matrix $\bold{H}$ in~(\ref{h_rst}) is the product of two random matrices, i.e., $\bold{X}$ and $\bold{Y}$, we use the martingale method to show the CLT~\cite{zheng2012central}. First, we rewrite the process $I(\rho)- \E I(\rho) $ as a summation of two processes:
\begin{equation}
[I(\rho)- \E( I(\rho)|\bold{Y})] +  [ \E( I(\rho)|\bold{Y}) -\E I(\rho)],
\end{equation}
whose variances are $V_{1}$ and $V_{2}$ respectively.
Then, we show that the two processes are asymptotically Gaussian so that $I(\rho)- \E I(\rho)$ will be Gaussian. In the following, we first provide the outline of the proof in steps and then give the details for each step.

\textit{Step 1:} Consider the first process $I(\rho)- \E( I(\rho)|\bold{Y})$ given $\bold{Y}=\left\{all~\bold{Y}\right\}$ and show that it converges to a Gaussian process. Furthermore, it is proved that the asymptotic mean and variance are independent of the condition $\bold{Y}$, which indicates that the asymptotic distribution of the first process is independent of that for the second process $\E( I(\rho)|\bold{Y}) -\E I(\rho)$ because its mean and variance are deterministic.

\textit{Step 2:} Show the asymptotic distribution of $\E( I(\rho)|\bold{Y}) -\E I(\rho)$ is Gaussian.

\textit{Step 3:} The sum of two independent Gaussian random processes will result in a Gaussian random process and we will compute the expression of the asymptotic variance in this step.

\textit{Step 1:}
Denote $\bold{W}=\bold{S}^{\frac{1}{2}}\bold{Y}\bold{T}^{\frac{1}{2}}$ and it follows from~(\ref{h_rst}) that the channel matrix can be given by $\bold{H}=\bold{R}^{\frac{1}{2}}\bold{X}\bold{W}$. $\|\bold{W}\bold{W}^{H} \|\le \|\bold{S} \| \|\bold{T} \|(1+\sqrt{\frac{L}{M}})    < \infty $ and $\lim\inf \frac{1}{L}\Tr\bold{Y}\bold{Y}^{H}>0$ hold true with probability one.
Therefore, the condition of Theorem 1 in~\cite{hachem2008new} is satisfied with probability one, from which we can conclude that
$I(\rho)- \E( I(\rho)|\bold{Y})$ converges to a zero-mean Gaussian process and the asymptotic variance $V_1$ is given by 
\begin{equation}
\label{first_vp1}
V_1=\Var(I(\rho)|\bold{Y}) \xrightarrow[M \rightarrow \infty]{\mathcal{P}} -\log(1-\gamma_{\bold{Y}}\widetilde{\gamma}_{\bold{Y}}),
\end{equation}
with
\begin{equation}
\begin{aligned}
\label{first_vv}
&\gamma_{\bold{Y}}=\frac{1}{L}\Tr\bold{R}\bold{C} \bold{R}\bold{C},\widetilde{\gamma}_{\bold{Y}}=\frac{1}{L}\Tr\bold{W}\bold{W}^{H}\widetilde{\bold{C}} \bold{W}\bold{W}^{H}\widetilde{\bold{C}},
\\
&\bold{C}=(z\bold{I}_{N}+\widetilde{f}\bold{R})^{-1},~\widetilde{\bold{C}}=(\bold{I}_{L}+f\bold{W}\bold{W}^{H})^{-1},\\
&f=\frac{1}{L}\Tr\bold{R}\bold{C},~\widetilde{f}=\frac{1}{L}\Tr\bold{W}\bold{W}^{H}\widetilde{\bold{C}}.
\end{aligned}
\end{equation}
The RHS of~(\ref{first_vp1}) can be further evaluated by the following lemma.
\begin{lemma}
\label{first_eva_v}
Assuming that \textbf{A.1}-\textbf{A.3} hold true and following the notations in~(\ref{first_vv}), there holds true that
\begin{equation}
\begin{aligned}
& -\log(1-\gamma_{\bold{Y}}\widetilde{\gamma}_{\bold{Y}})
 \xrightarrow[M \rightarrow \infty]{\mathcal{P}} 
 \\
 &-\log[1-\frac{M\gamma_{R}}{L\delta^2}(\frac{\gamma_{S}(\gamma_{T,I}^2-\frac{1}{M}\psi_{T} )}{\Delta_{Y}}
\!+\!\gamma_{T}g^2)]
\\
&=-\log(1-\gamma_{R}\Gamma_{L}),
\end{aligned}
\end{equation}
where $\gamma_{R}$ and $\Gamma_{L}$ are given in Table~\ref{var_list}.
\end{lemma}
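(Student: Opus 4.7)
Since $x\mapsto -\log(1-x)$ is continuous on its natural domain, the continuous mapping theorem reduces the claim to showing $\gamma_{\bold{Y}}\widetilde{\gamma}_{\bold{Y}} \xrightarrow{\mathcal{P}} \gamma_{R}\Gamma_{L}$. Both factors are smooth Lipschitz functionals of $\bold{Y}$ that concentrate around deterministic limits, so the product of the limits equals the limit of the product. The natural strategy is therefore to (i) identify the in-probability limits of the scalar fixed points $f,\widetilde{f}$ of~(\ref{first_vv}) in terms of $(\delta,g,\overline{g})$ from Theorem~\ref{mean_app}, (ii) substitute these to obtain $\gamma_{\bold{Y}}\to\gamma_{R}$, and (iii) compute the deterministic equivalent of $\widetilde{\gamma}_{\bold{Y}}$, which contains most of the work.

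\textbf{Steps (i) and (ii).} Conditioning on $\bold{Y}$, the system becomes a single Kronecker-correlated MIMO channel with transmit correlation $\bold{W}\bold{W}^{H}$ and receive correlation $\bold{R}$; the companion equations for $f,\widetilde{f}$ coincide with those of Theorem~\ref{mean_app} once the random transmit correlation is itself replaced by its deterministic equivalent. Using the Marchenko--Pastur--type DE for $\bold{Y}\bold{T}\bold{Y}^{H}$ under \textbf{A.1}--\textbf{A.3} (almost-sure boundedness of $\|\bold{W}\bold{W}^{H}\|$ plus concentration of normalized traces), together with uniqueness of the positive solution of~(\ref{basic_eq}), one obtains $f\xrightarrow{\mathcal{P}}\delta$ and $\widetilde{f}\xrightarrow{\mathcal{P}}\tfrac{Mg\overline{g}}{L\delta}$. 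Consequently $\bold{C}\to \bold{Q}_{R}$ in spectral norm in probability, and $\gamma_{\bold{Y}}=\tfrac{1}{L}\Tr\bold{R}\bold{C}\bold{R}\bold{C}\xrightarrow{\mathcal{P}}\tfrac{1}{L}\Tr\bold{R}\bold{Q}_{R}\bold{R}\bold{Q}_{R}=\gamma_{R}$ by continuous mapping.

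\textbf{Step (iii).} Expand $\bold{W}\bold{W}^{H}=\bold{S}^{1/2}\bold{Y}\bold{T}\bold{Y}^{H}\bold{S}^{1/2}$ inside $\widetilde{\gamma}_{\bold{Y}}=\tfrac{1}{L}\Tr(\bold{W}\bold{W}^{H}\widetilde{\bold{C}})^{2}$ and apply Gaussian integration by parts (Stein / Poincar\'e--Nash) to the entries of $\bold{Y}$. Since $\widetilde{\bold{C}}\to(\bold{I}_{L}+\delta\bold{W}\bold{W}^{H})^{-1}$, each integration by parts trades one $\bold{Y}$ factor for a derivative of $\widetilde{\bold{C}}$, producing a small linear system in two unknown correlators of the form $\tfrac{1}{M}\Tr\bold{T}\bold{Q}_{T}(\cdot)$ and $\tfrac{1}{M}\Tr\bold{S}\bold{Q}_{S}(\cdot)$. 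The coefficient matrix of this system reads $\bold{I}-\bigl(\begin{smallmatrix}0 & \gamma_{S}\\ \gamma_{T} & 0\end{smallmatrix}\bigr)$, whose determinant $1-\gamma_{S}\gamma_{T}=\Delta_{Y}$ becomes the common denominator upon inversion. The two leading-order pieces assemble into $\tfrac{M}{L\delta^{2}}\bigl(\tfrac{\gamma_{T,I}^{2}\gamma_{S}}{\Delta_{Y}}+g^{2}\gamma_{T}\bigr)=\Gamma$, while the self-contraction of the two $\bold{Y}$'s that sit inside a common resolvent contributes the $O(1/L)$ subtraction $-\tfrac{\gamma_{S}\psi_{T}}{L\delta^{2}\Delta_{Y}}$, giving $\Gamma_{L}$.

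\textbf{Main obstacle.} The delicate point is preserving the $\psi_{T}/L$ correction. For a first-order deterministic equivalent of a linear spectral statistic this term would be negligible, but here it is consumed inside the CLT variance and must be retained. This forces a careful second-order Gaussian integration by parts and bookkeeping of cross-traces such as $\tfrac{1}{M}\Tr\bold{T}^{2}\bold{Q}_{T}^{4}$, combined with the uniform spectral bound $\|\widetilde{\bold{C}}\|\le 1$, to ensure that the remainder terms are $o_{\mathcal{P}}(1/L)$ rather than merely $o_{\mathcal{P}}(1)$.
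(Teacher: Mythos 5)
Your plan is correct and arrives at the same $2\times 2$ linear system, but by a different route from the paper. After the same preliminary reduction — continuous mapping theorem, the identifications $f\xrightarrow{\mathcal{P}}\delta$, $\widetilde{f}\xrightarrow{\mathcal{P}}Mg\overline{g}/(L\delta)$, and hence $\gamma_{\bold{Y}}\xrightarrow{\mathcal{P}}\gamma_R$ — the paper computes $\widetilde{\gamma}_{\bold{Y}}$ not by Gaussian integration by parts on the entries of $\bold{Y}$ but by a Sherman--Morrison--Woodbury (leave-one-out) expansion that splits $\widetilde{\gamma}_{\bold{Y}}$ into two sums $Y_1+Y_2$ as in~(\ref{long_eq}); the remaining second-order trace $\frac{1}{M}\Tr\bold{S}\widetilde{\bold{C}}\bold{S}\widetilde{\bold{C}}$ is then evaluated by perturbing the fixed-point system with an auxiliary parameter $l$ and matrix $\bold{D}$ and differentiating at $l=0$, which yields the $2\times 2$ system~(\ref{sys_fir}) with coefficient matrix $\bigl(\begin{smallmatrix}1 & \gamma_S\\ \gamma_T & 1\end{smallmatrix}\bigr)$ and determinant $\Delta_Y$ — the same system you obtain, up to the sign convention on the off-diagonal entries. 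The main substantive difference is where the $O(1/L)$ correction comes from: in the paper's route it falls out mechanically because the inner leave-one-out sum in $Y_1$ runs over $j\ne i$ rather than all $j$, so with $a_i=t_i(1+t_ig)^{-2}$ the quantity $\sum_i a_i\gamma_{T,I}$ is replaced by $\sum_i a_i(\gamma_{T,I}-a_i/M)$, i.e., $\gamma_{T,I}^2$ becomes $\gamma_{T,I}^2-\frac{1}{M}\psi_T$ exactly as claimed. In your IBP plan the same correction must be recovered from the diagonal Wick pairings, which you rightly flag as the delicate step since such $O(1/M)$ terms are discarded in first-order deterministic-equivalent arguments. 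Both are standard alternatives and would reach the same result; the SMW route makes the $\psi_T/L$ correction appear automatically and transparently, whereas the IBP route avoids the auxiliary-parameter differentiation but needs more careful remainder control, as your final remarks anticipate.
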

\begin{proof}
The proof of Lemma~\ref{first_eva_v} is given in Appendix~\ref{other_app1}.
\end{proof}
By Lemma~\ref{first_eva_v}, we can obtain that
\begin{equation}
\label{first_v1}
V_1\xrightarrow[M \rightarrow \infty]{\mathcal{P}} -\log(1-\gamma_{R}\Gamma_{L}),
\end{equation}
where the RHS does not depend on $\bold{Y}$. According to the asymptotic regime and assumptions, we can conclude that $\delta, \Delta_{Y}, \gamma_{S}, \psi_{T}$ are all of order $\Theta(1)$. For example, it can be verified that $0<\frac{\Tr\bold{R}}{L(z+\|\bold{R} \|\|\bold{S} \|\|\bold{T} \| )}\le \delta \le \frac{N\|\bold{R} \|}{Lz} < \infty$ and $\Delta_{Y}, \gamma_{S}, \psi_{T}$ can be verified similarly. Therefore, the term $-\frac{\gamma_{S}\psi_{T} }{L\delta^2\Delta_{Y}}$ will vanish as $L$ goes to infinity and $\Gamma_{L}\xrightarrow[]{L \rightarrow \infty}\Gamma$.  For a better approximation of the variance, we will use $\Gamma_{L}$ for the cases with small $L$ and $\Gamma$ for the cases with large $L$, respectively.

Similar analysis can be performed on the asymptotic mean to show that the asymptotic mean also does not depend on $\bold{Y}$, which indicates that the asymptotic distribution of the first process is independent of that for the second process as the
asymptotic mean and variance are deterministic.

\textit{Step 2:} In this step, we will investigate $\E( I(\rho)|\bold{Y}) -\E I(\rho)$, which is a random variable with respect to $\bold{Y}$. We will show the asymptotic Gaussianity of $\E( I(\rho)|\bold{Y}) -\E I(\rho)$ using the martingale method (CLT for martingales) in~\cite{billingsley2008probability}. Let $\bold{Z}=\bold{R}^{\frac{1}{2}}\bold{X}\bold{S}^{\frac{1}{2}}$, then $\bold{H}=\bold{Z}\bold{Y}\bold{T}^{\frac{1}{2}}$. By the bookkeeping approach in~\cite{hachem2012clt} and~\cite{hachem2008clt}, we have
\begin{equation}
\begin{aligned}
&\E( I(\rho)|\bold{Y}) -\E_{\bold{Y}}\E ( I(\rho)|\bold{Y})
\\
&= \sum_{m=1}^{M} (\E_{m}-\E_{m-1})\log(1+\Lambda_{m}),
\end{aligned}
\end{equation}
where 
\begin{equation}
\begin{aligned}
\Lambda_{m}&=\frac{\bold{y}_{m}^{H}\bold{Z}^{H}\bold{Q}_{m}\bold{Z}\bold{y}_{m}- \frac{{t}_{m}}{M}\Tr \bold{Z}^{H}\bold{Z}\bold{Q}_{m} }{1+\frac{{t}_{m}}{M}\Tr\bold{Z} \bold{Z}^{H}\bold{Q}_{m}}
\\
&=\frac{e_{m}}{1+\frac{{t}_{m}}{M}\Tr \bold{Z}\bold{Z}^{H}\bold{Q}_{m}}.
\end{aligned}
\end{equation}
Here $\E_{m}(\cdot)=\E(\cdot|\bold{y}_{m},\bold{y}_{m+1},...,\bold{y}_{M} ) $, $\E_{M+1}(\cdot)=\E(\cdot)$, and $\bold{y}_{i}$ denotes the $i$-th column of $\bold{Y}$.

We skip the verification of the Lyapunov's condition and turn to the computation of the asymptotic variance for the second process, $V_2$. Following a similar method as in~\cite{hachem2012clt},~\cite{hachem2008clt}, we have 
\begin{equation}
\begin{aligned}
\label{v2_1}
V_{2}&\xlongrightarrow[M \longrightarrow \infty]{\mathcal{P} } \sum_{m=1}^{M}\E_{m-1}(\E_{m} \Lambda_{m} )^2
\\
&= \sum_{m=1}^{M}[\bold{Q}_{T}]_{m,m}^2\E_{m-1}(\E_{m}(e_{j})^2)
\\
&\overset{(a)}{=} \sum_{m=1}^{M}[\bold{Q}_{T}]_{m,m}^2\E_{m-1}( \frac{t_{m}^2}{M^2}\E_{m}\Tr[ \E_{m}(\bold{Z}^{H}\bold{Q} \bold{Z})
\\
& \times
\bold{Z}^{H}\bold{Q} \bold{Z} ])+o(1),
\end{aligned}
\end{equation}
where step $(a)$ follows from the rank-one lemma~\cite[Lemma 3.1]{hachem2012clt}). 
\begin{lemma} 
\label{zqz_comp}
Let $\bold{B}$ and $\bold{D}$ be deterministic matrices with bounded norm. For $\bold{Q}=(\bold{Z}\bold{D}\bold{Z}^{H}+z\bold{I}_{N})^{-1}$, where $\bold{Z}=\bold{R}^{\frac{1}{2}}\bold{X}\bold{S}^{\frac{1}{2}}$ is defined in Appendix~\ref{proof_clt}, it holds true that
\begin{equation}
\begin{aligned}
 &\frac{1}{M}\E\Tr\bold{Z}^{H}\bold{Q}\bold{Z}\bold{B}= \frac{1}{M}\E\Tr\bold{S}\bold{B}
 \\
&\times
 ((\frac{1}{L}\Tr\E\bold{R}\bold{Q})^{-1}\bold{I}_{L}+ \bold{S} \bold{D} )^{-1} 
+O(\frac{\| \bold{D} \|^2}{M}).
\end{aligned}
\end{equation}
\end{lemma}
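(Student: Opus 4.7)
The plan is to establish this deterministic equivalent via Gaussian integration by parts (IBP) combined with a Poincar\'e--Nash concentration bound, which is the standard toolkit for resolvents of Gaussian matrices used throughout the appendix. Without loss of generality I would first diagonalize $\bold{S}$ as in the reduction of Section~\ref{pre_res}; then the columns of $\bold{Z}$ factor as $\bold{z}_k = s_k^{1/2}\bold{R}^{1/2}\bold{x}_k$ with $\bold{x}_k$ the $k$th column of $\bold{X}$, and
\[
(\bold{Z}^H\bold{Q}\bold{Z})_{kl}=s_k^{1/2}s_l^{1/2}\sum_{i,j}\bar X_{ik}(\bold{R}^{1/2}\bold{Q}\bold{R}^{1/2})_{ij}X_{jl}.
\]

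The core step is to peel off one factor of $X$ using the Stein identity $\E[\bar X_{ik}f]=\tfrac{1}{L}\E[\partial_{X_{ik}}f]$. The part of the derivative that hits $X_{jl}$ generates a Kronecker delta and, after summation over $i$, yields $\delta_{kl}\Tr\bold{R}\bold{Q}$; the part that hits $\bold{Q}$, via $\partial_{X_{ik}}\bold{Q}=-\bold{Q}\,\partial_{X_{ik}}(\bold{Z}\bold{D}\bold{Z}^H)\,\bold{Q}$, produces after simplification a term proportional to $(\bold{D}\,\E\bold{Z}^H\bold{Q}\bold{Z})_{kl}$. Replacing $\tfrac{1}{L}\Tr\bold{R}\bold{Q}$ by its expectation $\alpha=\tfrac{1}{L}\Tr\E\bold{R}\bold{Q}$ gives the closed self-consistent equation
\[
\E\bold{Z}^H\bold{Q}\bold{Z}=\alpha\bold{S}-\alpha\bold{S}\bold{D}\,\E\bold{Z}^H\bold{Q}\bold{Z}+\bold{E},
\]
whose solution is $(\bold{I}_L+\alpha\bold{S}\bold{D})^{-1}\alpha\bold{S}$. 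Tracing against $\bold{B}/M$ and using cyclicity together with the scalar identity $\alpha(\bold{I}_L+\alpha\bold{S}\bold{D})^{-1}=(\alpha^{-1}\bold{I}_L+\bold{S}\bold{D})^{-1}$ recovers the stated expression $\tfrac{1}{M}\E\Tr\bold{S}\bold{B}(\alpha^{-1}\bold{I}_L+\bold{S}\bold{D})^{-1}$.

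For the remainder $\bold{E}$, I would invoke the Poincar\'e--Nash inequality $\Var[f(\bold{X})]\le \tfrac{1}{L}\E\|\nabla f\|_F^2$ applied to $\tfrac{1}{L}\Tr\bold{R}\bold{Q}$. Each $X$-derivative of $\bold{Q}$ contributes one factor of $\|\bold{D}\|$ through the chain $\partial\bold{Q}\sim\bold{Q}\bold{Z}\bold{D}\bold{Q}$, so that $\Var[\tfrac{1}{L}\Tr\bold{R}\bold{Q}]=O(\|\bold{D}\|^2/L^2)$. Since $\bold{E}$ has the form $\E[(\tfrac{1}{L}\Tr\bold{R}\bold{Q}-\alpha)\cdot\bold{Y}]$ with $\bold{Y}$ carrying one further factor of $\bold{D}$, Cauchy--Schwarz yields the spectral-norm bound $\|\bold{E}\|=O(\|\bold{D}\|^2/L)$, and tracing against $\bold{B}/M$ then gives the stated $O(\|\bold{D}\|^2/M)$ error.

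The main obstacle will be the bookkeeping of the several cross terms produced by IBP: since the derivative of $\bold{Q}$ reintroduces $\bold{X}$, one must verify that every newly generated quadratic form either folds back into $\bold{Z}^H\bold{Q}\bold{Z}$ on the RHS or is negligible after variance control. Tracking the precise $\bold{D}$-dependence through the chain $\partial\bold{Q}\sim\bold{Q}\bold{Z}\bold{D}\bold{Q}$ is what pins the error at $\|\bold{D}\|^2/M$ rather than a naive $O(1/M)$, and the replacement of $\tfrac{1}{L}\Tr\bold{R}\bold{Q}$ by its expectation $\alpha$ inside an outer expectation is the step that genuinely requires the Poincar\'e--Nash variance bound.
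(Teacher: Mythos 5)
Your proposal is correct and follows essentially the same route as the paper: Gaussian integration by parts on one factor of $\bold{X}$ to produce the self-consistent matrix equation $\E\bold{Z}^{H}\bold{Q}\bold{Z}\bold{B}=\tfrac{1}{L}\E\Tr(\bold{R}\bold{Q})(\bold{S}\bold{B}-\bold{S}\bold{D}\,\bold{Z}^{H}\bold{Q}\bold{Z}\bold{B})$, then decoupling the random scalar $\tfrac{1}{L}\Tr\bold{R}\bold{Q}$ from the matrix trace by replacing it with its expectation and controlling the resulting error via the Poincar\'e--Nash variance bound and Cauchy--Schwarz, which is exactly how the paper reaches the $O(\|\bold{D}\|^{2}/M)$ remainder. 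The only minor deviation is that the paper's IBP is cleaner than you anticipate — it produces exactly two terms, one of which folds back exactly into $\bold{Z}^{H}\bold{Q}\bold{Z}$ with no additional cross terms to track, so the bookkeeping concern you flag is lighter than expected.
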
 
By using Lemma~\ref{zqz_comp} in Appendix~\ref{other_app2} twice 
and similar computations in~\cite{hachem2012clt}, we have 
\begin{equation}
\label{usage_lm}
\begin{aligned}
&\E_{m}\frac{1}{M}\Tr[ \E_{m}(\bold{Z}^{{H}}\bold{Q} \bold{Z})\bold{Z}^{{H}}\bold{Q} \bold{Z} ]
   \xlongrightarrow[M \longrightarrow \infty]{\mathcal{P} ~(a)} 
   \\
   &\frac{1}{M}\E_{m}\Tr[\E_{m}(((\frac{1}{L}\Tr\E_{\bold{X}}\bold{R}\bold{Q})^{-1}\bold{I}_{L}+ \bold{S} \bold{Y}\bold{T}\bold{Y}^{H} )^{-1})
   \\
    & \times \bold{S}\bold{Z}^{H}\bold{Q} \bold{Z}  ] \xlongrightarrow[M \longrightarrow \infty]{\mathcal{P}~(b) }   
   \\
   &\frac{1}{M}\E_{m}\Tr[\E_{m}(((\frac{1}{L}\Tr\E_{\bold{X}}\bold{R}\bold{Q})^{-1}\bold{I}_{L}+ \bold{S} \bold{Y}\bold{T}\bold{Y}^{H} )^{-1})
   \\
    & \times \bold{S}((\frac{1}{L}\Tr\E_{\bold{X}}\bold{R}\bold{Q})^{-1}\bold{I}_{L}+ \bold{S} \bold{Y}\bold{T}\bold{Y}^{H} )^{-1} \bold{S}],
\end{aligned}
\end{equation}
where $\E_{\bold{X}}$ represents the expectation with respect to $\bold{X}$. Step $(a)$ follows by taking $\bold{D}={\bold{Y}}\bold{T}\bold{Y}^{H}$ and $\bold{B}=\bold{Z}^{H}\bold{Q} \bold{Z}$ in Lemma~\ref{zqz_comp} as shown in Appendix~\ref{other_app2} and the convergence follows from
$$\E | \mathrm{LHS}-\mathrm{RHS}|\le \frac{K}{M^2} \E ( \| \bold{Y}\bold{T}\bold{Y}^{H}  \|^2  )=O(M^{-2})$$ according to Lemma 2 in~\cite{rubio2012clt} and Markov's inequality, where $\mathrm{LHS}$ denotes the left-hand side of step $(a)$. Similarly, step $(b)$ follows by taking $\bold{B}=\E_{m}(\frac{1}{L}\Tr\E_{\bold{X}}\bold{R}\bold{Q})^{-1}\bold{I}_{L}+ \bold{S} \bold{Y}\bold{T}\bold{Y}^{H} )^{-1}\bold{S}$. Therefore, we have
\begin{equation}
\begin{aligned}
   &\frac{1}{M}\E_{m}\Tr\E_{m}(((\frac{1}{L}\Tr\E_{\bold{X}}\bold{R}\bold{Q})^{-1}\bold{I}_{L}+ \bold{S} \bold{Y}\bold{T}\bold{Y}^{H} )^{-1})
   \\
   &
    \bold{S}((\frac{1}{L}\Tr\E_{\bold{X}}\bold{R}\bold{Q})^{-1}\bold{I}_{L}+ \bold{S} \bold{Y}\bold{T}\bold{Y}^{H} )^{-1} \bold{S}
    \xlongrightarrow[M \longrightarrow \infty]{\mathcal{P}~(a) } 
    \\
& \frac{1}{M}\E_{m}\Tr\E_{m}((\delta^{-1}\bold{I}_{L}+ \bold{S} \bold{Y}\bold{T}\bold{Y}^{H} )^{-1} )\bold{S}
\\
&\times (\delta^{-1}\bold{I}_{L}+ \bold{S} \bold{Y}\bold{T}\bold{Y}^{H} )^{-1} \bold{S}
\\
&    \xlongrightarrow[M \longrightarrow \infty]{\mathcal{P} ~(b)} 
 \frac{1}{M}\E\Tr\bold{S}\E_{m}[\widetilde{\bold{C}}(\delta^{-1})]\bold{S}\widetilde{\bold{C}}(\delta^{-1})\overset{\Delta}{=}\bold{\Omega}_{m},
\end{aligned}
\end{equation}
Steps $(a),~(b)$ follow from $\Var(\Tr\bold{R}\bold{Q})=O(1)$ and $\Var(\Tr\bold{S}\widetilde{\bold{C}}(\delta^{-1}))=O(1)$, respectively, which can be obtained by the variance control using Poincar{\`e}-Nash inequality~\cite{hachem2008new}~\cite{kammoun2019asymptotic}. 
According to~(\ref{v2_1}) and~\cite[Section 5]{hachem2012clt}, it holds true that
\begin{equation}
\begin{aligned}
\bold{\Omega}_{m}=\frac{\gamma_{S}}{1-\gamma_{S}\gamma_{T}^{(m)}}+o(1),
\end{aligned}
\end{equation}
where $\gamma_{T}^{(m)}=\frac{1}{M}\sum\limits_{l=1}^{m}[\bold{Q}_{T}]_{l,l}^{2}t_{l}^2$. Also, we have $\gamma_{T}^{(m)}-\gamma_{T}^{(m-1)}=\frac{1}{M}[\bold{Q}_{T}]_{m,m}^{2}t_{m}^2$. Therefore, 
\begin{equation}
\begin{aligned}
V_{2} &\xlongrightarrow[M \longrightarrow \infty]{\mathcal{P} }   \sum_{m=1}^{M}\E_{m-1}(\E_{m} \Lambda_{m} )^2
\xlongrightarrow[M \longrightarrow \infty]{\mathcal{P} }  
\\
&\sum_{m=1}^{M}\frac{\gamma_{S}(\gamma_{T}^{(m)}-\gamma_{T}^{(m-1)}) }{1-\gamma_{S}\gamma_{T}^{(m)}}
\xlongrightarrow[]{M \longrightarrow \infty}
\\
 &-\log(1-\gamma_{T}\gamma_{S})=-\log\Delta_{Y}.
\end{aligned}
\end{equation}

\textit{Step 3:} Since $I(\rho)- \E( I(\rho)|\bold{Y})$ and $\E( I(\rho)|\bold{Y}) -\E I(\rho)$ are asymptotically independent Gaussian random processes, the asymptotic distribution of their sum is Gaussian and the variance is given by
\begin{equation}
\begin{aligned}
V=V_1+V_2.
\end{aligned}
\end{equation}
This completes the proof.
\vspace{-0.2cm}
\end{proof}

\section{Proof of Proposition~\ref{clt_ray}}
\label{proof_ray}
\begin{proof}
As this proposition is the special case of Theorem~\ref{clt_mi}, in which the Gaussianity has been proved, we only need to compute the variance. By~(\ref{basic_eq}), we have
\begin{equation}
\begin{aligned}
\overline{g}=\frac{1}{{g}+1},~
g=\frac{1}{\tau(\frac{1}{\delta}+\overline{g})},~
z\delta=\tau\overline{g},
\end{aligned}
\end{equation}
and 
\begin{equation}
\begin{aligned}
\gamma_{R}= \frac{\delta^2}{\tau}, ~\gamma_{S}=\tau g^2, ~\gamma_{T}=\overline{g}^2.
\end{aligned}
\end{equation}
Then, the variance can be written as
\begin{equation}
\begin{aligned}
&V(\rho)=-\log[(1+g)^2 -\tau g^2 -\tau g^2 \overline{g}^2-g^2(1-\tau g^2\overline{g}^2) ]
\\
&-\log(\overline{g}^2)=-\log(T_1)-\log(T_2),
\end{aligned}
\end{equation}
where
\begin{equation}
\begin{aligned}
T_1&=1+2g-\tau g^2 + \tau g^2(g-1)\overline{g}
\\
&=1+2g -2 \tau g^2\overline{g}
\\
&\overset{(a)}{=}1+2zg^3+2zg^2,
\end{aligned}
\end{equation}
and $(a)$ follows from $g^3+2g^2+(1+\tau/z-1/z)g-1/z=0$. The conclusion follows by noticing that $T_{2}=(1+g)^{-2}$.
\vspace{-0.4cm}
\end{proof}

\section{Proof of Theorem~\ref{fin_dmt}}
\label{proof_dmt}
\begin{proof} 
By the definition of multiplexing gain in~\cite{loyka2010finite}, we can obtain the rate $R$ as, 
\begin{equation}
R=\frac{m \overline{I}(\rho)}{k},
\end{equation} 
For the ease of manipulations, we use $\overline{I}(z)$ and $V(z)$, which are obtained by letting $z=\frac{1}{\rho}$ in $\overline{I}(\rho)$ and $V(\rho)$, respectively. Then, by the definition of finite-SNR DMT in~\cite{loyka2010finite}, we have
\begin{equation}
\begin{aligned}
d(m,\rho)&=-\frac{\partial  \log(P_{out}(\frac{(m-k)\overline{I}(z)}{k\sqrt{V(z)}})) }{\partial \log(\rho)}  
\\
&=\frac{\partial  \log(P_{out}(\frac{(m-k)H(z)}{k})) }{\partial z} z
\\
&=\frac{z(m-k)}{k\sqrt{2\pi}}\frac{\exp(-\frac{(m-k)^2H^2(z)}{2k^2})H'(z)}{\Phi(\frac{(m-k)H(z)}{k} )}.
\end{aligned}
\end{equation}
$H'(z)$ can be obtained by the chain rule.
\end{proof}

\section{Proof of Preliminary Results}
\label{other_app}
\subsection{Proof of Lemma~\ref{first_eva_v}}
\label{other_app1}
\begin{proof}
Next, we will show that, when $M \rightarrow \infty$, 
$V_1$ converges to a constant in probability. By the following equation derived from the Sherman-Morrison-Woodbury formula~\cite{horn2012matrix},
\begin{equation}
\begin{aligned}
\bold{q}^{H}\left(\bold{B}+\mu\bold{q}\bold{q}^{H}  \right)^{-1}=\frac{1}{1+\mu\bold{q}^{H}\bold{B}^{-1}\bold{q}}\bold{q}^{H}\bold{B}^{-1}
\end{aligned}
\end{equation}
we have~(\ref{long_eq}) at the top of the next page.
\begin{figure*}[!htbp]
\centering
\begin{equation}
\label{long_eq}
\begin{aligned}
&\widetilde{\gamma}_{\bold{Y}}=\frac{1}{L}\Tr\bold{W}\bold{W}^{H}\widetilde{\bold{C}}\bold{W}\bold{W}^{H}\widetilde{\bold{C}}
=\frac{1}{L}\sum_{i=1}^{L}\frac{ t_{i}\bold{y}^{H}_{i} \bold{S}^{\frac{1}{2}} \widetilde{\bold{C}}_{i}\bold{W}\bold{W}^{H}\widetilde{\bold{C}} \bold{S}^{\frac{1}{2}}\bold{y}_{i}}{1+ t_{i}f\bold{y}^{H}_{i} \bold{S}^{\frac{1}{2}} \widetilde{\bold{C}}_{i} \bold{S}^{\frac{1}{2}}\bold{y}_{i}  } 
=\frac{1}{L}\sum_{i=1}^{L}\frac{ t_{i} \bold{y}^{H}_{i} \bold{S}^{\frac{1}{2}} \widetilde{\bold{C}}_{i}\bold{W}\bold{W}^{H}\widetilde{\bold{C}} \bold{S}^{\frac{1}{2}}\bold{y}_{i}}{1+ \frac{t_{i}f}{M}\Tr\bold{S}\widetilde{\bold{C}} } + o(1)
\\
&=\frac{1}{L}\sum_{i=1}^{L}\frac{ t_{i} \bold{y}^{H}_{i} \bold{S}^{\frac{1}{2}} \widetilde{\bold{C}}_{i}\bold{W}_{i}\bold{W}_{i}^{H}\widetilde{\bold{C}}_{i} \bold{S}^{\frac{1}{2}}\bold{y}_{i} }{(1+ \frac{t_{i}f}{M}\Tr\bold{S}\widetilde{\bold{C}})^2 }
+ \frac{ t_{i}^2\bold{y}^{H}_{i} \bold{S}^{\frac{1}{2}} \widetilde{\bold{C}}_{i}\bold{S}^{\frac{1}{2}} \bold{y}_{i}\bold{y}_{i}^{H}\bold{S}^{\frac{1}{2}} \widetilde{\bold{C}}_{i} \bold{S}^{\frac{1}{2}}\bold{y}_{i}}{(1+ \frac{t_{i}f}{M}\Tr\bold{S}\widetilde{\bold{C}})^{2} }   + o(1)
=\frac{1}{L}\sum_{i=1}^{L}\frac{ \frac{t_{i}}{M}\Tr\bold{S} \widetilde{\bold{C}}_{i}\bold{W}_{i}\bold{W}_{i}^{H}\widetilde{\bold{C}}_{i}}{(1+ \frac{t_{i}f}{M}\Tr\bold{S}\widetilde{\bold{C}})^2 }
\\
&+ \frac{ t_{i}^2(\frac{1}{M}\Tr\bold{S}\widetilde{\bold{C}})^2 }{(1+ \frac{t_{i}f}{M}\Tr\bold{S}\widetilde{\bold{C}})^{2} }  + o(1)
=Y_{1}+Y_{2}+o(1).
\end{aligned}
\end{equation}
\hrulefill
\vspace{-0.5cm}
\end{figure*}

Next, we evaluate the two terms $Y_{1}$ and $Y_{2}$ in~(\ref{long_eq}). The numerator of $Y_1$ can be given by 
\begin{equation}
\begin{aligned}
&\frac{1}{M}\Tr\bold{S} \widetilde{\bold{C}}_{i}\bold{W}_{i}\bold{W}_{i}^{H}\widetilde{\bold{C}}_{i}
=\sum_{j \ne i}^{M}\frac{t_{j}}{M} \bold{y}^{H}_{j}\bold{S}^{\frac{1}{2}}\widetilde{\bold{C}}_{i} \bold{S} \widetilde{\bold{C}}_{i}\bold{S}^{\frac{1}{2}}\bold{y}_{j}
\\
&\overset{a.s.}{\longrightarrow}\sum_{j \ne i}^{M}\frac{t_{j}\bold{y}^{H}_{j}\bold{S}^{\frac{1}{2}}\widetilde{\bold{C}}_{ij} \bold{S} \widetilde{\bold{C}}_{ij}\bold{S}^{\frac{1}{2}}\bold{y}_{j}}{M(1+\frac{t_jf}{M}\Tr\bold{S}\widetilde{\bold{C}}_{ij}  )^2} 
\\
&\overset{a.s.}{\longrightarrow} \sum_{j \ne i}^{M}\frac{\frac{t_{j}}{M}\Tr \bold{S}\widetilde{\bold{C}}_{ij} \bold{S} \widetilde{\bold{C}}_{ij}}{M(1+\frac{t_jf}{M}\Tr\bold{S}\widetilde{\bold{C}}_{ij}  )^2},
\end{aligned}
\end{equation}
where $\bold{W}_{i}$ is obtained by removing the $i$-th column from $\bold{W}$. These two almost convergences come from the convergence of the quadratic form~\cite[Lemma~3]{mueller2016linear}. Following the steps in Appendix E of~\cite{hoydis2011iterative}, we can prove that for any matrix $\bold{U}$ with a bounded norm, it holds true that
\begin{equation}
\frac{1}{M}\Tr\bold{U}\widetilde{\bold{C}} \xrightarrow[M \rightarrow \infty]{a.s.} \frac{1}{M}\Tr\bold{U}{\bold{Q}}_{S}.
\end{equation}
Therefore, by similar arguments as~\cite{hoydis2011asymptotic}, we have
\begin{equation}
\label{intermediate_f}
\begin{aligned}
& f=\frac{1}{M}\Tr\bold{R}(z \bold{I}_{N}+\widetilde{f}\bold{R} )^{-1} \xrightarrow[M \rightarrow \infty]{a.s.}
\\
&\frac{1}{M}\Tr\bold{R}(z\bold{I}_{N}+\frac{g\overline{g}}{\delta}\bold{R})^{-1}=\delta,
\\
&\frac{f}{M}\Tr\bold{S}\widetilde{\bold{C}} \xrightarrow[M \rightarrow \infty]{a.s.} \frac{1}{M}\Tr\bold{S}{\bold{Q}}_{S}=g,
\\
&\frac{1}{M}\Tr \bold{T}(\bold{W}^{H}\bold{W}+ \bold{I}_{M})^{-1} \xrightarrow[M \rightarrow \infty]{a.s.} \frac{1}{M}\Tr\bold{T}{\bold{Q}}_{T}=\overline{g}.
\end{aligned}
\end{equation}
From~\cite[Theorem 1]{couillet2011deterministic},~\cite[Theorem 1]{wen2012deterministic}, or~\cite[Lemma 3]{kammoun2019asymptotic}, we can show that if $(h, \widetilde{h})$ is the solution of 
\begin{equation}
\begin{aligned}
h&=\frac{1}{M}\Tr\delta\bold{S}\left(\alpha \bold{I}_{L}+l\bold{D}+ \widetilde{h}\delta\bold{S} \right)^{-1},
\\
\widetilde{h}&=\frac{1}{M}\Tr\bold{T}\left(\bold{I}_{M}+h\bold{T} \right)^{-1},
\end{aligned}
\end{equation}
there holds true that
\begin{equation}
\label{md_1}
\begin{aligned}
&\frac{f}{M}\Tr\bold{S}\left(\alpha \bold{I}_{L}+l\bold{D}+f\bold{W}\bold{W}^{H} \right)^{-1} 
\\
&\xrightarrow[M \rightarrow \infty]{a.s.}
\frac{\delta}{M}\Tr\bold{S}\left(\alpha \bold{I}_{L}+l\bold{D}+ \widetilde{h}\delta\bold{S} \right)^{-1}.
\end{aligned}
\end{equation}
Then by letting $\alpha=1$ and $l=0$, we have $h=g$ and $\widetilde{h}=\overline{g}$. 
Taking the derivative of $h$ and $\widetilde{h}$ with respect to $l$, we have
\begin{equation}
\label{tde1}
\begin{aligned}
&\frac{\mathrm{d} \frac{f}{M}\Tr \bold{S}(\alpha \bold{I}_{L}+l\bold{D}+f\bold{W}\bold{W}^{H})^{-1}}{\mathrm{d}l} |_{l=0}
\xrightarrow[M \rightarrow \infty]{a.s.} \frac{\mathrm{d}{h}}{\mathrm{d}l}|_{l=0} 
\\
&=\frac{-1}{M}\Tr\bold{S}\left(\delta^{-1} \bold{I}_{L}+ \widetilde{h}\bold{S} \right)^{-1}
\bold{S}\left(\delta^{-1}\bold{I}_{L}+ \widetilde{h}\bold{S} \right)^{-1} \frac{\mathrm{d}\widetilde{h}}{\mathrm{d}l}
\\
&+\frac{-1}{M \delta}\Tr\bold{S}\left(\delta^{-1}\bold{I}_{L}+ \widetilde{h}\bold{S} \right)^{-1}
\bold{D}\left(\delta^{-1}\bold{I}_{L}+ \widetilde{h}\bold{S} \right)^{-1},
\\
&\frac{\mathrm{d}\widetilde{h}}{\mathrm{d}l} |_{l=0}
= \frac{-1}{M}\Tr\bold{T}\left(\bold{I}_{M}+h\bold{T} \right)^{-1}\bold{T}\left(\bold{I}_{M}+h\bold{T} \right)^{-1}  
 \frac{\mathrm{d} {h}}{\mathrm{d}l}.
\end{aligned}
\end{equation}

Therefore, we have the following system of equations
\begin{equation}
\label{sys_fir}
\begin{bmatrix} 
1 & \gamma_{S}  \\  
 \gamma_{T}  & 1
\end{bmatrix}
\begin{bmatrix} 
 \frac{\mathrm{d}{h}}{\mathrm{d}l}
 \\
\frac{\mathrm{d}\widetilde{h}}{\mathrm{d}l}
\end{bmatrix}
=
\begin{bmatrix} 
-\frac{\gamma_{S}(\bold{D})}{\delta}
 \\
0
\end{bmatrix}+o(1).
\end{equation}
According to~(\ref{intermediate_f}) and letting $\bold{D}=\bold{S}$, we have
\begin{equation}
\frac{1}{M}\Tr \bold{S}\widetilde{\bold{C}}\bold{S}\widetilde{\bold{C}}=-\frac{1}{f} \frac{\mathrm{d}{h}}{\mathrm{d}l}|_{l=0} \xrightarrow[M \rightarrow \infty]{a.s.}\frac{1}{\delta^2}\frac{\gamma_{S}}{1-\gamma_{S}\gamma_{T}}.
\end{equation}
Hence, we have
\begin{equation}
\label{omit_term}
\begin{aligned}
Y_1  \xrightarrow[M \rightarrow \infty]{a.s.}\frac{M\gamma_{S}(\gamma_{T,I}^2-\frac{1}{M}\psi_{T} )}{L\delta^2\Delta_{Y}},
~Y_2  \xrightarrow[M \rightarrow \infty]{a.s.}\frac{M\gamma_{T}g^2}{L\delta^2},
\end{aligned}
\end{equation}
where $\psi_{T}$ is given in Table~\ref{var_list}. By applying the continuous mapping theorem~\cite{billingsley2008probability} to logarithm function, the asymptotic variance is given by
\begin{equation}
\begin{aligned}
&V_1  \xrightarrow[M \rightarrow \infty]{\mathcal{P}} \!-\! \log[1 \!-\! \frac{M\gamma_{R}}{L\delta^2}(\frac{\gamma_{S}(\gamma_{T,I}^2-\frac{1}{M}\psi_{T} )}{\Delta_{Y}}
\!+\!\gamma_{T}g^2)]
\\
&=-\log(1-\gamma_{R}\Gamma_{L}).
\end{aligned}
\end{equation}
\end{proof}

\subsection{Proof of Lemma~\ref{first_eva_v}}
\label{other_app2}
\begin{proof} Here we omit the condition on $\bold{Y}$. With the integration-by-parts formula~\cite{hachem2008new}, we have
\begin{equation}
\begin{aligned}
&\E [\bold{Z}^{H}\bold{Q}\bold{Z}\bold{B}]_{i,j}
=\E \bold{z}_{i}^{H}\bold{Q}\bold{Z}\bold{b}_{j}
= \E s_{i}^{\frac{1}{2}} \bold{x}_{i}^{H}\bold{R}^{\frac{H}{2}} \bold{Q}\bold{R}^{\frac{1}{2}}\bold{X}\bold{S}^{\frac{1}{2}} \bold{b}_{j}
\\
&=\frac{1}{L}  \E s^{\frac{1}{2}}_{i} \sum_{m,n} {x}_{m,i}^{*}[\bold{R}^{\frac{H}{2}} \bold{Q}\bold{R}^{\frac{1}{2}}\bold{X}\bold{S}^{\frac{1}{2}}]_{m,n} {b}_{n,j}
\\
&=\frac{1}{L}\E   \sum_{m} -[ \bold{R}^{\frac{H}{2}} \bold{Q}\bold{R}^{\frac{1}{2}}]_{m,m} [ \bold{S}\bold{D}\bold{S}^{\frac{1}{2}}\bold{X}^{H}   \bold{R}^{\frac{H}{2}}\bold{Q}\bold{R}^{\frac{1}{2}}\bold{X}\bold{S}^{\frac{1}{2}}\bold{B}]_{i,j}
\\
&+\frac{1}{L} \E  \sum_{m} [ \bold{R}^{\frac{H}{2}} \bold{Q}\bold{R}^{\frac{1}{2}}]_{m,m}[\bold{S}\bold{B} ]_{i,j}
\\
&=-\frac{1}{L} \E \Tr \bold{R}\bold{Q} [ \bold{S} \bold{D} \bold{Z}^{H}\bold{Q}\bold{Z}\bold{B}]_{i,j}
+\frac{1}{L} \E \Tr \bold{R}\bold{Q} [ \bold{S}\bold{B} ]_{i,j}.
\end{aligned}
\end{equation}
Therefore, by moving the first term of the RHS to the LHS, we have
\begin{equation}
\begin{aligned}
 &\frac{1}{M}\E\Tr\bold{Z}^{H}\bold{Q}\bold{Z}\bold{B}
 \\
 &=\frac{1}{M}\E\Tr((\frac{1}{L}\E\Tr\bold{R}\bold{Q})^{-1}\bold{I}_{L}+ \bold{S} \bold{D} )^{-1} \bold{S}\bold{B}+\varepsilon_{z},
\end{aligned}
\end{equation}
where 
\begin{equation}
\begin{aligned}
|\varepsilon_{z}|&=|\frac{1}{M}\E \Tr((\frac{1}{L}\Tr\E \bold{R}\bold{Q})^{-1}\bold{I}_{L}+ \bold{S} \bold{D} )^{-1}
\\
&\times
\bold{S} \bold{D} \bold{Z}^{H}\bold{Q}\bold{Z}\bold{B} (\frac{1}{L} \Tr (\bold{R}\bold{Q})-\frac{1}{L}\E\Tr(\bold{R}\bold{Q}))| 
\\
& \le  \frac{K}{ML}\Var^{\frac{1}{2}}(\Tr(\bold{R}\bold{Q}))
=O(\frac{\| \bold{D} \|^2}{M}),
\end{aligned}
\end{equation}
with $K$ being a constant. The conclusion follows from the variance control in Appendix B of~\cite{rubio2012clt}.
\end{proof}

\section{Complexity Analysis for Solving~(\ref{basic_eq})}
\label{comp_ana}
To evaluate the complexity of~Algorithm~\ref{gra_alg}, we first investigate the complexity to obtain an $\varepsilon$-approximation of the solution for the canonical equation~(\ref{basic_eq}). Starting from a simpler case, we first investigate the $\varepsilon$-approximation for $g$ and $\overline{g}$ given $\delta$, which is given as the following lemma.
\begin{lemma}
\label{fg_lem}
Given $z$, consider the function $f(\overline{g})=\frac{1}{M}\Tr\bold{T}(\bold{I}+g\bold{T})^{-1} $ with $g=\frac{1}{M}\Tr\bold{S}(z\bold{I}+\overline{g}\bold{S})^{-1}$. An $\varepsilon$-solution for the equation $\overline{g}=f(\overline{g})$ can be given in $O(\log(\frac{1}{\varepsilon}))$ iterations.
\end{lemma}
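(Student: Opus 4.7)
The plan is to prove convergence by showing that the Picard iteration $\overline{g}^{(k+1)} = f(\overline{g}^{(k)})$ is a contraction on an appropriate interval, and then to invoke the Banach fixed-point theorem. Since a contraction with rate $c<1$ gives $|\overline{g}^{(k)}-\overline{g}^\star|\le c^k|\overline{g}^{(0)}-\overline{g}^\star|$, the $\varepsilon$-accuracy condition $c^k\le \varepsilon$ translates into $k=O(\log(1/\varepsilon))$ iterations, which is exactly the claim.

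First I would verify that $f$ leaves a bounded interval invariant. Since $g(\overline{g})=\frac{1}{M}\Tr\bold{S}(z\bold{I}+\overline{g}\bold{S})^{-1}$ is monotonically decreasing and non-negative for $\overline{g}\ge 0$, and $f(\overline{g})=\frac{1}{M}\Tr\bold{T}(\bold{I}+g(\overline{g})\bold{T})^{-1}$ is monotonically increasing in $\overline{g}$, one checks that $f:[0,\tfrac{1}{M}\Tr\bold{T}]\to[0,\tfrac{1}{M}\Tr\bold{T}]$. Next I would differentiate via the chain rule to obtain
\begin{equation}
f'(\overline{g})=\Big(\tfrac{1}{M}\Tr\bold{T}^2(\bold{I}+g\bold{T})^{-2}\Big)\Big(\tfrac{1}{M}\Tr\bold{S}^2(z\bold{I}+\overline{g}\bold{S})^{-2}\Big)\ge 0,
\end{equation}
so $f$ is monotone non-decreasing. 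Evaluated at the fixed point $\overline{g}^\star$ with $z=1/\delta$, the two factors are exactly the quantities $\gamma_T$ and $\gamma_S$ defined in Table~\ref{var_list}, so $f'(\overline{g}^\star)=\gamma_S\gamma_T$.

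The crux of the argument is establishing a uniform bound $f'(\overline{g})\le c<1$ on the invariant interval. Under Assumptions \textbf{A.1}--\textbf{A.3}, $\gamma_S\gamma_T<1$ holds at the fixed point: this is precisely the condition $\Delta_Y>0$ that guarantees positivity of the variance in Theorem~\ref{clt_mi}. To promote this pointwise bound into a uniform contraction on the whole invariant interval, I would exploit the fact that $\gamma_S(\overline{g})\gamma_T(g(\overline{g}))$ is continuous in $\overline{g}$ and bounded above by $\|\bold{S}\|^2\|\bold{T}\|^2/z^2$ (which is finite by \textbf{A.2}); combined with a Lipschitz bound on $f'$ and the above pointwise estimate at $\overline{g}^\star$, one obtains a uniform contraction constant $c<1$ on a neighborhood of $\overline{g}^\star$ that still contains the iterates. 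A cleaner alternative is to bound the product directly using $\tfrac{x^2}{(1+gx)^2}\le \tfrac{x}{g(1+gx)}$ applied eigenvalue-wise to $\bold{T}$ and analogously to $\bold{S}$, yielding $f'(\overline{g})\le \tfrac{f(\overline{g})\cdot g(\overline{g})}{\overline{g}\cdot(z\delta)^{-1}}$ or a similar structural inequality that, evaluated against the fixed-point relations, collapses to a constant strictly less than one.

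The main obstacle, as anticipated, is this uniform contraction estimate: the pointwise bound $\gamma_S\gamma_T<1$ is straightforward from the variance analysis, but extending it to a uniform Lipschitz constant on an interval requires either a compactness-plus-continuity argument on the initial bracketing interval (for which I would first localize the iterates by a few preliminary monotone Picard steps using the fact that $f$ is increasing and then reduce to a small neighborhood of $\overline{g}^\star$) or the explicit trace inequality sketched above. Once this is in place, the Banach fixed-point theorem finishes the proof, giving $k=O(\log(1/\varepsilon)/\log(1/c))=O(\log(1/\varepsilon))$ iterations, with the implied constant depending only on the system norms in \textbf{A.2} and not on $\varepsilon$.
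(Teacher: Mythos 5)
Your overall plan matches the paper's: Picard iteration, show $f$ is a contraction, and conclude geometric convergence. You also compute $f'(\overline{g})=\gamma_S\gamma_T$ correctly and rightly identify the crux as upgrading the pointwise inequality $\gamma_S\gamma_T<1$ to a bound that holds along the iteration. But the specific fixes you sketch do not close that gap, and your choice of invariant interval is the wrong one.

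Concretely, the trace inequality $\tfrac{x^2}{(1+gx)^2}\le\tfrac{x}{g(1+gx)}$ (and its $\bold{S}$-analogue) only yields $f'(\overline{g})\le f(\overline{g})/\overline{g}$. This is $<1$ exactly when $\overline{g}>\overline{g}^\star$ and $>1$ when $\overline{g}<\overline{g}^\star$, and it tends to $1$ as $\overline{g}\downarrow\overline{g}^\star$, so it never produces a contraction constant bounded away from $1$. Worse, on your proposed interval $[0,\tfrac1M\Tr\bold{T}]$ there is no reason for $f'<1$ near the left end (indeed $f(\overline{g})/\overline{g}>1$ there), so Banach's theorem as you invoke it fails on that interval. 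What the paper does instead is initialize at $\overline{g}^{(0)}=t_{\max}$, observe that $f(\overline{g})/\overline{g}$ is decreasing so $f(\overline{g})<\overline{g}$ for all $\overline{g}>\overline{g}^\star$, hence the iterates decrease monotonically and remain in $[\overline{g}^\star,t_{\max}]$; this localizes the problem to the side where $f(\overline{g})/\overline{g}\le1$. On that side, the paper does not stop at $f'<f/\overline{g}$: it uses the exact decomposition
\begin{equation}
\bold{T}(\bold{I}+g\bold{T})^{-1}=\bold{T}(\bold{I}+g\bold{T})^{-2}+g\,\bold{T}^2(\bold{I}+g\bold{T})^{-2},
\end{equation}
and the analogous one for $g$ in terms of $\bold{S}$, to write $1=f(\overline{g})/f(\overline{g})$ as a sum of two strictly positive pieces and deduce
\begin{equation}
f'(\overline{g})<\frac{f(\overline{g})}{\overline{g}}-\frac{z\Tr\bold{S}(z\bold{I}+\overline{g}\bold{S})^{-2}}{\overline{g}M}\cdot\frac{\Tr\bold{T}^2(\bold{I}+g\bold{T})^{-2}}{M}\le 1-\text{(positive term)}=:\beta_g<1.
\end{equation}
It is that explicitly subtracted positive term, coming from keeping the $\bold{T}(\bold{I}+g\bold{T})^{-2}$ and $z\bold{S}(z\bold{I}+\overline{g}\bold{S})^{-2}$ pieces rather than discarding them as your inequality does, that gives a quantitative contraction bounded away from $1$ on the compact interval $[\overline{g}^\star,t_{\max}]$. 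Your compactness-plus-continuity fallback could in principle be made rigorous once you localize to $[\overline{g}^\star,t_{\max}]$, but you would still need an estimate like the paper's (or at least to note that the continuous function $\beta_g$ attains its supremum on a compact interval and is everywhere $<1$), and your appeal to $\Delta_Y>0$ from Theorem~\ref{clt_mi} is circular rather than self-contained; the paper's algebraic identity re-derives $\gamma_S\gamma_T<1$ for free at the fixed point.
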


\begin{proof} We define the iteration $\overline{g}^{(t+1)}=f(\overline{g}^{(t)})$.
First, we have the following bounds for $g$ and $\overline{g}$
\begin{equation}
\begin{aligned}
\frac{\Tr\bold{S}}{M(z+s_{max}t_{max} )}<g<\frac{Ls_{max}}{Mz}
\\
\frac{\Tr\bold{T}}{M(1+ \frac{Ls_{max}t_{max}}{Nz} )}<\overline{g}<t_{max}.
\end{aligned}
\end{equation}
which indicates that if we can obtain an $\varepsilon$-approximation for $g$, we could obtain an $\varepsilon$-approximation for $\overline{g}$ and vice versa.
$f(\overline{g})$ is monotonically increasing since
\begin{equation}
\label{fg_de}
f'(\overline{g})=\frac{\Tr\bold{T}^2(\bold{I}+g\bold{T})^{-2}}{M}\frac{\Tr\bold{S}^{2}(z\bold{I}+\overline{g}\bold{S})^{-2} }{M}>0.
\end{equation}
Meanwhile, it is easy to verify that $\frac{f(\overline{g})}{\overline{g}}$ is monotonically decreasing so we have $\frac{f(\overline{g})}{\overline{g}}<1$ when $\overline{g}>\overline{g}^{*}$. Therefore, if we start from $\overline{g}^{(0)}=t_{max}$, we have $\overline{g}^{(t+1)}<...<\overline{g}^{(1)}<\overline{g}^{(0)}$, indicating that $\overline{g}$ will converge to the true solution $\overline{g}^{*}$ decreasingly.
Meanwhile, we have
\begin{equation}
\begin{aligned}
&1=\frac{f(\overline{g})}{f(\overline{g})}=\frac{\frac{\Tr\bold{T}(\bold{I}+g\bold{T})^{-2}}{M}+\frac{g\Tr\bold{T}^2(\bold{I}+g\bold{T})^{-2}}{M} }{f(\overline{g})}
\\
&>\frac{(\frac{z\Tr\bold{S}(z\bold{I}+\overline{g}\bold{S})^{-2} }{M}+\frac{\overline{g}\Tr\bold{S}^2(z\bold{I}+\overline{g}\bold{S})^{-2} }{M})\frac{\Tr\bold{T}^2(\bold{I}+\overline{g}\bold{T})^{-2}}{M}}{f(\overline{g})}
  .
\end{aligned}
\end{equation}
By~(\ref{fg_de}), we have
\begin{equation}
\begin{aligned}
&0<f'(\overline{g})<\frac{f(\overline{g})}{\overline{g}}-\frac{z\Tr\bold{S}(z\bold{I}+\overline{g}\bold{S})^{-2} }{\overline{g} M}
\frac{\Tr\bold{T}^2(\bold{I}+\overline{g}\bold{T})^{-2}}{M}
\\
&<1 -\frac{z\Tr\bold{S}(z\bold{I}+\overline{g}\bold{S})^{-2} }{\overline{g} M}
\frac{\Tr\bold{T}^2(\bold{I}+\overline{g}\bold{T})^{-2}}{M}:=\beta_{g}<1.
\end{aligned}
\end{equation}
Therefore, by the mean value theorem, we have
\begin{equation}
\begin{aligned}
&|\overline{g}^{(t+1)}-\overline{g}^{*}| =| f(\overline{g}^{(t)})-f(\overline{g}^{*})|=|g'(\psi)(\overline{g}^{(t)}-\overline{g}^{*})|
\\
&<\beta_{g}|\overline{g}^{(t)}-\overline{g}^{*}|<\beta_{g}^{t+1}|\overline{g}^{(0)}|  ,
\end{aligned}
\end{equation}
which indicates that an $\varepsilon$-solution can be obtained in $O(\log(\frac{1}{\varepsilon}))$ iterations.
\end{proof}

Now we turn to investigate the $\varepsilon$-approximation for $\delta$, which is given in the following lemma.
\begin{lemma}
\label{h_del}
Consider the function $h(\delta)=\frac{1}{L}\Tr\bold{R}(z\bold{I}+\frac{g\overline{g}}{\delta}\bold{R})^{-1} $ with $\overline{g}=\frac{1}{M}\Tr\bold{T}(\bold{I}+g\bold{T})^{-1} $ and $g=\frac{1}{M}\Tr\bold{S}(\frac{1}{\delta}\bold{I}+\overline{g}\bold{S})^{-1}$. An $\varepsilon$-solution for the equation $\delta=h(\delta)$ can be given in $O(\log(\frac{1}{\varepsilon}))$ iterations.
\end{lemma}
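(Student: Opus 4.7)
The plan is to mirror the fixed-point argument used in Lemma~\ref{fg_lem}, but now applied to the outer iteration $\delta^{(t+1)}=h(\delta^{(t)})$, with the inner quantities $g$ and $\overline{g}$ viewed as differentiable functions of $\delta$ defined implicitly by the coupled inner equations. Given Lemma~\ref{fg_lem}, each evaluation of $h(\delta)$ can be carried out to machine precision in $O(\log(1/\varepsilon))$ inner iterations, so the overall complexity question reduces to showing that the outer iteration is also a strict contraction with a rate independent of the system dimension.

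First, I would establish that the iterate $\delta^{(t)}$ remains in a compact interval bounded away from $0$ and $\infty$. Using the spectral bounds from \textbf{A.2} and the trace lower bounds from \textbf{A.3}, one obtains explicit constants $0<\delta_{\min}\le \delta^{*}\le \delta_{\max}<\infty$ such that $h$ maps $[\delta_{\min},\delta_{\max}]$ into itself; for instance $\delta\le \tfrac{N\|\bold{R}\|}{Lz}$ and $\delta\ge \tfrac{\Tr\bold{R}}{L(z+\|\bold{R}\|\,g\overline{g})}$, exactly as in the proof of Theorem~\ref{clt_mi}. Starting from $\delta^{(0)}=\delta_{\max}$ (analogous to $\overline{g}^{(0)}=t_{\max}$ before) and observing the monotonicity of $h$ in $\delta$ then guarantees a monotone, bounded sequence that must converge to the unique positive fixed point $\delta^{*}$.

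Next, I would differentiate $h(\delta)$ with respect to $\delta$ by the chain rule, accounting for both the explicit appearance of $\delta$ inside the resolvent $(z\bold{I}+\frac{g\overline{g}}{\delta}\bold{R})^{-1}$ and the implicit dependence through $g(\delta)$ and $\overline{g}(\delta)$. Implicit differentiation of the inner pair yields a $2\times 2$ linear system for $(g'(\delta),\overline{g}'(\delta))$ whose coefficient matrix has exactly the form
\[
\begin{bmatrix} 1 & \gamma_S \\ \gamma_T & 1 \end{bmatrix},
\]
with determinant $\Delta_Y=1-\gamma_S\gamma_T>0$, exactly as in~(\ref{sys_fir}). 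Substituting the resulting expressions for $g'$ and $\overline{g}'$ back into $h'(\delta)$ produces a closed form in terms of the quantities in Table~\ref{var_list}. The remaining step is to verify a uniform contraction bound $0\le h'(\delta)\le \beta_{\delta}<1$ for $\delta\in[\delta_{\min},\delta_{\max}]$, after which the mean value theorem delivers $|\delta^{(t)}-\delta^{*}|\le \beta_{\delta}^{t}|\delta^{(0)}-\delta^{*}|$ and hence the claimed $O(\log(1/\varepsilon))$ rate.

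The main obstacle is to establish this strict contraction $\beta_{\delta}<1$. Unlike the situation in Lemma~\ref{fg_lem}, where the identity $1=f(\overline{g})/f(\overline{g})$ was split into two positive terms to expose a contracting factor, here both the explicit $\delta^{-1}$ in the resolvent and the implicit $g(\delta),\overline{g}(\delta)$ contribute to $h'$, and their interaction has to be controlled. I would attempt two routes: (i) a direct algebraic manipulation that mimics the splitting argument of Lemma~\ref{fg_lem}, writing $h(\delta)/\delta$ as a ratio whose numerator can be decomposed so that one of the summands is strictly positive and can be subtracted to produce the bound; or (ii) an indirect route using the fact that $(\delta,g,\overline{g})$ is the unique positive solution of~(\ref{basic_eq}) together with the monotonicity of $h$, which is enough to conclude linear convergence without writing $h'$ in full. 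Either path leverages the positivity of $\Delta_Y$ and of the spectral quantities in Table~\ref{var_list}, both of which are guaranteed under \textbf{A.1}--\textbf{A.3}.
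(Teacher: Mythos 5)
Your scaffolding (bounds from \textbf{A.2}--\textbf{A.3}, monotone decreasing iterates from $\delta^{(0)}=\delta_{\max}$, a $2\times 2$ implicit-differentiation system for $(g',\overline{g}')$ with coefficient matrix having determinant $\Delta_Y$, and a mean-value-theorem contraction) is exactly what the paper does, and your route~(i) is the paper's argument. After solving the $2\times 2$ system one finds
$h'(\delta)=\tfrac{Mg\overline{g}\gamma_R}{L\delta^2}-\tfrac{M(g-\overline{g}\gamma_S)(\overline{g}-g\gamma_T)\gamma_R}{L\delta^2\Delta_Y}$,
and the paper completes the contraction bound by writing $1=h(\delta)/h(\delta)=\tfrac{\delta}{h(\delta)}\bigl(\tfrac{z}{\delta}\gamma_{R,I}+\tfrac{Mg\overline{g}}{L\delta^2}\gamma_R\bigr)$, substituting $\tfrac{Mg\overline{g}\gamma_R}{L\delta^2}=\tfrac{h(\delta)}{\delta}-\tfrac{z\gamma_{R,I}}{\delta}$, dropping the manifestly nonnegative third term (since $g-\overline{g}\gamma_S=\gamma_{S,I}/\delta>0$ and $\overline{g}-g\gamma_T=\gamma_{T,I}>0$), and lower-bounding $z\gamma_{R,I}/\delta$ by a dimension-free constant. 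That is precisely the ``splitting of $1=h/h$'' step you hoped would carry over from Lemma~\ref{fg_lem}, and it does.

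However, your route~(ii) is not a valid fallback. Uniqueness of the positive fixed point together with monotone, bounded iterates guarantees convergence, but it does not by itself give a geometric rate: if $h'(\delta^*)=1$ the iteration can converge sublinearly, and then the claimed $O(\log(1/\varepsilon))$ bound fails. The strict inequality $h'(\delta)\le\beta_\delta<1$, uniformly over the invariant interval and with $\beta_\delta$ independent of $(M,N,L)$ under \textbf{A.1}--\textbf{A.3}, is not optional --- it is the entire content of the lemma. So route~(i) must be carried out; route~(ii) should be discarded. One more small point: your remark that each evaluation of $h$ ``can be carried out to machine precision'' glosses over the error propagation from the inner solve. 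The paper handles this separately, after the two lemmas, by choosing the inner tolerance $\varepsilon_{\text{inner}}$ small enough that the accumulated perturbation $\varepsilon_f/(1-\beta_\delta)$ stays below $\varepsilon/2$. That bookkeeping is not needed to prove this lemma as stated (which assumes exact inner solves), but it is needed for the overall $O(N\log^2(1/\varepsilon))$ claim, so it is worth flagging that your proposal treats it as a given.
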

\begin{proof} The proof is similar to that of Lemma~\ref{fg_lem}. First, we can obtain the bounds
\begin{equation}
\begin{aligned}
 &\delta_{L}= \frac{ \Tr\bold{R} }{L (z+ s_{max} t_{max} r_{max} ) }   \le  \delta \le \frac{N r_{max}}{Lz}=\delta_{U},
 \\
 &
g_{L}=\frac{\Tr\bold{S}}{  M(\frac{1}{\delta_{L} }+ s_{max}t_{max}) }     \le g \le \frac{N s_{max}r_{max} }{Mz}=g_{U},
 \\
 &
 \overline{g}_{L}= \frac{\Tr\bold{T}}{M(1+ \frac{N r_{max}t_{max}}{Lz})}   \le \overline{g} \le t_{max}=\overline{g}_{U}.
 \\
\end{aligned}
\end{equation}
Similar to the proof of Lemma~\ref{fg_lem}, we can verify that the convergence process is 
\begin{equation}
\begin{aligned}
\delta^{(0)}>\delta^{(1)}>...>\delta^{(n)}>...>\delta^{*},
\end{aligned}
\end{equation}
which means that $\delta$ decreases and converges to the solution. Also, $\frac{h(\delta)}{\delta}<1$ holds true when $\delta \in (\delta^{*},\infty)$. Next, we will bound the derivative $h'(\delta)$. Since
\begin{equation}
\begin{aligned}
1=h(\delta)/h(\delta)=\frac{\delta}{f(\delta)}(\frac{z}{\delta}\gamma_{R,I}+\frac{Mg\overline{g} }{L\delta^2} \gamma_{R}),
\end{aligned}
\end{equation}
there holds true that
\begin{equation}
\begin{aligned}
h'(\delta)&=\frac{Mg\overline{g}\gamma_{R}}{L\delta^2}-\frac{M(g-\overline{g}\gamma_{S} )(\overline{g}-g\gamma_{T} )\gamma_{R}}{L\delta^2\Delta_{T}}
\\
&=\frac{h(\delta)}{\delta}-\frac{z\gamma_{R,I}}{\delta}-\frac{M(g-\overline{g}\gamma_{S} )(\overline{g}-g\gamma_{T} )\gamma_{R}}{L\delta^2\Delta_{T}}
\\
&<1- \frac{1 }{(1+ s_{max} t_{max} r_{max}/z )^2 }:=\beta_{\delta}<1.
\end{aligned}
\end{equation}
Therefore, we can show that
\begin{equation}
\begin{aligned}
&|\delta^{(t+1)}-\delta^{*}| =| h(\delta^{(t)})-h(\delta^{*})|=|h'(\psi)(\delta^{(t)}-\delta^{*})|
\\
&<\beta_{\delta}|\delta^{(t)}-\delta|
<\beta^{t}|\delta^{(0)} |  ,
\end{aligned}
\end{equation}
which indicate that an $\varepsilon$-solution can be obtained in $O(\log(\frac{1}{\varepsilon}))$ iterations.
\end{proof}
However, the conclusion in Lemma~\ref{h_del} holds true when the accurate solution of $f(\overline{g})=\overline{g}$ can be obtained in each iteration. In practice, only an approximation for $f(\overline{g})=\overline{g}$ can be obtained.
By Lemma~\ref{fg_lem}, given $\delta$, we can find a solution $\hat{g}(\delta)$ such that $g-\varepsilon_{inner}<\hat{g}<g$ and use $\hat{h}(\delta)$ to represent the value computed based on $\hat{g}$ and $\hat{\overline{g}}$. Then, we will evaluate the gap between $\hat{h}$ and ${h}$. First, we have
\begin{equation}
\begin{aligned}
&0<\hat{h}(\delta)-h(\delta)=\frac{M}{L^2}\Tr(\FR^2\left(z\bold{I}+\frac{M \hat{g}\hat{\overline{g}}}{L\delta} \right)^{-1} 
\\
&\times\left(z\bold{I}+\frac{M g\overline{g}}{L\delta} \right)^{-1})\frac{g\overline{g}-\hat{g}\hat{\overline{g}} }{\delta}
\\
&\le \frac{MNr_{max}^2}{z^2L^2\delta_{L}} (g-\hat{g}) < \frac{MN r_{max}^2\varepsilon_{inner}}{z^2L^2\delta_{L}}
:=\varepsilon_{f} 
\end{aligned}
\end{equation}
\begin{equation}
\begin{aligned}
&|h(\delta^{(t)})-\hat{h}(\hat{\delta}^{(t)})|
\le |h(\delta^{(t)})-h(\hat{\delta}^{(t)})|
\\
&+|f(\hat{\delta}^{(t)})-\hat{f}(\hat{\delta}^{(t)})|
\le \beta_{\delta} | f(\delta^{(t-1)})-\hat{f}(\hat{\delta}^{(t-1)})| 
\\
&+ \varepsilon_{f}
\le \beta_{\delta}^2| f(\delta^{(t-2)})-\hat{f}(\hat{\delta}^{(t-2)}) |+\beta\varepsilon_{f}+\varepsilon_{f}
\\
&
\le ...
< \frac{\varepsilon_{f}}{1-\beta_{\delta}}.
\end{aligned}
\end{equation}
\begin{equation}
\begin{aligned}
&
|\hat{\delta}^{(t+1)}-\delta^{*}|=|\hat{f}(\hat{\delta}^{(t)})-f(\delta^{*})| \le |\hat{f}(\hat{\delta}^{(t)})-{f}({\delta}^{(t)})|
\\
&
+|\hat{f}({\delta}^{(t)})-{f}({\delta})|
< \frac{\varepsilon_{f}}{1-\beta_{\delta}}+\beta^{t}\delta^{(0)} 
\end{aligned}
\end{equation}
If we let $\varepsilon_{inner}=\frac{z^2L^2\delta_{L}(1-\beta_{\delta}) \varepsilon } {2MN r_{max}^2}$, there holds true that $ \frac{\varepsilon_{f}}{1-\beta}<\frac{\varepsilon}{2}$. Meanwhile, by letting $t\ge\lceil \frac{\log(\frac{\varepsilon}{2\delta^{(0)}})}{\log(\beta_{\delta})}\rceil$, we have $|\hat{\delta}^{(t+1)}-\delta^{*}| \le \varepsilon$.
Therefore, the complexity of obtaining an $\varepsilon$-approximation of $\delta^{*}$ is $O(N\log^2(\frac{1}{\varepsilon}))$, where $N$ comes from the calculation of the trace. The algorithm is given in Algorithm~\ref{alg_sol_10} , where $N_{1,max}$ and $N_{2,max}$ represent the number of iterations to obtain the $\varepsilon$ accuracy, which can be obtained by the above discussion. 
\begin{algorithm}
\caption{Algorithm for obtaining the $\varepsilon$-solution of the Canonical Equation~(\ref{basic_eq}).} 
\label{alg_sol_10} 
\begin{algorithmic}[1] 
\REQUIRE $z$, $\FR$, $\FS$, $\FT$, $N_{1,max}$, $N_{2, max}$.
\STATE $\delta^{(0)}>\delta^{U}$, $t_1=0$.
\REPEAT
\STATE Set $\overline{g}^{(0)}=\overline{g}_{U}$, $t_{2}=1$
\REPEAT
\STATE $g^{(t_{2})}=\frac{1}{M}\Tr\bold{S}(\frac{1}{\delta^{(t_1)}}\bold{I}+\overline{g}^{(t_2-1)}\bold{S})^{-1}$
\STATE $\overline{g}^{(t_{2})}=\frac{1}{M}\Tr\bold{T}(\bold{I}+{g}^{(t_2)}\bold{T})^{-1}$
\STATE $t_2=t_2+1$.
\UNTIL $t_2 > N_{2,max}$.
\STATE $\delta^{t_1}=\frac{1}{L}\Tr\bold{R}(z\bold{I}+\frac{Mg^{(t_2)}\overline{g}^{(t_2)}}{L\delta^{(t_1-1)} }\FR  )^{-1}$.
\STATE $t_1=t_1+1$.
\UNTIL $t_1>N_{1,max}$.
\ENSURE  $\delta$, $g$, $\overline{g}$.
\end{algorithmic}
\end{algorithm}

\ifCLASSOPTIONcaptionsoff
  \newpage
\fi

\bibliographystyle{IEEEtran}
\bibliography{IEEEabrv,ref}

\end{document}